\def\wich#1{\mathinner{\langle{#1}\rangle}}
\newcommand{\beq}{\begin{equation}}
\newcommand{\eeq}{\end{equation}}
\newcommand{\beqnn}{\begin{equation*}}
\newcommand{\eeqnn}{\end{equation*}}
\newcommand{\bea}{\begin{eqnarray}}
\newcommand{\eea}{\end{eqnarray}}
\newcommand{\beann}{\begin{eqnarray*}}
\newcommand{\eeann}{\end{eqnarray*}}
\newcommand{\bes} {\begin{subequations}}
\newcommand{\ees} {\end{subequations}}
\newcommand{\ket}[1]{ | #1\rangle}
\newcommand{\bra}[1]{\langle #1 | }
\newcommand{\mO}{\mathcal{O}}
\newcommand{\ident}{\openone}
\newcommand{\Tr}{\mathrm{Tr}}
\newcommand{\eps}{\varepsilon}
\newcommand{\abs}[1]{\ensuremath{\left| #1 \right|}}
\def\Ket#1{\left|#1\right>}
\newcommand{\ignore}[1]{}
\newtheorem{theorem}{Theorem}
\newtheorem{lemma}{Lemma}
\begin{document}
\title{Tunneling and speedup in quantum optimization for permutation-symmetric problems}
\author{Siddharth Muthukrishnan}
\affiliation{Department of Physics and Astronomy, University of Southern California, Los Angeles, California 90089, USA}
\affiliation{Center for Quantum Information Science \& Technology, University of Southern California, Los Angeles, California 90089, USA}

\author{Tameem Albash}
\affiliation{Department of Physics and Astronomy, University of Southern California, Los Angeles, California 90089, USA}
\affiliation{Center for Quantum Information Science \& Technology, University of Southern California, Los Angeles, California 90089, USA}
\affiliation{Information Sciences Institute, University of Southern California, Marina del Rey, CA 90292}

\author{Daniel A. Lidar}
\affiliation{Department of Physics and Astronomy, University of Southern California, Los Angeles, California 90089, USA}
\affiliation{Center for Quantum Information Science \& Technology, University of Southern California, Los Angeles, California 90089, USA}
\affiliation{Department of Electrical Engineering, University of Southern California, Los Angeles, California 90089, USA}
\affiliation{Department of Chemistry, University of Southern California, Los Angeles, California 90089, USA}

\begin{abstract}
Tunneling is often claimed to be the key mechanism underlying possible speedups in quantum optimization via quantum annealing (QA), especially for problems featuring a cost function with tall and thin barriers. We present and analyze several counterexamples from the class of perturbed Hamming-weight optimization problems with qubit permutation symmetry.  We first show that, for these problems, the adiabatic dynamics that make tunneling possible should be understood not in terms of the cost function but rather the semi-classical potential arising from the spin-coherent path integral formalism. We then provide an example where the shape of the barrier in the final cost function is short and wide, which might suggest no quantum advantage for QA, yet where tunneling renders QA superior to simulated annealing in the adiabatic regime. However, the adiabatic dynamics turn out not be optimal. Instead, an evolution involving a sequence of diabatic transitions through many avoided level-crossings, involving no tunneling, is optimal and outperforms adiabatic QA. We show that this phenomenon of speedup by diabatic transitions is not unique to this example, and we provide an example where it provides an exponential speedup over adiabatic QA.  In yet another twist, we show that a classical algorithm, spin vector dynamics, is at least as efficient as diabatic QA. Finally, in a different example with a convex cost function, the diabatic transitions result in a speedup relative to both adiabatic QA with tunneling and classical spin vector dynamics.

\end{abstract} 
\maketitle
%
\section{Introduction}
%
The possibility of a quantum speedup for finding the solution of classical optimization problems is tantalizing, as a quantum advantage for this class of problems would provide a wealth of new applications for quantum computing. The goal of many optimization problems can be formulated as finding an $n$-bit string $x_{\mathrm{opt}}$ that minimizes a given cost function $f(x)$, which can be interpreted as the energy of a classical 
Ising spin system whose ground state is $x_{\mathrm{opt}}$.  Finding the ground state of such systems can be hard if, e.g., the system is strongly frustrated, resulting in a complex energy landscape that cannot be efficiently explored with any known algorithm due to the presence of many local minima \cite{Nishimori-book}. 
This can occur, e.g., in classical simulated annealing (SA) \cite{kirkpatrick_optimization_1983}, 
when the system's state is trapped in a local minimum.  

Thermal hopping and quantum tunneling provide two starkly different mechanisms for solving optimization problems, and finding optimization problems that favor the latter continues to be an open theoretical question \cite{EPJ-ST:2015,Heim:2014jf}. 
It is often stated that quantum annealing
(QA) \cite{PhysRevB.39.11828,finnila_quantum_1994,kadowaki_quantum_1998,farhi_quantum_2001,RevModPhys.80.1061}  uses tunneling 
instead of thermal excitations to escape from local
minima, which can be advantageous in systems
with tall but thin barriers that are
easier to tunnel through than to thermally climb
over \cite{Heim:2014jf,RevModPhys.80.1061,Suzuki-book}. 
It is with this potential tunneling-induced advantage over classical annealing that QA and the quantum adiabatic algorithm \cite{farhi_quantum_2000} were proposed. Our goal in this work is to address the question of the role played by tunneling in providing a quantum speedup, and to elucidate it by studying a number of illustrative examples. We shall demonstrate that the role of tunneling is significantly more subtle than what might be expected on the basis of the ``tall and thin barrier'' picture.

In order to make progress on this question, the potential with respect to which tunneling  occurs must be clearly specified. Tunneling is defined with respect to a semi-classical potential which delineates classically allowed and forbidden regions. In QA, one typically initializes the system in the known ground state of a simple Hamiltonian and evolves the system towards a Hamiltonian representing the final cost function. 
We shall argue that when one takes a natural semi-classical limit, the semi-classical potential does not become the final cost-function.  Instead one obtains a potential appearing in the action of the spin-coherent path-integral representation of the quantum dynamics. This potential, which here we call the spin-coherent potential, has been used profitably before~\cite{Farhi-spike-problem,FarhiAQC:02,Schaller:2007uq,Boixo:2014yu}. We provide comprehensive evidence that multi-spin tunneling can be understood with respect to this spin-coherent potential. 

We analyze the spin-coherent potential for several examples from a well-known class of problems known as perturbed Hamming weight oracle (PHWO) problems. These are problems for which instances can be generated where QA either has an advantage over classical random search algorithms with local updates, such as SA \cite{Farhi-spike-problem,Reichardt:2004}, or has no advantage \cite{vanDam:01,Reichardt:2004}. Moreover, because PHWO problems exhibit qubit permutation symmetry, their quantum evolutions are easily classically simulatable, and furthermore, their spin-coherent potential is one-dimensional.  Tunneling becomes clear and explicit for these problems when using the spin-coherent potential. 

We focus on a particular PHWO problem that has a plateau in the final cost function (henceforth,``the Fixed Plateau"). This problem offers a counter-example to two commonly held views: (1) QA has an advantage, due to tunneling, over SA only on problems where the barrier in the final cost function is tall and thin; (2) tunneling is necessary for a quantum speedup in QA.  We refute the first statement by showing that for the Fixed Plateau, which is a short and wide cost function, QA significantly outperforms SA by using tunneling. Indeed, we find numerically that adiabatic QA (AQA) needs a time of $\mathcal{O}(n^{0.5})$ to find the ground state, where $n$ is the number of spins or qubits. Moreover, using the spin-coherent potential, we observe the presence of tunneling during the quantum anneal. On the other hand, we prove that single-spin update SA takes a time of $\mathcal{O}(n^\mathrm{plateau\ width})$. Thus, we have essentially an arbitrary polynomial tunneling speedup of QA over SA on a cost-function that is not tall and thin. We remark that the result about SA's performance is also a rigorous proof of a result due to Reichardt~\cite{Reichardt:2004} that classical local search algorithms will fail on a certain class of PHWO problems and is of independent interest. 

We refute the second statement by showing that, for the Fixed Plateau, it is actually optimal to run QA diabatically (henceforth, DQA for diabatic quantum annealing).  The system leaves the ground state, only to return through a sequence of diabatic transitions associated with avoided-level crossings.  In this regime, the runtime for QA is $\mathcal{O}(1)$. Moreover, in this regime, we do not observe any of the standard signatures of tunneling. We show that this feature --- that the optimal evolution time $t_f$ for QA is far from being adiabatic --- is present in a few other PHWO problems and that this optimal evolution involves no multi-qubit tunneling.  

Given that the optimal evolution involves no tunneling, we are inspired to investigate a classical algorithm, spin vector dynamics (SVD), which can be interpreted as a semi-classical limit of the quantum evolution with a product-state approximation. We observe that SVD evolves in an almost identical manner to DQA, and is able to recover the speedup seen by DQA.  Thus, in these problems, we show that what may be suspected to be a highly quantum-coherent process---diabatic transitions---can be mimicked by a quantum-inspired classical algorithm.

The structure of this paper is as follows. In Sec.~\ref{sec:PHWOintro}, we list the PHWO problems we study. In Sec.~\ref{sec:SCtunnel}, we use these problems to present evidence that tunneling can be understood with respect to the spin-coherent potential. In Sec.~\ref{sec:FixedPlateau}, we focus on the Fixed Plateau PHWO problem, and exhaustively analyze the performance of various algorithms for this problem. In particular we numerically characterize AQA (Sec.~\ref{subsec:adDyn}), provide a rigorous proof of SA's performance (Sec.~\ref{subsec:SArandom}), and numerically analyze DQA (Sec.~\ref{subsec:diabaticQA}), SVD (Sec.~\ref{subsec:SVDplat}), and a quantum Monte Carlo algorithm (Sec.~\ref{subsec:SQAplat}). We conclude in Sec.~\ref{sec:discuss} by discussing the implications of our work and possible directions for future work. Additional background information and technical details can be found in the Appendix.

\section{Perturbed Hamming weight optimization problems and the examples studied}\label{sec:PHWOintro}
%

The cost function of a PHWO problem is defined as,

\beq 
\label{eq:pertham}
f(x) = \begin{cases} \abs{x} + p(\abs{x}) & l<\abs{x}<u,\\
\abs{x} & \text{elsewhere} \end{cases} \ , 
\eeq
where $\abs{x}$ denotes the Hamming weight of the bit string $x\in\{0,1\}^n$. For SA, this is the cost-function. For QA, this will be the final Hamiltonian. More precisely, we define QA as the closed-system quantum evolution governed by the time-dependent Hamiltonian,
\beq 
\label{eqt:QuantumH}
H(s) =\frac{1}{2}\left(1-s \right)  \sum_i \left(\ident - \sigma_i^x \right)+ s \sum_{x} f(x) \ket{x} \bra{x} \ ,
\eeq
where we have chosen the standard transverse field ``driver'' Hamiltonian $H(0)$ that assumes no prior knowledge of the form of $f(x)$, and a linear interpolating schedule, with $s\equiv t/t_f$ being the dimensionless time parameter. The initial state is the ground state of $H(0)$.

Below, we list several of PHWO examples that we study in greater detail. We refer to the case with $p=0$ as the \textbf{Plain Hamming Weight} problem.

\begin{enumerate}
\item \textbf{Fixed Plateau:} 
\beq 
\label{eqt:plateau}
f(x) = \begin{cases}
u -1, &  l<\abs{x}<u,\\
\abs{x}, & \text{otherwise}
\end{cases} \ .
\eeq
Clearly, this forms a plateau in Hamming weight space. We take $u,l=\mathcal{O}(1)$.  Since the location of the plateau does not change with $n$, we refer to it as ``fixed."  An instance of this cost function with $l=3$ and $u=8$ is illustrated in Fig.~\ref{fig:Plateau}. By numerical diagonalization we find that QA has a constant gap for this cost-function.

\begin{figure}[t] 
   \centering
   \includegraphics[width=0.65\columnwidth]{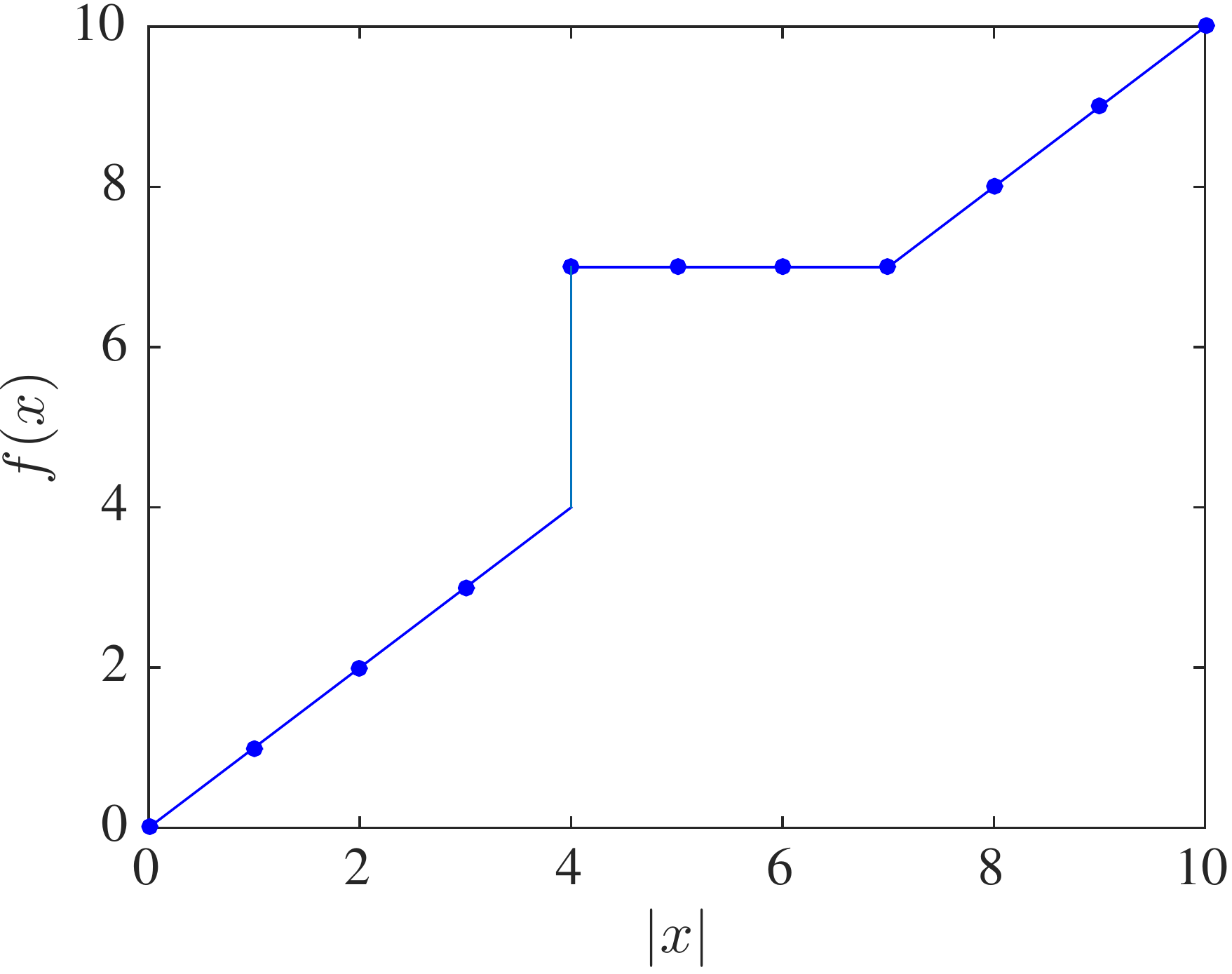} 
   \caption{$l=3,u=8$}
   \label{fig:Plateau}
\end{figure}

\item \textbf{Reichardt:} 
\beq 
\label{eqt:Reichardt}
f(x) = \begin{cases}
\abs{x}+h(n), &  l(n)<\abs{x}<u(n)\\
\abs{x} & \text{otherwise}
\end{cases} \ ,
\eeq
with $h\frac{u-l}{\sqrt{l}}= o(1)$. For this case, Reichardt~\cite{Reichardt:2004} proved a constant lower bound on the minimum spectral gap during the quantum anneal. In Appendix~\ref{app:review} we provide a pedagogical review of this proof and fill in some details not explicitly provided in the original proof.

\item \textbf{Moving Plateau:}
\beq 
\label{eqt:movingplateau}
f(x) = \begin{cases}
u -1, &  l(n)<\abs{x}<u(n)\\
\abs{x} & \text{otherwise}
\end{cases} \ ,
\eeq
with $l(n) = n/4$, and $u(n)=\mathcal{O}(1)$. This is termed ``moving" since the location of the plateau changes with $n$. Note that this is a special case from the Reichardt class.

\item \textbf{Grover:} 
\beq 
\label{eqt:Grover}
f(x) = \begin{cases}
n, &  \abs{x}\geq 1,\\
0, & \abs{x}=0.
\end{cases} \ 
\eeq
This is a minor modification of the standard Grover problem: the marked state is the all-zeros string with energy $0$, and the energy of all the other states is $n$. Scaling the energy by $n$ keeps the maximum energy of all the PHWO problems we consider comparable. 

\item \textbf{Spike:}
\beq 
\label{eqt:spike}
f(x) = \begin{cases}
n, &  \abs{x}=n/4,\\
\abs{x}, & \text{otherwise}
\end{cases} \ .
\eeq
This was studied by Farhi \textit{et al}. in~\cite{Farhi-spike-problem}, where it was argued that the quantum minimum gap  scales as $\mathcal{O}(n^{-1/2})$ and that SA will take exponential time to find the ground state. However, we show below (Fig.~\ref{fig:SpikeMovingVanDam_TTSopt}) that SVD is more efficient than QA for this problem.

\item \textbf{Precipice:}
\beq
\label{eqt:precipice}
f(x) = \begin{cases}
-1, &  \abs{x}=n,\\
\abs{x}, & \text{otherwise}
\end{cases} \ .
\eeq
This was studied by van Dam \textit{et al}. in~\cite{vanDam:01}, where it was proved that the quantum minimum gap for this problem scales as $\mathcal{O}(2^{-n/2})$.

\item \textbf{$\alpha$-Rectangle:} 
\beq
\label{eqt:alpharect}
f(x) = \begin{cases}
\abs{x} + n^\alpha, &   \frac{n}{4} - \frac{1}{2} c n^\alpha  <\abs{x}<  \frac{n}{4} + \frac{1}{2} c n^\alpha  ,\\
\abs{x}, & \text{otherwise}
\end{cases} \ .
\eeq
We call this an $\alpha$-Rectangle because the width of the perturbation ($c n^\alpha$) is $c$ times the height.  This was studied in~\cite{Brady:2015rc}, where evidence for the following conjecture for the scaling of the quantum minimum gap  $g_{\mathrm{min}}$ was presented,
\beq \label{eqt:VanDam}
g_\mathrm{min} = \begin{cases}
\mathrm{constant},  & \alpha<\frac{1}{4}, \\
1 / \mathrm{poly}(n),  & \frac{1}{4}<\alpha<\frac{1}{3}, \\
1 /  \exp(n), & \alpha>\frac{1}{3}.
\end{cases}
\eeq
Note that $\alpha<1/4$ is a member of the Reichardt class, and thus the constant lower-bound on the minimum gap is a theorem, and not a conjecture. We restrict ourselves to the case of $c=1$.

\end{enumerate}

We remark that all the problems listed above are representative members of a large family of problems: if the input bit-string to any of the above problems is transformed by an $\mathrm{XOR}$ mask, then all of our analysis below will hold. For QA, the $\mathrm{XOR}$ mask can be represented as a unitary transformation: $\bigotimes_{i=1}^n (\sigma_i^x)^{a_i}$, with $a \in \{0,1\}^n$ being the mask string. As this unitary commutes with the QA Hamiltonian at all times, none of our subsequent analysis is affected. Similar arguments go through for SA and all the other algorithms that we consider.

We note that PHWO problems are strictly toy problems since these problems are typically represented by highly non-local Hamiltonians (see Appendix~\ref{app:locality}) and thus are not physically implementable, in the same sense that the adiabatic Grover search problem is unphysical \cite{Roland:2002ul,RPL:10}. Nevertheless, these problems provide us with important insights into the mechanisms behind a quantum speed-up, or lack thereof.

%
\section{The semi-classical potential and tunneling }\label{sec:SCtunnel}
%
In order to study tunneling, we need a potential arising from a semi-classical limit, which defines classically allowed and forbidden regions. One approach to writing a semi-classical potential for quantum Hamiltonians is to use the spin-coherent path-integral formalism \cite{klauder1979path}. This semi-classical potential has been used profitably in various QA studies, e.g., Refs.~\cite{Farhi-spike-problem,Schaller:2007uq,FarhiAQC:02,Boixo:2014yu}, and we extend its applications here. For the quantum evolution, since the initial state [the ground state of $H(0)$] is symmetric under permutations of qubits and the unitary dynamics preserves this symmetry (it is a symmetry of $H(s)$ for all $s$), we can consistently restrict ourselves to spin-1/2 symmetric coherent states $\ket{\theta,\phi} $:
\beq
\ket{\theta,\phi} = \bigotimes_{i=1}^n \left[ \cos\left( \frac{\theta}{2} \right) \ket{0}_i + \sin \left( \frac{\theta}{2} \right) e^{i \varphi} \ket{1}_i \right] \ .
\eeq
The spin-coherent potential is then given by:
\beq \label{eq:vsc}
V_{{\mathrm{SC}}}(\theta,\phi,s) = \bra{\theta,\phi} H(s) \ket{\theta,\phi} \ .
\eeq
We show that for all the examples defined above except the Reichardt class (we address this below), this potential captures important features of the quantum Hamiltonian [Eq.~\eqref{eqt:QuantumH}] and reveals the presence of tunneling. Specifically:

\begin{enumerate}

\item {\emph{The spin-coherent potential displays a degenerate double-well almost exactly at the point of the minimum gap.}} In Fig.~\ref{fig:Veff} we plot, for the Fixed Plateau the potential near the minimum gap. The potential transitions from having a single minimum on the right to a single minimum on the left. In between, it becomes degenerate and displays a degenerate double well. Since the instantaneous ground state corresponds to the position of the global minimum, which exhibits a discontinuity, the degeneracy point is where tunneling should be most helpful. In Fig.~\ref{fig:MinGapsWellLocs}, we show that the location of the minimum gap of the quantum evolution is very close to the location of the degenerate double-well in the spin-coherent potential. 

\item {\emph{The ground state predicted by the spin-coherent potential is a good approximation to the quantum ground state except near the degeneracy point.}} As expected from a potential that arises in a semi-classical limit, the ground state predicted by the spin-coherent potential (i.e., the spin-coherent state corresponding to the instantaneous global minimum in $V_{\mathrm{SC}}$) agrees well with the quantum ground state, except where tunneling  is important. In particular, delocalization when the spin-coherent potential is a degenerate double-well (or is close to being one) should imply that approximating the ground state with a wavefunction localized in one of the wells fails. Indeed, we find this to be the case. We illustrate this for the Fixed Plateau in Fig.~\ref{fig:GSdistanceSC-QA}; similar results hold for the other examples we have considered.

\item {\emph{There is a sharp change in the ground state of the adiabatic quantum evolution at the degeneracy point.}} Tunneling should be accompanied by a sharp change in the properties of the ground state at the degeneracy point as the state state shifts from being localized in one well to the other. We quantify this change by calculating the expectation value of the Hamming weight operator, defined as $\mathrm{HW} = \frac{1}{2} \sum_{i=1}^n \left( \ident - \sigma^z_i \right)$. 
We expect a discontinuity in the spin-coherent ground state expectation value $\wich{\textrm{HW}}$, because the spin-coherent ground state changes discontinuously at the degeneracy point. We find that there is a nearly identical change in the quantum ground state expectation value $\wich{\textrm{HW}}$, for all of the examples listed above. This is illustrated explicitly for the Fixed Plateau in Fig.~\ref{fig:GSGibbs}. In Fig.~\ref{fig:HWdropsWellLocs}, we show that there is close and increasing agreement (as a function of $n$) between the position of the sudden drop in $\wich{\textrm{HW}}$ and the position of the degeneracy point, for all of the problems considered.

\item {\emph{The scaling of the barrier height in the spin-coherent potential is positively correlated with the scaling of the minimum gap of the quantum Hamiltonian.}} In Fig.~\ref{fig:heightsandgaps}, we see that as the barrier height increases, the inverse of the quantum minimum gap also increases. 

\end{enumerate}

Note that the Reichardt class is absent from the discussion above. The reason is that for these problems, the barrier in the spin-coherent potential is very small, which makes its numerical detection difficult. Fortunately, we can make some analytical claims about this class of problems. By adapting Reichardt's proof (reviewed in Appendix~\ref{app:review}) that these problems have a constant minimum gap, we are able to prove that the barrier height in the spin-coherent potential for this class vanishes as $n \to \infty$. Therefore, for these easy-for-AQA problems, there is a vanishing barrier in the spin-coherent potential. More precisely, we can show, for any perturbed Hamming weight problem, 
\begin{align}
V_{{\mathrm{SC}}}^{\mathrm{pert}} - V_{{\mathrm{SC}}}^{\mathrm{unpert}} &= s \sum_{l<k<u} f(k) \binom{n}{k} p(\theta)^k (1-p(\theta))^{n-k} \notag \\
&= \mathcal{O}\left(h \frac{u-l}{\sqrt{l}}\right)
\end{align}
where the unperturbed case refers to $h(n)=0$ in Eq.~\eqref{eqt:Reichardt}. 
Recall that $h \frac{u-l}{\sqrt{l}}$ = $o(1)$ for the Reichardt class. Thus asymptotically, the spin-coherent potential for this class approaches the spin-coherent potential of the unperturbed Hamming weight problem. It is easy to check that the latter has a single minimum throughout the evolution, and hence no barriers.

\begin{figure*}[htbp]
 \subfigure[]{\includegraphics[width=0.32\textwidth]{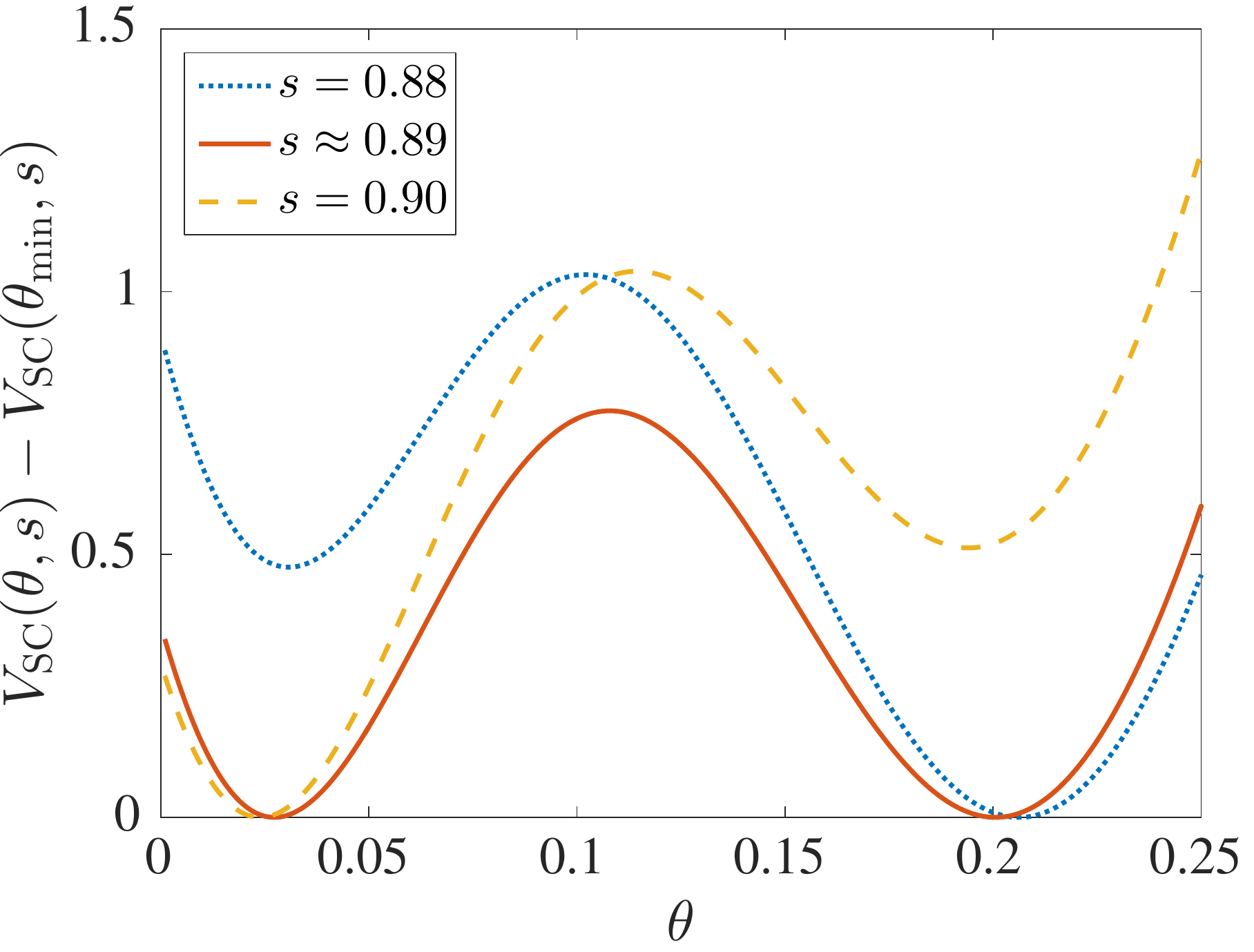} \label{fig:Veff}}
 \subfigure[]{\includegraphics[width=0.32\textwidth]{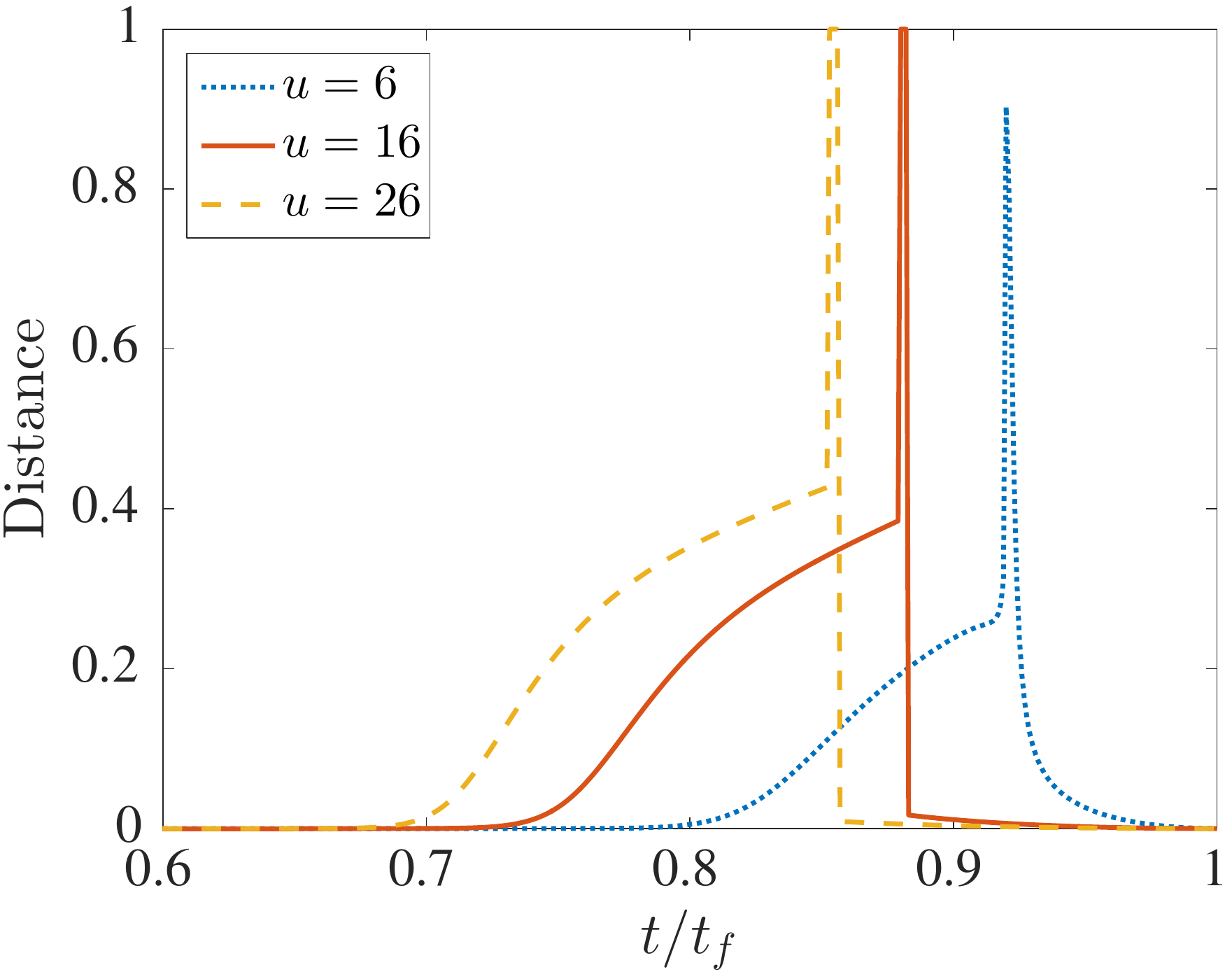} \label{fig:GSdistanceSC-QA}}
 \subfigure[]{\includegraphics[width=0.32\textwidth]{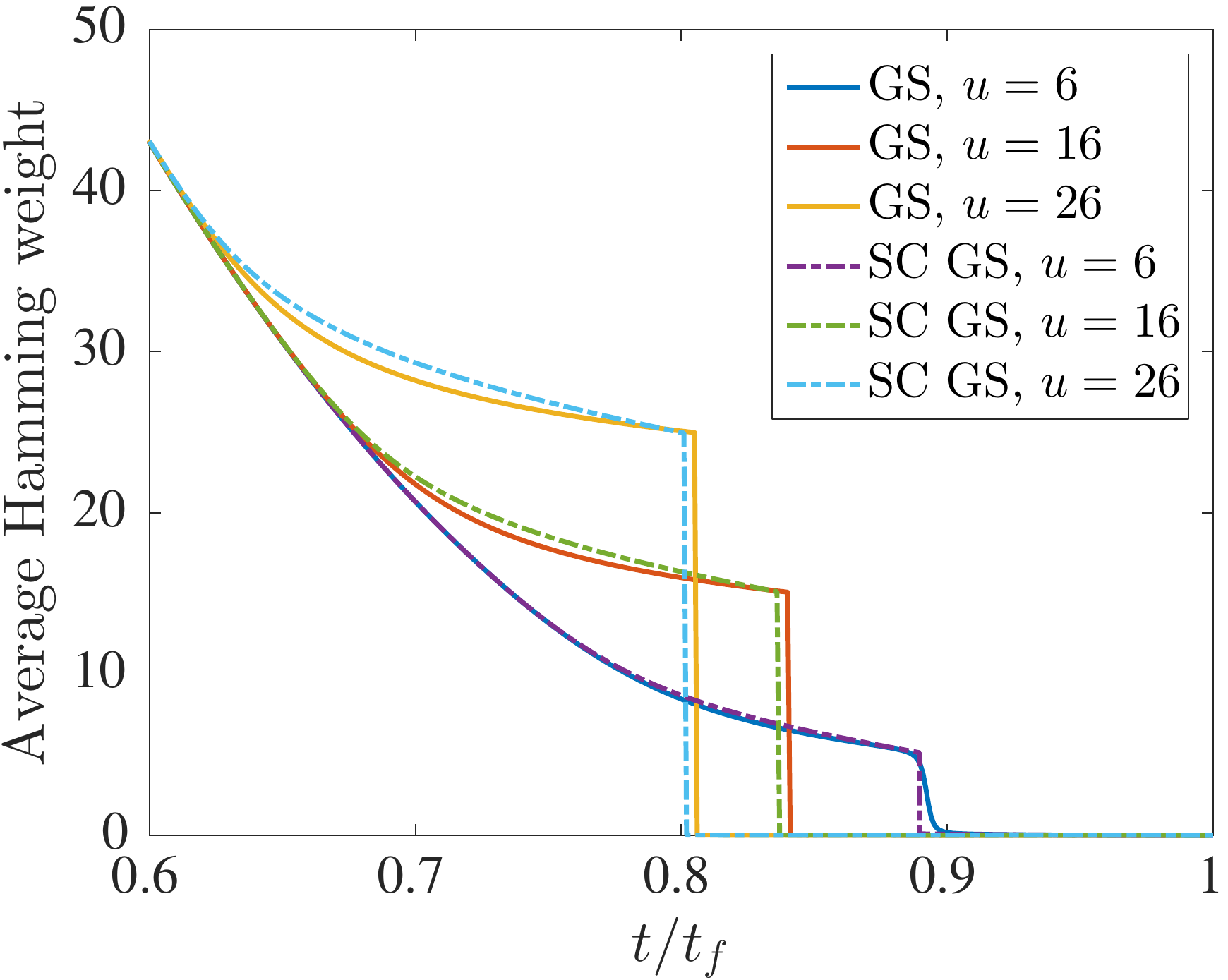} \label{fig:GSGibbs} }
 \caption{Results for the Fixed Plateau problem with $l=0$ and $n=512$. (a) The semi-classical potential with $u=6$ exhibits a double-well degeneracy at the position $s \approx 0.89$ (solid), but is non-degenerate before and after this point (dotted and dashed), thus leading to a discontinuity in the position of its global minimum. The same is observed for other the PHWO problems we studied (not shown). (b) The trace-norm distance between the quantum ground state (obtained by numerical diagonalization) and the spin-coherent state corresponding to the instantaneous global minimum in $V_{\mathrm{SC}}$, as a function of $t/t_f$. The peak corresponds to the location of the tunneling event, at which point the semi-classical approximation breaks down. (c) $\wich{\mathrm{HW}}$ in the instantaneous quantum ground state state (GS) and the instantaneous quantum ground state as predicted by the semi-classical potential (SC GS), as a function of $t/t_f$. The sharp drop in the GS and SC GS curves is due to a tunneling event wherein $\sim u$ qubits are flipped which occurs at the degeneracy point observed in the spin-coherent potential.}
 \end{figure*}

\begin{figure*}[t] 
	\subfigure[]{\includegraphics[width=0.4\textwidth]{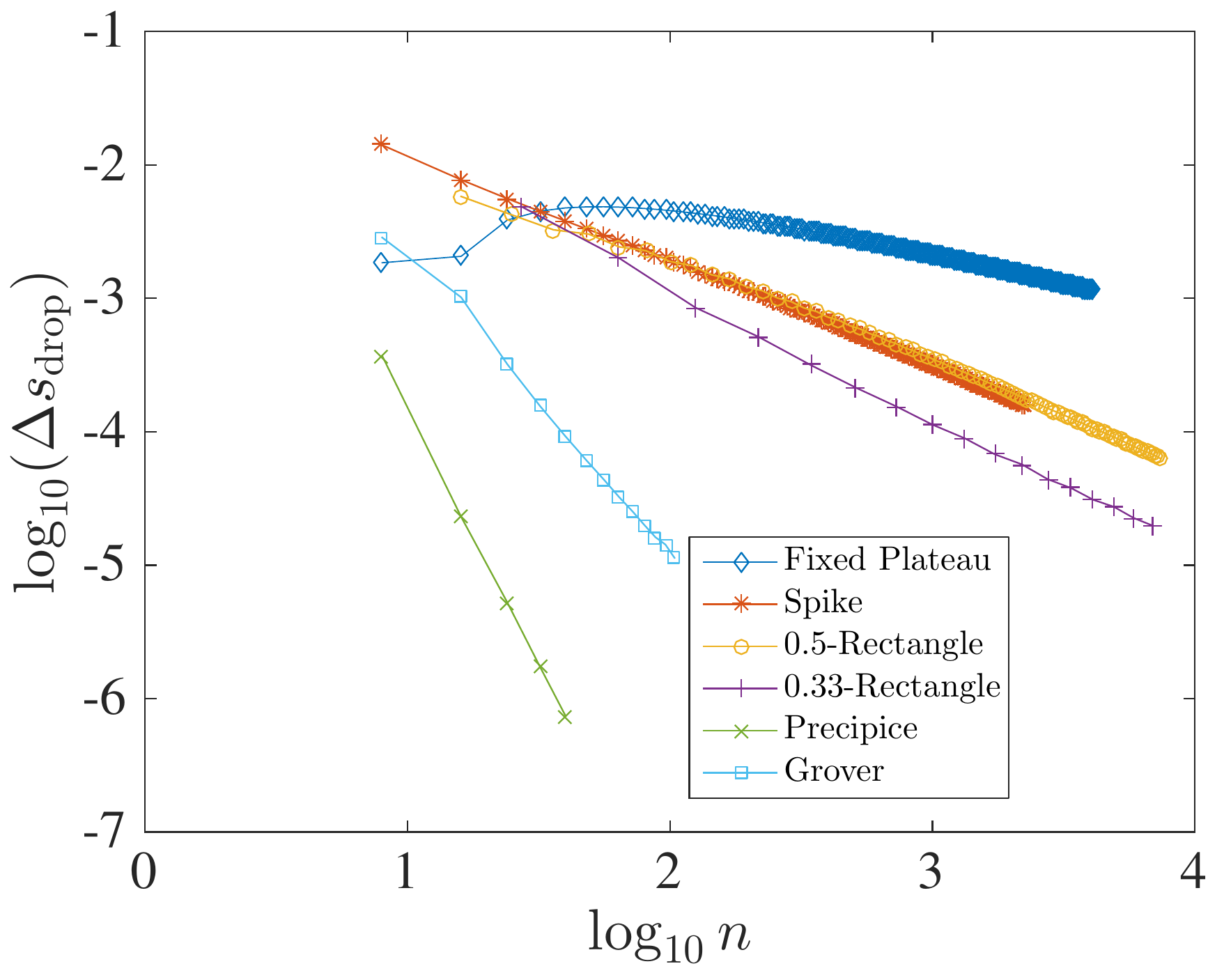} \label{fig:MinGapsWellLocs}} 
	\subfigure[]{\includegraphics[width=0.4\textwidth]{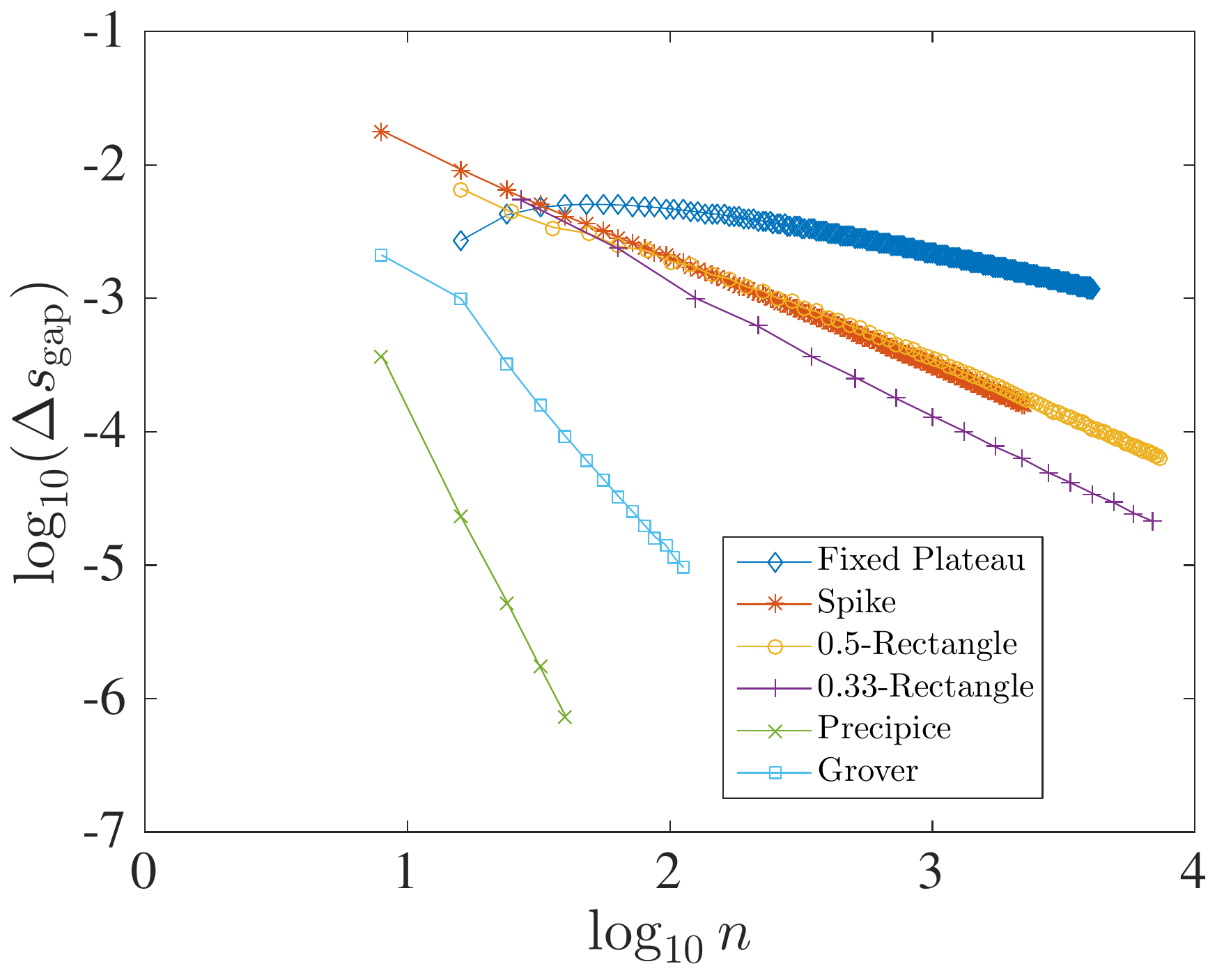} \label{fig:HWdropsWellLocs}} 
	\caption{(a) The difference in the position of the minimum gap from exact diagonalization and the position of the double-well degeneracy (as seen in Fig.~\ref{fig:Veff} for the Fixed Plateau) from the semi-classical potential, as a function of $n$ for the {Fixed Plateau}, the {Spike}, the {$0.5$-Rectangle}, the {$0.33$-Rectangle}, the {Precipice}, and {Grover} (log-log scale). (b) The difference in the position of the sudden drop in Hamming weight [as seen in Fig.~\ref{fig:GSGibbs}] and the position of the double-well degeneracy from the semi-classical potential [as seen in Fig.~\ref{fig:Veff}], as a function of the number of qubits $n$ for the {Fixed Plateau}, the {Spike}, the {$0.5$-Rectangle}, the {$0.33$-Rectangle}, the {Precipice}, and {Grover} (log-log scale). }
\end{figure*}

\begin{figure*}[t] 
	\subfigure[]{\includegraphics[width=0.4\textwidth]{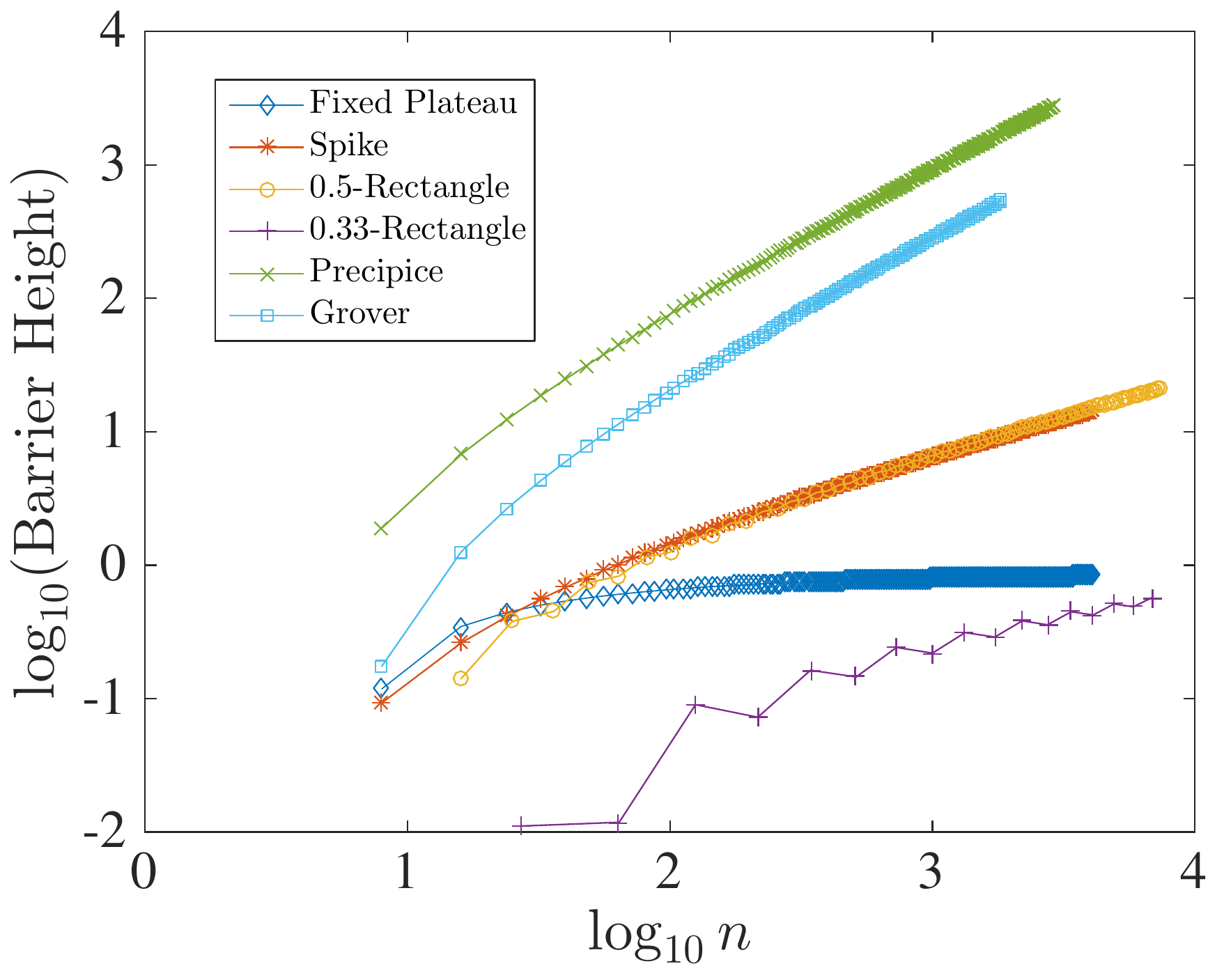} \label{fig:BarrierHeights}} 
	\subfigure[]{\includegraphics[width=0.4\textwidth]{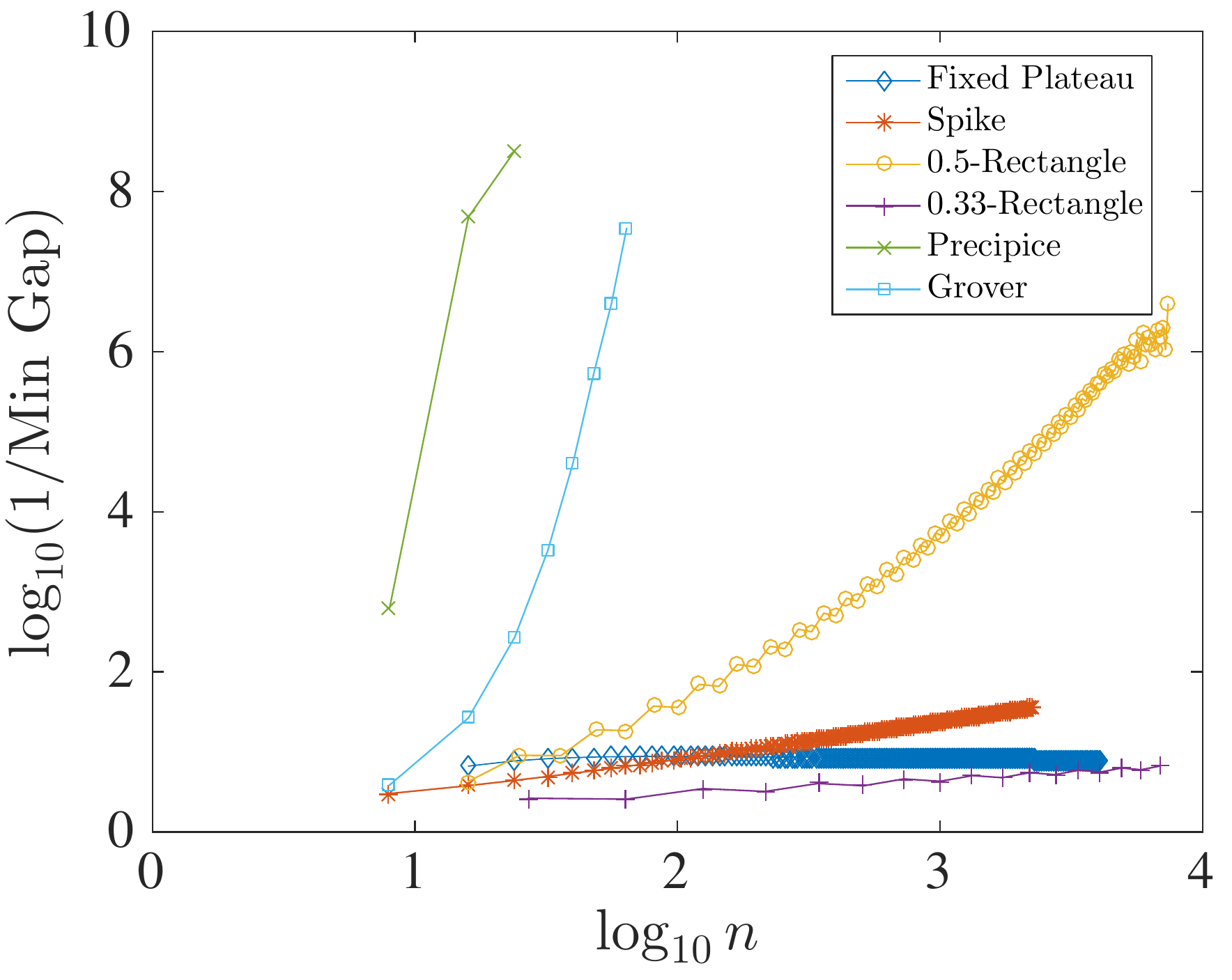} \label{fig:MinGaps}} 
	\caption{(a) The height of the barrier between the two wells at the degeneracy point of the spin-coherent potential [as seen in Fig.~\ref{fig:Veff}], as a function of $n$ for the {Fixed Plateau}, the {Spike}, the {$0.5$-Rectangle}, the {$0.33$-Rectangle}, the {Precipice}, and {Grover} (log-log scale). (b) The inverse of the minimum gap as a function of $n$ for the {Fixed Plateau}, the {Spike}, the {$0.5$-Rectangle}, the {$0.33$-Rectangle}, the {Precipice}, and {Grover} (log-log scale). The important thing to note is that the \emph{order} of scaling is preserved in both plots. That is, steeper the scaling of the barrier height, steeper the scaling of the inverse minimum gap.}
	\label{fig:heightsandgaps}
\end{figure*}

Taken together, these observations indicate that the spin-coherent potential (not the cost function alone) is the appropriate potential with respect to which tunneling is to be understood for these problems. 

\section{Fixed Plateau: Performance of algorithms} \label{sec:FixedPlateau}

Having motivated the spin-coherent potential for understanding tunneling, we now exhaustively analyze the {Fixed Plateau}. We choose this problem because it forces us to confront some intuitions about the performance of certain algorithms. Considering the final cost function, the Fixed Plateau has neither local minima nor a barrier going from large to small $|x|$: it just has a long, flat section before the ground state at $|x|=0$. This might suggest that it is easy for an algorithm such as SA, and is not a candidate for a quantum speedup. Moreover, given the absence of a barrier, one might suspect that the quantum evolution would not even involve multi-qubit tunneling. 

We dispel both of these intuitions and summarize our findings first.  In the previous section, we  already provided evidence that tunneling is unambiguously present for this problem.  The spin-coherent potential involves energy barriers, despite their absence in the final cost function, and the adiabatic quantum evolution is forced to tunnel in order to follow the ground state.  By a simulation of the Schr\"odinger equation, we find that AQA needs a time of $\mathcal{O}(n^{0.5})$ in order to reach a given success probability (see Sec.~\ref{subsec:adDyn}). Therefore, the adiabatic algorithm, via tunneling, is able to solve this problem efficiently.

Turning to SA, an algorithm which performs a local stochastic search on the final cost function, we prove that simulated annealing with single spin-updates will take time $\mathcal{O}(n^{u-l-1}) = \mathcal{O}(n^\mathrm{plateau\ width})$ to find the ground state (see Sec.~\ref{subsec:SArandom}). This result is due to the fact that a random walker on the plateau has no preferred direction and becomes trapped there. More precisely, the probability of a leftward transition while on the plateau is proportional to the probability of flipping one of a constant number of bits (given by the Hamming weight) out of $n$, which scales as $\sim 1/n$ if $l,u = \mathcal{O}(1)$. And since the walker needs to make as many consecutive leftward transitions as the width of the plateau in order to fall off the plateau, the time taken for this to happen scales as $\mathcal{O}(n^\mathrm{plateau\ width})$. Consequently, we obtain a polynomial speedup of AQA over SA that can be made as large as desired. Therefore, using the Fixed Plateau, we are able to demonstrate that a quantum speedup over SA is possible via tunneling in the adiabatic regime. 

However, is the adiabatic evolution optimal?  In order to find the optimal evolution time, we employ the optimal time to solution (TTS$_\mathrm{opt}$), a metric that is commonly used in benchmarking studies \cite{speedup} (also see Appendix~\ref{app:TTS}). It is defined as the minimum total time such that the ground state is observed at least once with desired probability $p_d$:
\beq 
\label{eqt:TTSopt}
\text{TTS}_\mathrm{opt} = \min_{t_f > 0} \left\{ t_f \frac{\ln( 1- p_d)}{\ln \left(1-p_{\mathrm{GS}}(t_f) \right)} \right\} \ ,
\eeq
where $t_f$ is the duration (in QA) or the number of single spin updates (in SA) of a single run of the algorithm, and $p_{\mathrm{GS}}(t_f)$ is the probability of finding the ground state in a single such run. The use of TTS$_\mathrm{opt}$ allows for the possibility that multiple short runs of the evolution, each lasting an optimal annealing time $({t_f})_{\textrm{opt}}$, result in a better scaling than a single long (adiabatic) run with an unoptimized $t_f$.  The quantum evolution that gives the optimal annealing time relative to this cost function is actually DQA, with an asymptotic scaling of $\mathcal{O}(1)$. Importantly, this diabatic evolution does not contain any of the signatures of tunneling  discussed in the previous section.  Therefore, for the Fixed Plateau, tunneling does not give rise to the optimal quantum performance. 

Motivated by the fact that the optimal quantum evolution involves no multi-qubit tunneling, we consider spin-vector dynamics \cite{Smolin} (see, also Refs.~\cite{Albash:2014if,owerre2015macroscopic}), a model that evolves according to the spin-coherent potential in Eq.~\eqref{eq:vsc}.  SVD can be derived as the saddle-point approximation to the path integral formulation of QA in the spin-coherent basis \cite{owerre2015macroscopic}. The SVD equations are equivalent to the Ehrenfest equations for the magnetization under the assumption that the density matrix is a product state, i.e., $\rho = \otimes_{i=1}^n \rho_{i}$, where $\rho_i$ denotes the state of the $i$th qubit.  This algorithm is useful since it is derived under the assumption of continuity of the angles $(\theta, \phi)$, so tunneling, which here would amount to a discrete jump in the angles, is absent.  

We also consider a quantum Monte Carlo based algorithm, often called simulated quantum annealing (SQA) \cite{sqa1,Santoro}. We show that SQA has a scaling that is better than SA's. Indeed, this is consistent with the fact that SQA thermalizes not just relative to the final cost function, but also during the evolution.

We provide further details of our implementations of each of these algorithms in Appendix \ref{app:Methods}. We now turn to each of the algorithms individually and detail their performance for the Fixed Plateau problem.

\subsection{Adiabatic dynamics} \label{subsec:adDyn}
In order to study the scaling of adiabatic dynamics, we consider the minimum time $\tau_{0}$ required to reach the ground state with some probability $p_0$, where we choose $p_0$ to ensure that we are exploring a regime close to adiabaticity for QA.   We call this benchmark metric the ``threshold criterion,'' and set  $p_0=0.9$.  As seen in Fig.~\ref{fig:ThC}, we observe a scaling for AQA that is approximately $\sim n^{0.5}$. As is to be expected given that the tunneling for the Fixed Plateau problem is controlled by the width of the plateau, which is constant (does not scale with $n$),  we find that $\tau_{0}$ scales in the same way for the Fixed Plateau and the Plain Hamming Weight problems (see Appendix~\ref{app:review}). This  suggests that the dominant contribution to the scaling at large $n$ is not the time associated with tunneling but rather the time associated with the Plain Hamming Weight problem.

As also seen in Fig.~\ref{fig:ThC}, we find that the textbook adiabatic criterion \cite{Messiah:book} given by
\beq
t_f \gtrsim \max_{s \in [0,1]}  \frac{ |\bra{\eps_0(s) } \partial_s H(s) \ket{\eps_1(s)}|}{\text{Gap}(s)^2} \ ,
\label{eqt:AdC}
\eeq
serves as an excellent proxy for the scaling of AQA %
\footnote{We note that while this adiabatic criterion matches the numerical scaling we observe for the quantum evolution, it is well known to be neither exact nor general; see, e.g., Refs.~\cite{Jansen:07,Amin:09,lidar:102106}.}.
The scaling of AQA is matched by the scaling of the numerator of the adiabatic condition, which is explained by the fact that we find a constant minimum gap for the case $l,u = \mathcal{O}(1)$. This numerator turns out to be well approximated in our case by the matrix element of $H(s)$ between the ground and first excited states, leading to $t_f\sim n^{0.5}$ in the adiabatic limit. Note that calculating this matrix element can easily be done for arbitrarily large systems, and is hence much easier to check directly than the scaling of AQA.

\begin{figure}[t] 
\includegraphics[width=0.8 \columnwidth]{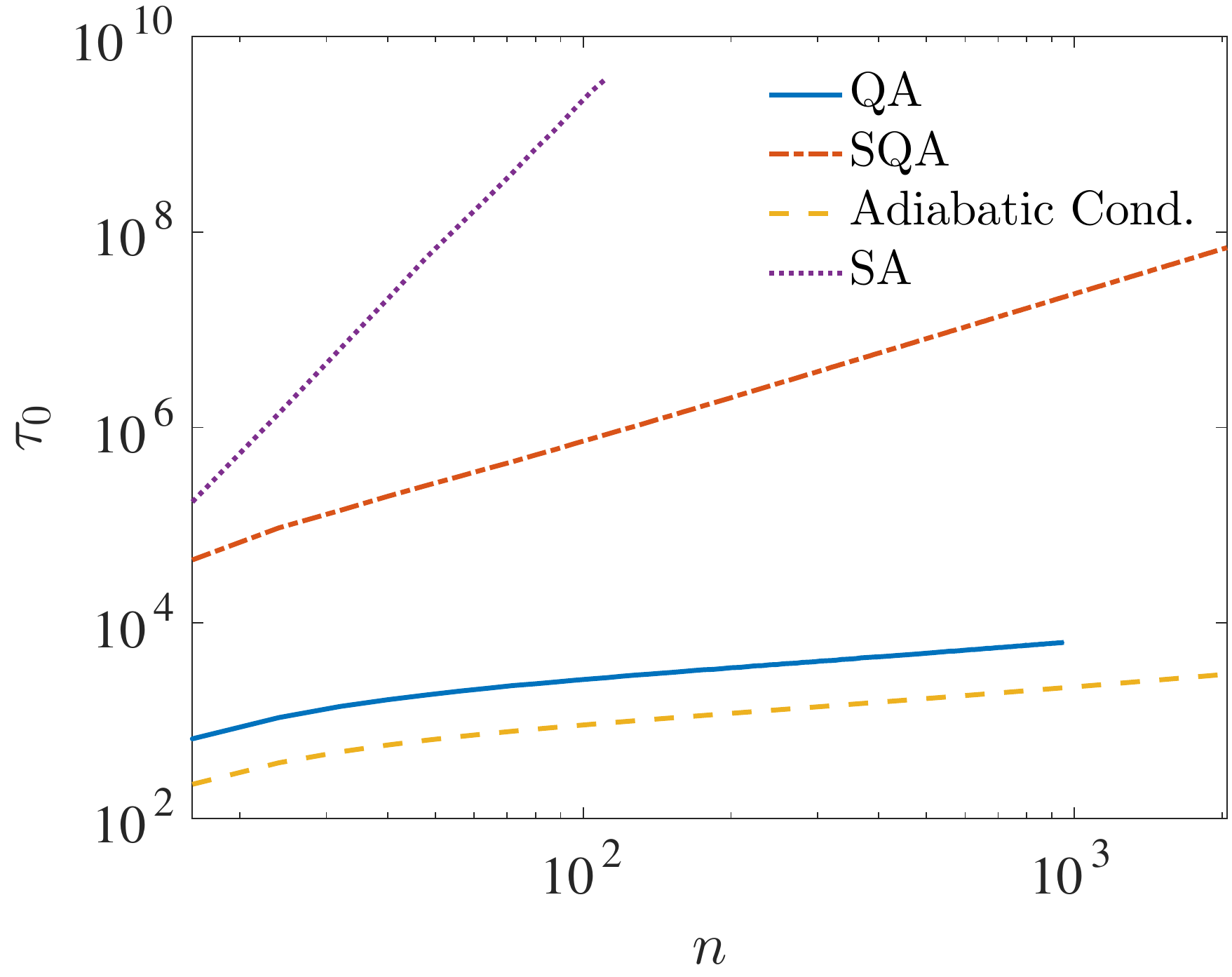}
\caption{Performance of different algorithms for the Fixed Plateau problem with $l=0$ and $u =6$. Shown is a log-log plot of the scaling of the time to reach a threshold success probability of $0.9$, as a function of system size $n$ for AQA, SQA ($\beta = 30$, $N_{\tau} = 64$) and SA ($\beta_f = 20$).  The time for SQA and SA is measured in single-spin updates (for SQA this is $N_{\tau}$ times the number of sweeps times the number of spins, whereas for SA this is the number of sweeps times the number of spins), where both are operated in `solver' mode as described in Appendix~\ref{app:Methods}.  Also shown is the scaling of the numerator of the adiabatic condition as defined in Eq.~\eqref{eqt:AdC}. The scaling for AQA and the adiabatic condition extracted by a fit using $n \gtrsim 10^2$ is approximately $n^{0.44}$. However, the true asymptotic scaling is likely to be $\sim n^{0.5}$ since the scaling for the Fixed Plateau problem is clearly lower-bounded by the Plain Hamming Weight problem, for which we have verified $\tau_0 \sim n^{0.5}$ (see Appendix~\ref{app:review}), and we expect the effect of the plateau to become negligible in the large $n$ limit.  SQA scales more favorably ($\sim n^{1.5}$) than SA ($\sim n^{5}$).  We have checked that the scaling of SQA does not change even if we double the number of Trotter slices $N_{\tau}$ and keep the temperature $1/\beta$ fixed.} 
\label{fig:ThC} 
\end{figure}
%
\subsection{Simulated annealing using random spin selection} \label{subsec:SArandom}
%
We consider a version of SA with random spin-selection as the rule that generates candidates for Metropolis updates. Our main motivation is to understand the behavior of a local, stochastic search algorithm which has access only to the final cost function. We note that our analysis below is general for any Plateau problem, and is not limited to the {Fixed Plateau} or the {Moving Plateau}.

If we pick a bit-string at random, then for large $n$ we will start with very high probability at a bit-string with Hamming weight close to $n/2$.  The plateau may be to the left or to the right of $n/2$; if the plateau is to the right, then the random walker is unlikely to encounter it and can quickly descend to the ground state.
Thus, the more interesting case is when the random walker arrives at the plateau from the right. We proceed to analyze these two cases separately.

\subsubsection{Walker starts to the right of the plateau}
In this case, how much time would it take, typically, for the walker to fall off the left edge? It is intuitively clear that traversing the plateau will be the dominant contribution to the time taken to reach the ground state, as after that the random walker can easily walk down the potential. We show below (for the walker that starts to the left of the plateau) that this time can be at most $\mathcal{O}(n^2)$ if $\beta = \Omega(\log n)$.

To evaluate the time to fall off the plateau, note that the perturbation is applied on strings of Hamming weight $l+1,l+2,\dots,u-1$, so the width of the plateau is $w \equiv u-l-1$. Consider a random walk on a line of $w+1$ nodes labelled $0,1,\dots w$. Node $i$ represents the set of bit strings with Hamming weight $l+i$, with $0\leq i \leq w$. We may assume that the random walker starts at node $w$.
Only nearest-neighbor moves are allowed and the walk terminates if the walker reaches node $0$.

Our analysis will provide a lower bound on the actual time to fall off the left edge, because in the actual PHWO problem one can also go back up the slope on the right, and in addition we disallow transitions from strings of Hamming weight $l$ to $l+1$. This is justified because the Metropolis rule exponentially (in $\beta$) suppresses these transitions.

The transition probabilities $p_{i \to j}$ for this problem can be written as a $(w+1) \times (w+1)$ row-stochastic matrix $p_{ij} = p_{i \to j}$. Here $p$ is a tridiagonal matrix with zeroes on the diagonal, except at $p_{00}$ and $p_{ww}$.  First consider $1\leq i \leq w-1$. If the walker is at node $i$, then the transition to node $i+1$ (which has Hamming weight $l+i+1$) occurs with probability $\frac{n-(l+i)}{n}$ (the chance that the bit picked had the value $0$). Similarly, for $1\leq i \leq w$, the Hamming weight will decrease to $l+i-1$ with probability $\frac{l+i}{n}$ (the chance that the bit picked had the value $1$). Combining this with the fact that a walker at node $0$ stays put, we can write:
\bes
\begin{align}
b_i &\equiv p_{i\to i}=\begin{cases} 1 \text{  if  } i=0 \\
								  0 \text{  if  } 1\leq i \leq (w-1) \\ 
								  1-\frac{l+w}{n} \text{  if  } i=w 
								  \end{cases},\\		  
c_i &\equiv p_{i-1\to i} = \begin{cases} 0 \text{  if  } i=1\\ 1-\frac{l+i-1}{n} \text{  if  } i=2,\dots,w \end{cases}, \\
a_i &\equiv p_{i\to i-1} = \frac{l+i}{n} \text{  if  } i=1,2,\dots,w.
\end{align}
\ees
Let $X(t)$ be the position of the random walker at time-step $t$. The random variable measuring the number of steps the random walker starting from node $r$ would need to take to reach node $s$ for the first time is
\beq
\tau_{r,s} \equiv \text{min} \{ t>0: X(t)=s,X(t-1)\neq s | X(0) =r\}\ .
\eeq
The quantity we are after is $\mathbb{E}\tau_{w,0}$, the expectation value of the random variable $\tau_{w,0}$, i.e., the mean time taken by the random walker to fall off the plateau. Since only nearest neighbor moves are allowed we have
\beq \label{eq:timetofalloff}
\mathbb{E}\tau_{w,0} =\sum_{r=1}^w \mathbb{E}\tau_{r,r-1}\ .
\eeq
Stefanov~\cite{stefanov1995mean} (see also Ref.~\cite{krafft1993mean}) has shown that
\beq \label{eq:stefanovformula}
\mathbb{E}\tau_{r,r-1} = \frac{1}{a_r} \left( 1 + \sum_{s=r+1}^{w} \prod_{t=r+1}^{s} \frac{c_t}{a_t} \right),
\eeq
where $c_{w+1}\equiv 0$. Evaluating the sum term by term, we obtain: 
\bes
\label{eq:62}
\begin{align}
\mathbb{E}\tau_{w,w-1} &= \frac{n}{l+w}, \\
\vdots \nonumber \\
\mathbb{E}\tau_{w-k,w-k-1} &= \frac{n}{l+w-k} \left[1+ \frac{n-(l+w-k)}{l+w-(k-1)}+\dots \right. \nonumber \\
&  +\frac{n-(l+w-k)}{l+w-(k-1)}\times\cdots  \nonumber \\ 
& \left. \times\frac{n-(l+w-2)}{l+w-1} \times \frac{n-(l+w-1)}{l+w}\right]. 
\end{align}
\ees
Now consider the following cases:
\begin{enumerate}
\item \emph{Fixed Plateau}, $l,u = \mathcal{O}(1)$: Here, using the fact that $k=\mathcal{O}(w)=\mathcal{O}(1)$, we conclude that $\mathbb{E}\tau_{w-k,w-k-1} = \mathcal{O}(n^{k+1})$. Since the leading order term is $\mathbb{E}\tau_{w-(w-1),w-w}=\mathbb{E}\tau_{1,0}$, the time to fall off the plateau is $\mathcal{O}(n^w) = \mathcal{O}(n^{u-l-1}).$ This result about SA's performance is confirmed numerically in Fig.~\ref{fig:ThC}.

\item In order for Reichardt's bound (see Appendix~\ref{app:review}) to give a constant lower-bound to the quantum problem, we need $u=l+o(l^{1/4})$. Since at most we can have $l = \mathcal{O}(n)$, we can conclude $\mathbb{E}\tau_{w-k,w-k-1} = \mathcal{O}\left(\frac{n}{l}\right)^{k+1}$. Therefore, the time to fall-off becomes $\mathbb{E}\tau_{w,0} = \mathcal{O}\left(w(\frac{n}{l})^w\right)$. 
\begin{itemize}
\item \emph{Moving Plateau}: If $l = \Theta(n)$ and $w = \mathcal{O}(1)$, we can see that $\mathbb{E}\tau_{w,0} = \mathcal{O}(1)$, which is a constant time scaling.
\item \emph{Moving Plateau with changing width}: If $l = \Theta(n)$ and $w = \mathcal{O}(n^a)$, where $0<a<1/4$, then $\mathbb{E}\tau_{w,0} = \mathcal{O}(n^a \mathcal{O}(1)^{n^a})$, which is super-polynomial.
\item \emph{Most general plateau in the Reichardt class}: More generally, if $l = \mathcal{O}(n^b)$, with $b \leq 1$ and $w = \mathcal{O}(n^a)$, where $0\leq a < b/4$, then we get the scaling $\mathbb{E}\tau_{w,0} = \mathcal{O}(n^{a}\mathcal{O}(n^{1-b})^{n^a})$
\end{itemize}
\end{enumerate}

\subsubsection{Walker starts to the left of the plateau}
\label{subsubsec:SAHamWt}
Note that this case is equivalent to the unperturbed Hamming weight problem, which is a straightforward gradient descent problem. We may therefore consider a simple fixed temperature version of SA (i.e., the standard Metropolis algorithm). We will show that the performance of SA on this problem provides an upper bound of $\mathcal{O}(n^2)$ on the time for a random walker to arrive at the plateau, and on the time for a random-walker to reach the ground state after descending from the plateau.  Moreover, our analysis provides a lower bound of $\mathcal{O}(n\log n)$ on the efficiency of such algorithms.

For this problem, the transition probabilities are:
\bes
\begin{align}
c_i &\equiv p_{i-1\to i} = \frac{n-i+1}{n} e^{-\beta} \ , \\
a_i &\equiv p_{i\to i-1} =  \frac{i}{n} \ ,
\end{align}
\ees
with $i=1,2,\dots,n$ denoting strings of Hamming weight $i$, and $\beta$ is the inverse temperature. Using the Stefanov formula~\eqref{eq:stefanovformula}, we can write (after much simplification):
\beq
\mathbb{E}\tau_{n-k,n-k-1} = \frac{n}{n-k} \binom{n}{k}^{-1} \sum_{l=0}^{k} e^{-l \beta} \binom{n}{k-l} \ .
\eeq
We will bound 
\beq \label{eqt:hamwttime}
\mathbb{E}\tau_{n,0} = \sum_{k=0}^{n-1} \frac{n}{n-k} \binom{n}{k}^{-1} \sum_{l=0}^{k} e^{-l \beta} \binom{n}{k-l} \ ,
\eeq
the expected time to reach the all-zeros string starting from the all-ones string. This is the worst-case scenario as we are assuming that we are starting from the string farthest from the all-zeros string. Note again that if we start from a random spin configuration, then with overwhelming probability we will pick a string with Hamming weight close to $n/2$. Thus, most probably, $\mathbb{E}\tau_{n/2,0}$ will be the time to hit the ground state.

We first show that $\beta = \mathcal{O}(1)$ will lead to an exponential time to hit the ground state, irrespective of the walker's starting string. 
Toward that end,
\bes
\begin{align}
\mathbb{E}\tau_{1,0} &= \mathbb{E}\tau_{n-(n-1),n-n} \\
&=\sum_{l=0}^{n-1} e^{-l \beta} \binom{n}{n-1-l} \\
&=e^\beta\left[(e^{-\beta}+1)^n-1\right],
\end{align}
\ees
which is clearly exponential in $n$ if $\beta= \mathcal{O}(1)$. 

Next, let $\beta(n) = \log n$, i.e., we decrease the temperature logarithmically in system size. In this case,
\begin{align}
\mathbb{E}\tau_{1,0} = n \left[\left(1 + \frac{1}{n}\right)^n-1\right] \leq n (e - 1) = \mathcal{O}(n)\ .
\end{align}
Now it is intuitively clear that $\mathbb{E}\tau_{1,0} > \mathbb{E}\tau_{r,r-1}$ for all $r>1$, which implies that $n \mathbb{E}\tau_{1,0} \geq \mathbb{E}\tau_{n,0}  $. Thus, if $\beta = \log n$, then $\mathbb{E}\tau_{n,0} = \mathcal{O}(n^2)$ at worst.

To obtain a lower-bound on the performance of the algorithm, we take $\beta \to \infty$. Thus,  for each $k$ in Eq.~\eqref{eqt:hamwttime}, only the $l=0$ term will survive. Hence,
\begin{align}
\lim_{\beta \to \infty} \mathbb{E}\tau_{n,0} &= \sum_{k=0}^{n-1} \frac{n}{n-k} = n \sum_{i=1}^n \frac{1}{i} \approx n (\log n + \gamma)\ ,
\end{align}
for large $n$, with $\gamma$ being the Euler-Mascheroni constant. The scaling here is $\mathcal{O}(n \log n)$. This is the best possible performance for single-spin update SA with random spin-selection on the plain Hamming weight problem. Therefore, if $\beta = \Omega(\log n)$, the scaling will be between $\mathcal{O}(n \log n)$ and $\mathcal{O}(n^2)$. Of course, this cost needs to be added to the time taken for the walker starting to the right of the plateau.

Two clarifications are in order regarding the comparison between our theoretical bound on SA's performance and the associated numerical simulations we have presented. First, while Fig.~\ref{fig:ThC} displays the time to cross a threshold probability, our theoretical bound of $\mathcal{O}(n^{u-l-1})$ is on the expected time for the random walker to hit the ground state [Eq.~\eqref{eq:timetofalloff}]. However, we found that both metrics show identical scaling. Second, while the SA data in Fig.~\ref{fig:ThC} was generated using sequential spin updates, the theoretical bound assumes random spin updates (see Appendix~\ref{app:SA} for more details on the update schemes). However, we found that the asymptotic scaling for both cases is nearly identical in the long-time regime, and thus have plotted only the former.

\subsection{Optimal QA via Diabatic Transitions} \label{subsec:diabaticQA}

%
\begin{figure*}[t] 
   \centering
 \subfigure[]{\includegraphics[width=0.32\textwidth]{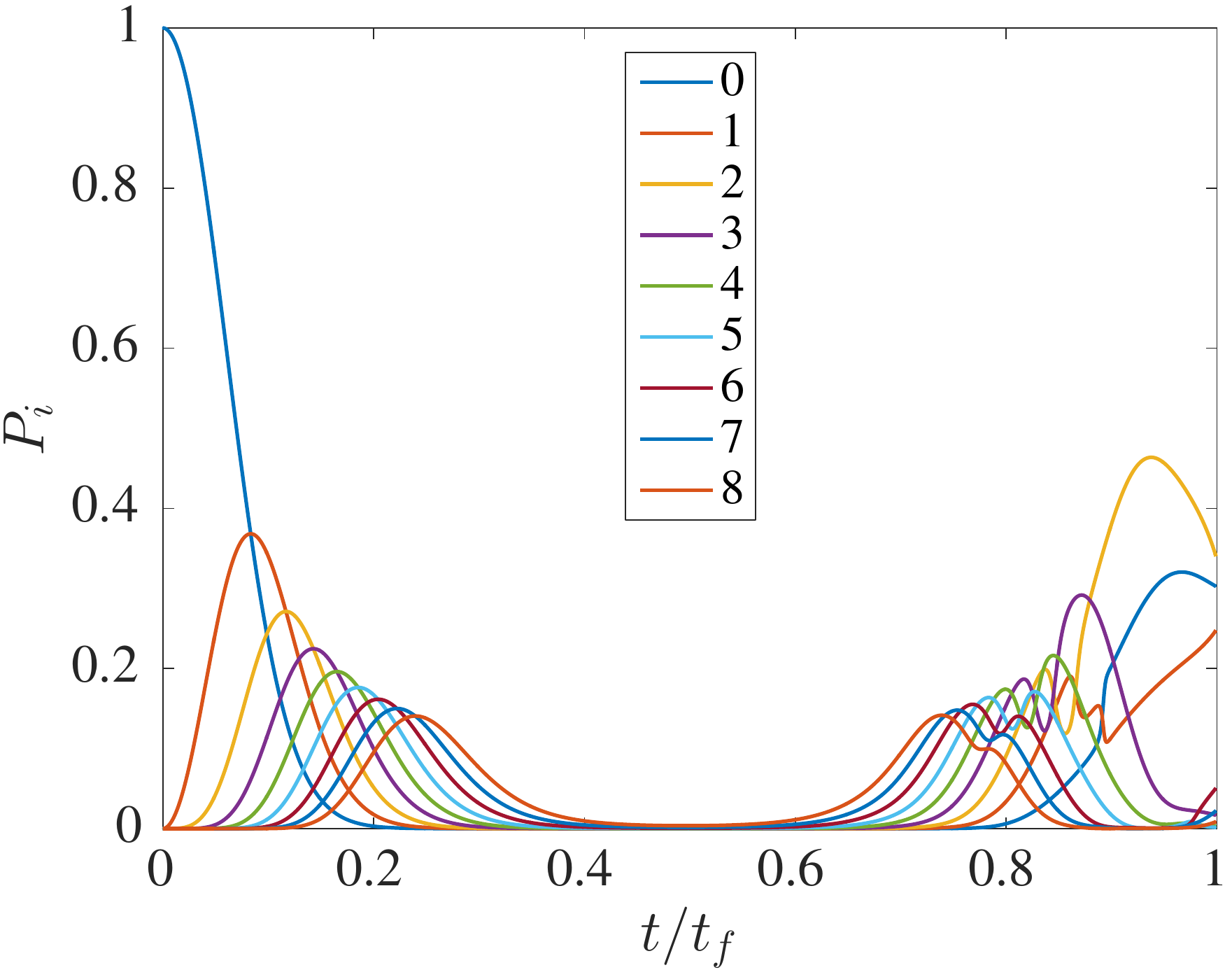}  \label{fig:QA_EnergyOverlap}}
   \subfigure[]{\includegraphics[width=0.32\textwidth]{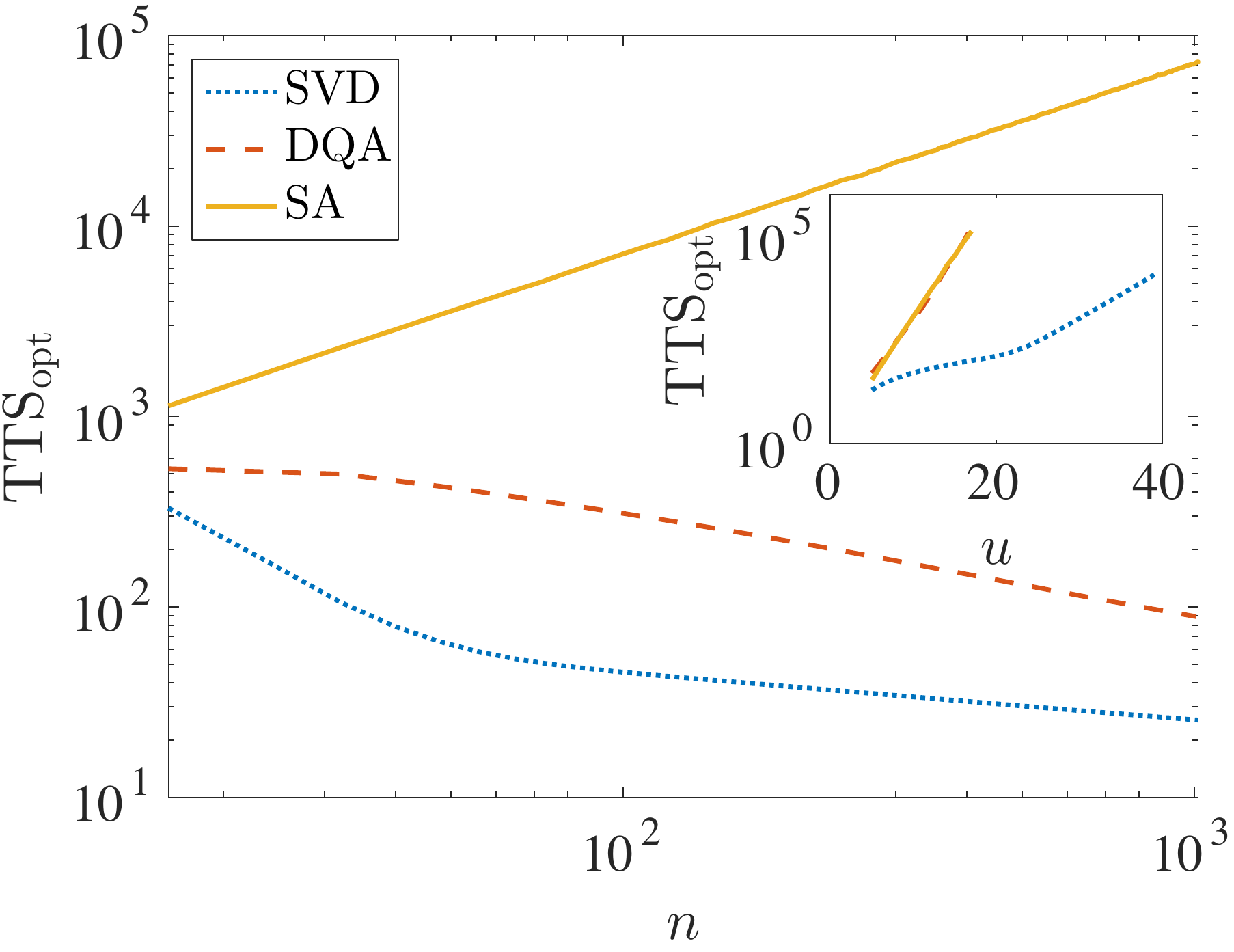} \label{fig:TTSScaling}}
  \subfigure[]{\includegraphics[width=0.32\textwidth]{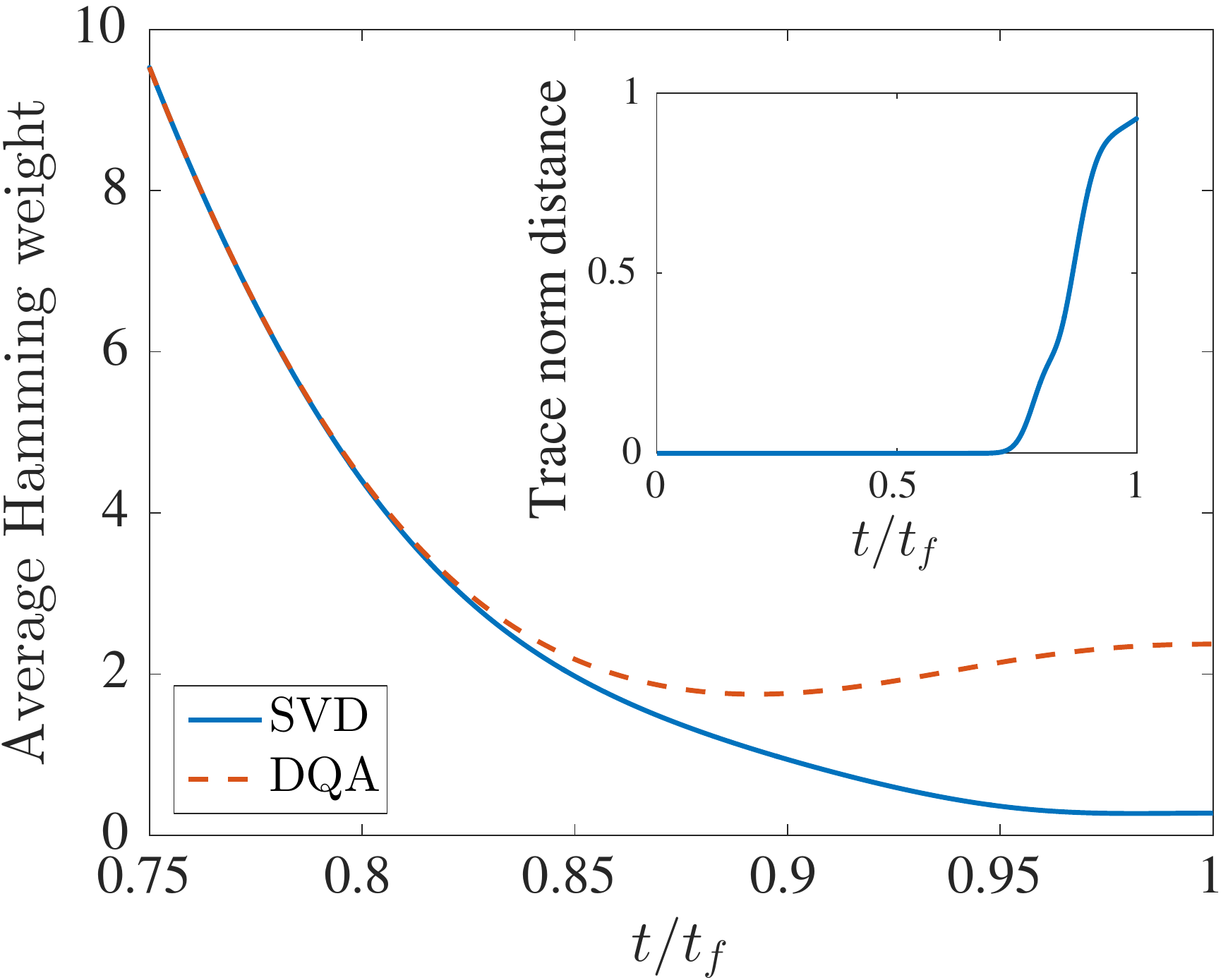}   \label{fig:QA_O3_AveHW}}
   \caption{Diabatic QA \textit{vs} SA and SVD for the {Fixed Plateau} problem with $l=0$. (a) Population $P_i$ in the $i$th energy eigenstate along the diabatic QA evolution at the optimal TTS for $n=512$ and $u=6$.  Excited states are quickly populated at the expense of the ground state.  By $t/t_f = 0.5$ the entire population is outside the lowest $9$ eigenstates.  In the second half of the evolution the energy eigenstates are repopulated in order. This kind of dynamics occurs due to a lining-up of avoided level crossings as seen in Fig.~\ref{fig:Cascade}. (b) Scaling of the optimal TTS with $n$ for $u=6$, with an optimized number of single-spin updates for SA, and equal $({t_f})_\textrm{opt}$ for DQA and SVD. SA scales as $\mathcal{O}(n)$, a consequence of performing sequential single-spin updates. DQA and SVD both approach $\mathcal{O}(1)$ scaling as $n$ increases. Here we set $p_d=0.7$ in Eq.~\eqref{eqt:TTSopt}, in order to be able to observe the saturation of SVD's TTS to the point where a single run suffices, i.e., TTS$_\textrm{opt} = ({t_f})_\textrm{opt}$. The conclusion is unchanged if we increase $p_d$: this moves the saturation point to larger $n$ for both SVD and DQA, and we have checked that SVD always saturates before DQA.
Inset: scaling as a function of $u$ for $n=1008$. SVD is again seen to exhibit the best scaling, while for this value of $n$ the scaling of DQA and SA is similar (DQA's scaling with $n$ improves faster than SA's as a function of $n$, at constant $u$). (c) $\wich{\mathrm{HW}}$ of the QA wavefunction and the SVD state (defined as the product of identical spin-coherent states) for $n=512$ and $u = 6$.  The behavior of the two is identical up to $t/t_f \approx 0.8$, when they begin to differ significantly, but neither displays any of the sharp changes observed in Fig.~\ref{fig:GSGibbs} for the instantaneous ground state.  Inset: the trace-norm distance between the DQA and SVD states, showing that they remain almost indistinguishable until $t/t_f \approx 0.8$.}
\end{figure*}
%
\begin{figure}[t] 
   \centering
   \includegraphics[width=0.8 \columnwidth]{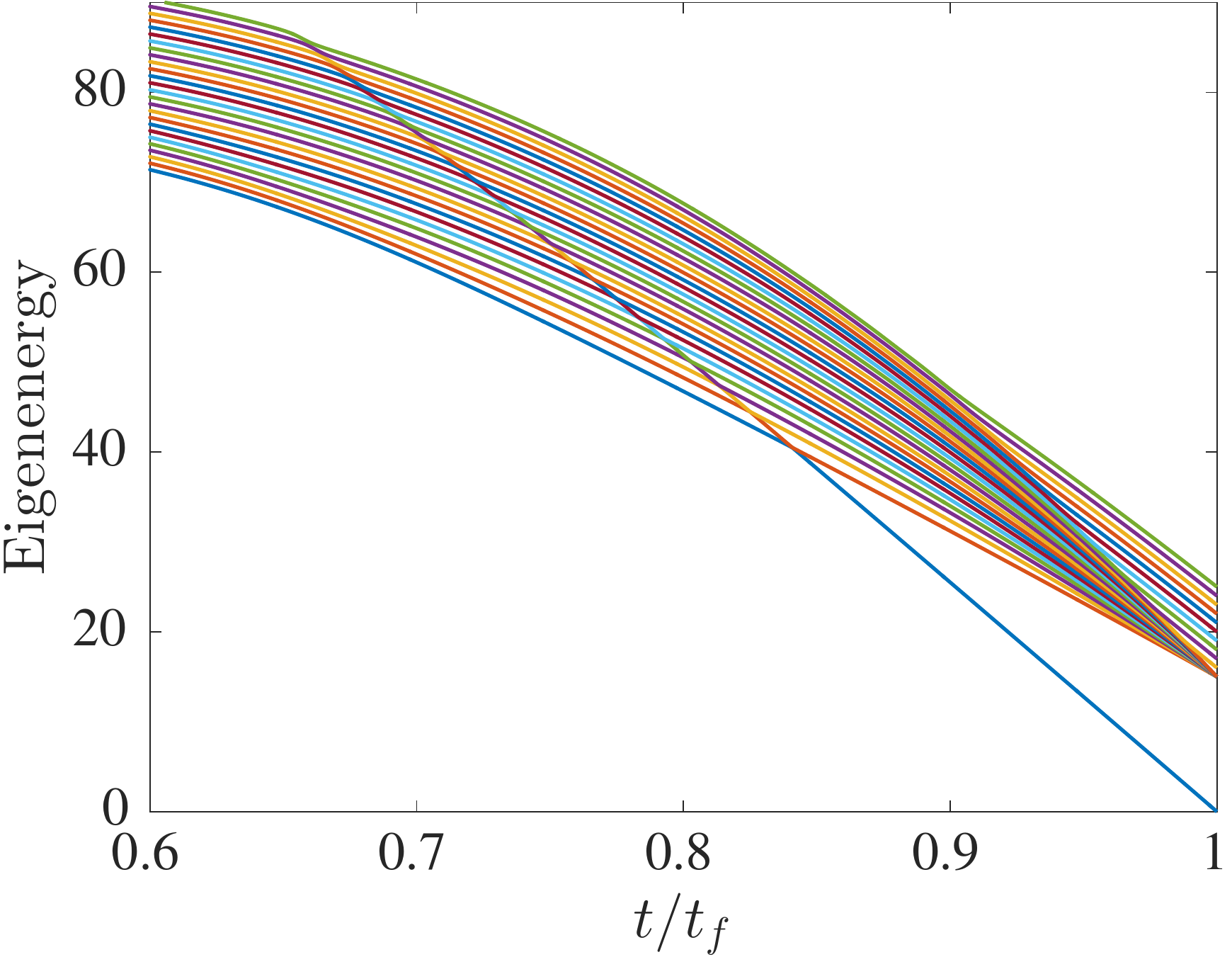}
      \caption{The eigenenergy spectrum along the evolution for the {Fixed Plateau} with $n=512$, $l=0$, and $u=6$. Note the sequence of  avoided level crossings that unmistakably line up in the spectrum to reach the ground state. This is the pathway through which DQA is able to achieve a speedup over AQA.}
       \label{fig:Cascade}
\end{figure}

Having established that for the Fixed Plateau AQA enjoys a quantum speedup over local search algorithms such as SA via tunneling, we are motivated to ask: Is tunneling \emph{necessary} to achieve a quantum speedup on these problems? In order to answer this question, we demonstrate using the optimal TTS criterion defined in Eq.~\eqref{eqt:TTSopt} that the optimal annealing time for QA is far from adiabatic. Instead, as shown in Fig.~\ref{fig:QA_EnergyOverlap}, the optimal TTS for QA is such that the system leaves the instantaneous ground state for most of the evolution and only returns to the ground state towards the end. The cascade down to the ground state is mediated by a sequence of avoided energy level-crossings as seen in Fig.~\ref{fig:Cascade}.  We consider this a diabatic form of QA (DQA) and call this mechanism through which DQA achieves a speedup a \emph{diabatic cascade}.

 As $n$ increases for fixed $u$, repopulation of the ground state improves for fixed $({t_f})_{\textrm{opt}}$, hence causing TTS$_\mathrm{opt}$ to decrease with $n$, as seen Fig.~\ref{fig:TTSScaling}, until it saturates to a constant at the lowest possible value, corresponding to a single run at $({t_f})_{\textrm{opt}}$. At this point the problem is solved in constant time $({t_f})_{\textrm{opt}}$, compared to the $\sim\mathcal{O}(n^{0.5})$ scaling of the adiabatic regime. Moreover, as shown in Fig.~\ref{fig:QA_O3_AveHW}, there are no sharp changes in $\wich{\mathrm{HW}}$, suggesting that the non-adiabatic dynamics do not entail multi-qubit tunneling events, unlike the adiabatic case. Thus, this establishes that we may have speedups in QA that do not involve multi-qubit tunneling.

One may worry that for this diabatic evolution to be successful, the optimal annealing time may need to be very finely tuned. We address this concern in Appendix~\ref{app:tfprecision}, where we show that if $\epsilon$ is the precision desired in $p_{\mathrm{GS}}$, we need only have a precision of $\mathrm{polylog}(1/\epsilon)$ in setting $t_f$, which means that the diabatic speedup is robust.

Figure~\ref{fig:SpikeMovingVanDam_TTSopt} shows that the speedup of DQA and SVD over AQA exists for three other PHWO problems: the {Moving Plateau}, the {Spike}, and the {0.5-Rectangle} problems. Importantly, DQA and SVD have an exponential speedup over AQA for the 0.5-Rectangle problem. We do not observe a diabatic speedup for the {Precipice} or {Grover} problems.

\begin{figure*}[t] 
   \subfigure[\ Spike]{\includegraphics[width=0.6\columnwidth]{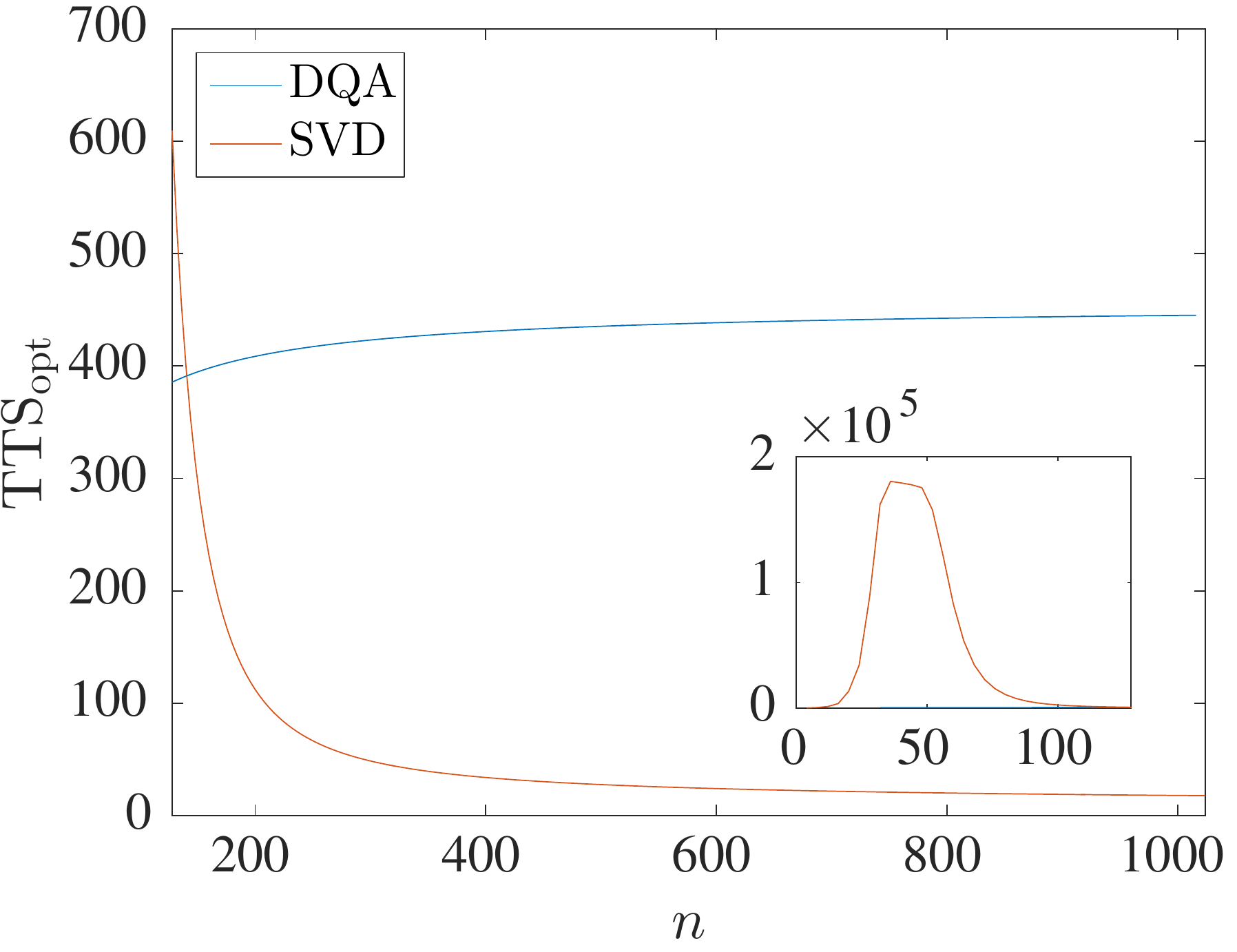}}
    \subfigure[\ Moving Plateau]{\includegraphics[width=0.6\columnwidth]{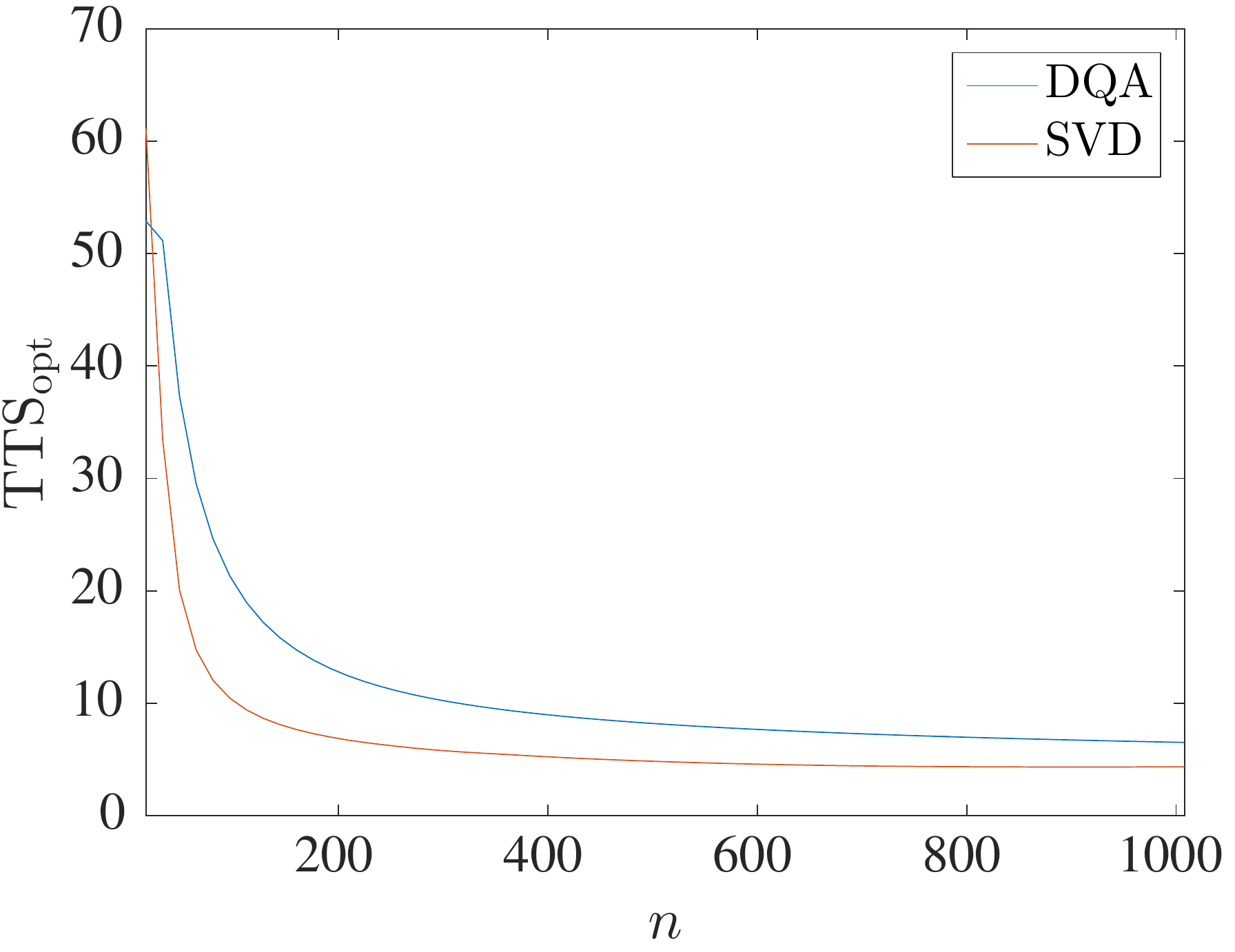}}
   \subfigure[\ 0.5-Rectangle]{\includegraphics[width=0.6\columnwidth]{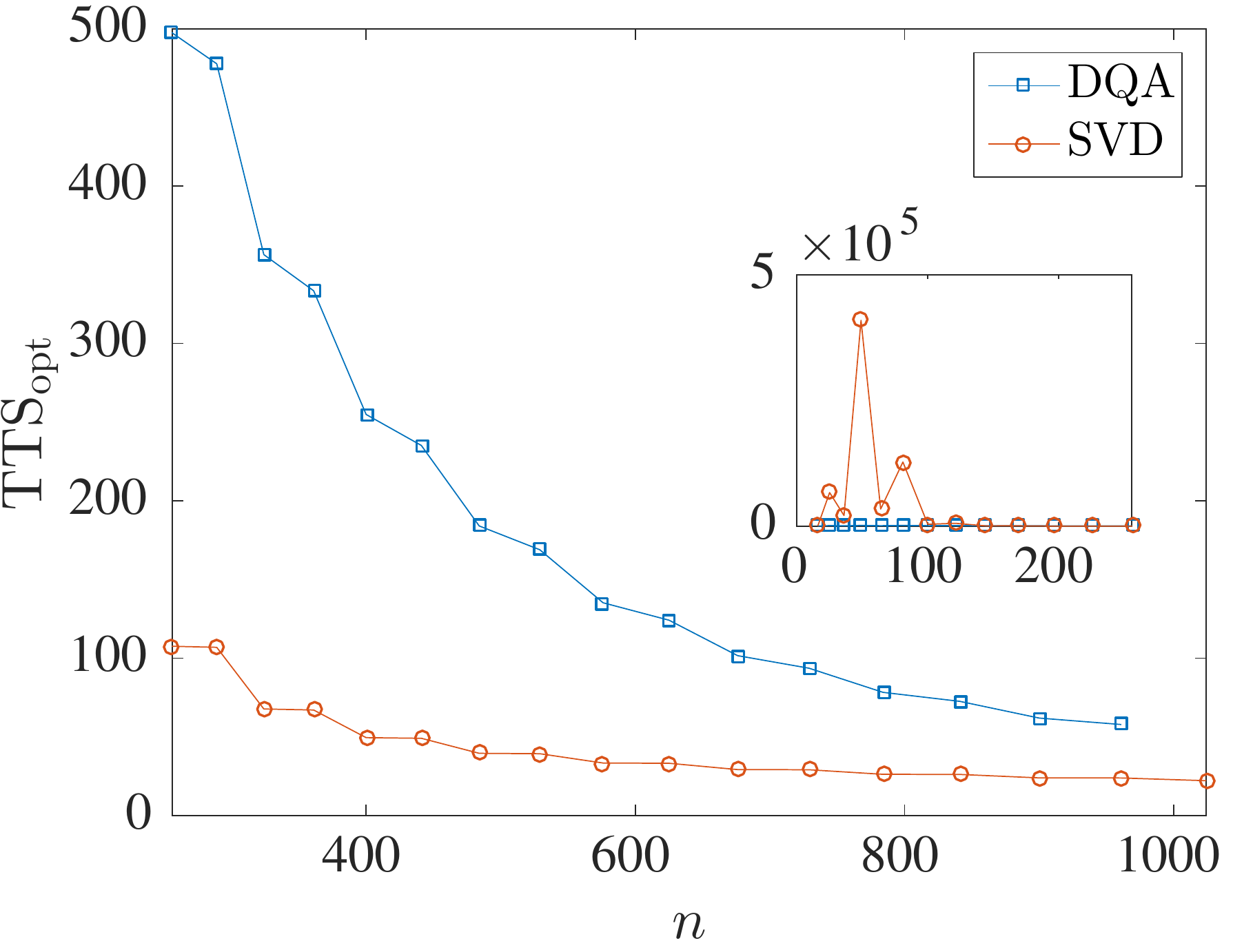}}
   \subfigure[\ Spike]{\includegraphics[width=0.6\columnwidth]{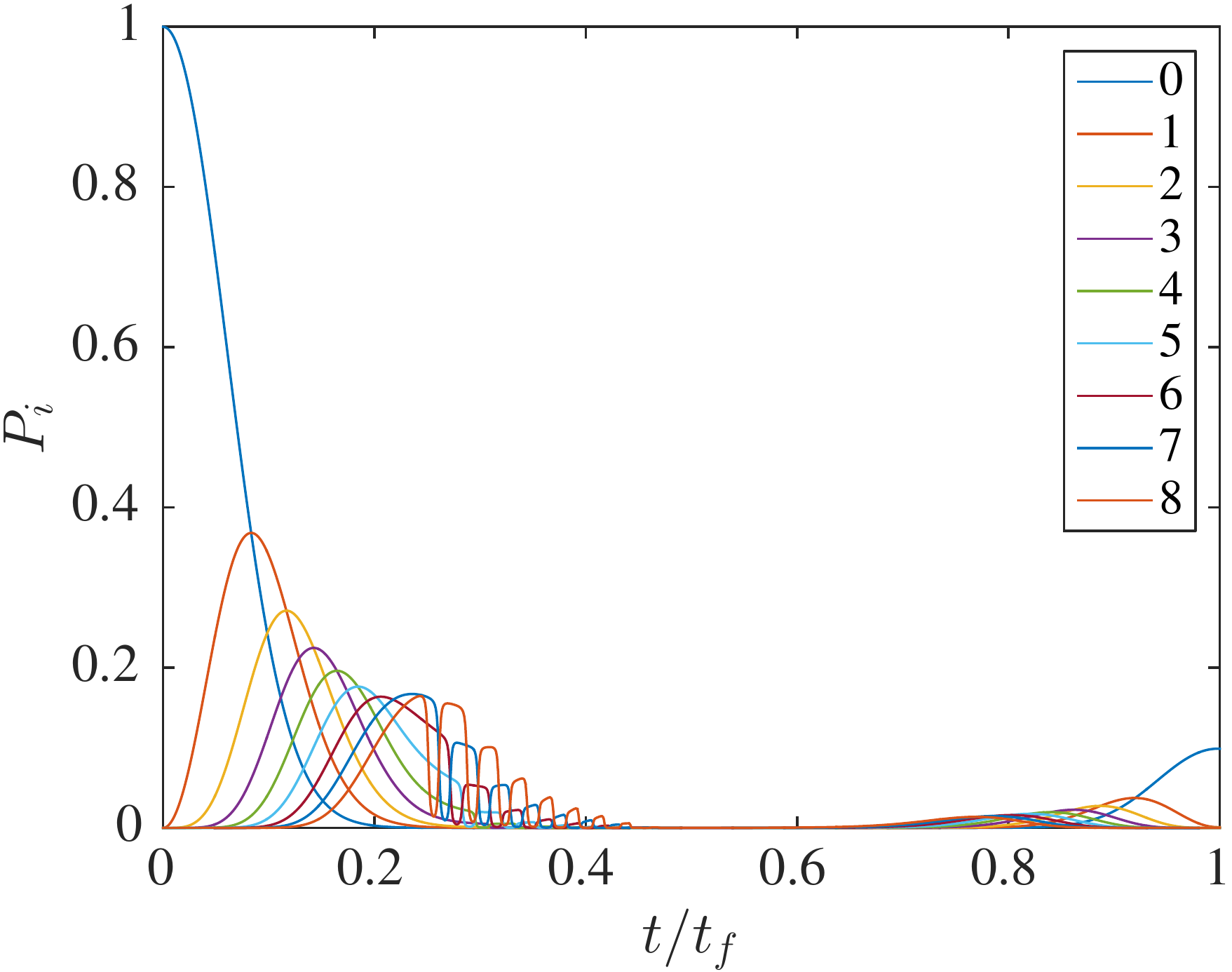}}
   \subfigure[\ Moving Plateau]{\includegraphics[width=0.6\columnwidth]{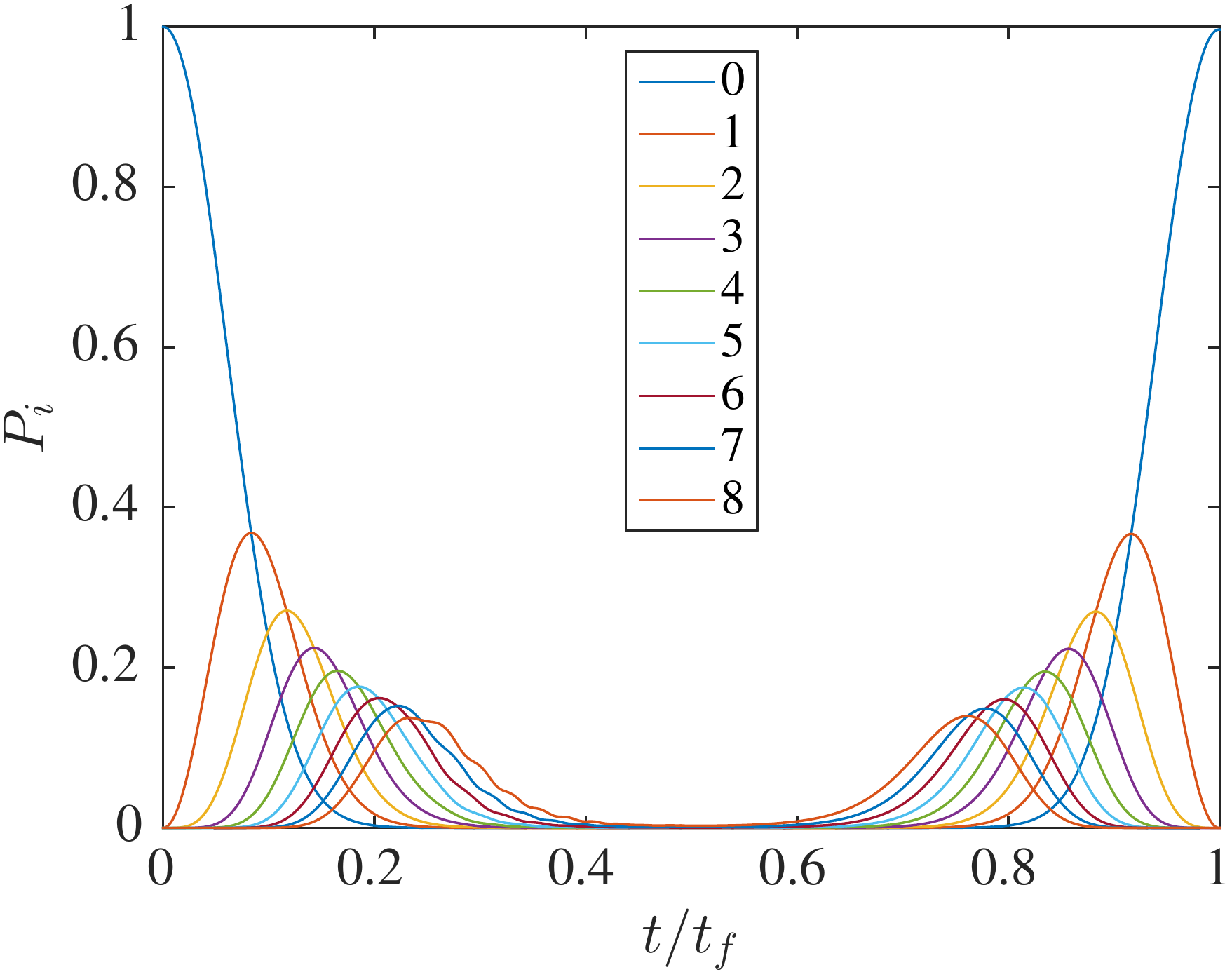}}
   \subfigure[\ 0.5-Rectangle]{\includegraphics[width=0.6\columnwidth]{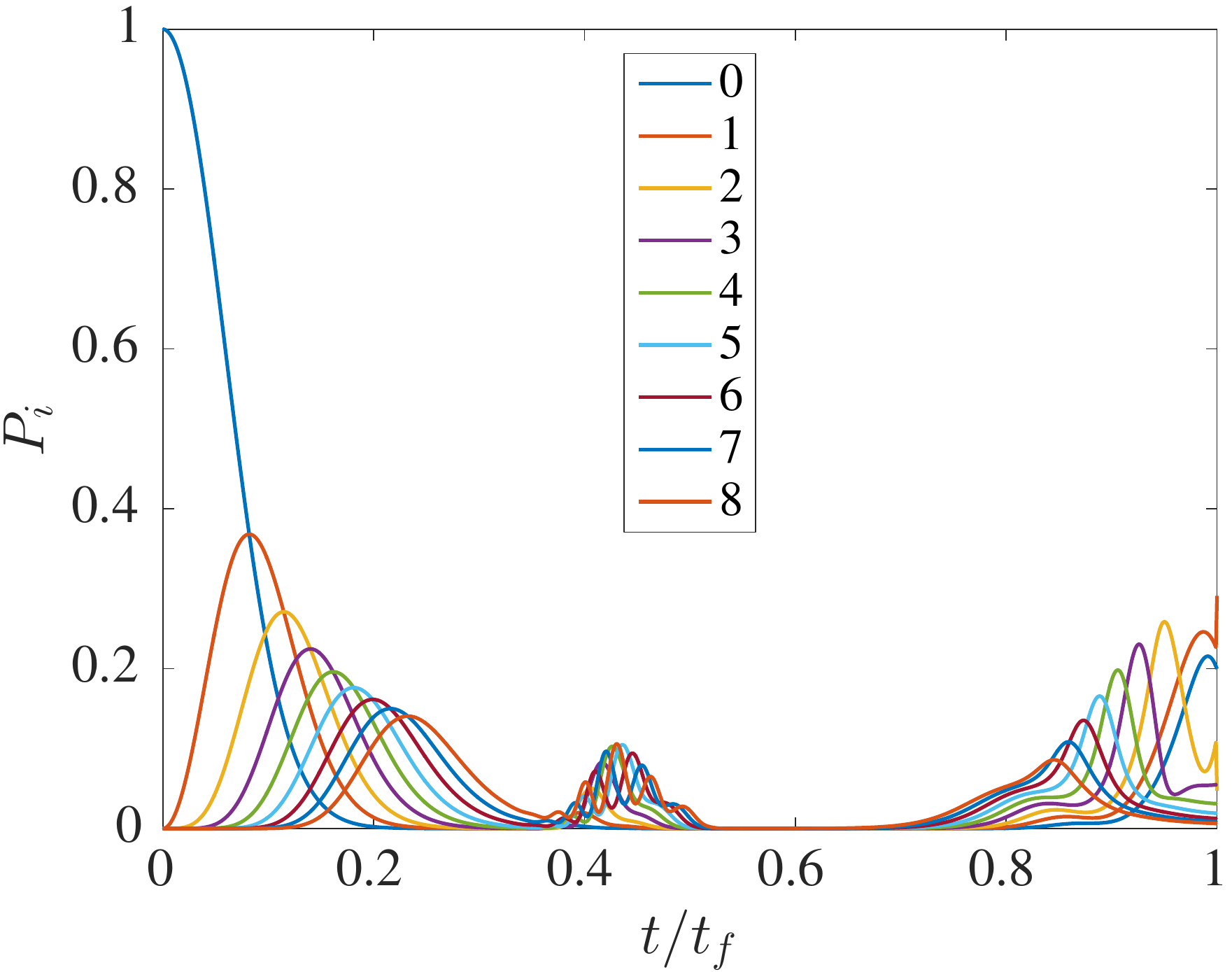}}
   \caption{(a-c) The optimal TTS for the {Spike}, {Moving Plateau}, and {0.5-Rectangle} problems respectively.  Inset for (a) and (c): the optimal TTS for small problem sizes, where we observe SVD at first scaling poorly. However, as $n$ grows, this difficulty vanishes and it quickly outperforms DQA. (d-f) Population $P_i$ in the $i$-th energy eigenstate along the diabatic QA evolution at the optimal TTS. We observe similar diabatic transitions for these problem (shown are the cases with $n=512$ and $t_f = 9.85$ for the Spike, $n=512$ and $t_f = 10$ for the Moving Plateau, and $n=529$ and $t_f = 9.8$ for the 0.5-Rectangle) as we observed for the Fixed Plateau [Fig.~\ref{fig:QA_EnergyOverlap}].}
\label{fig:SpikeMovingVanDam_TTSopt}
\end{figure*}   
%
\subsection{Spin Vector Dynamics} \label{subsec:SVDplat}

Given the absence of tunneling in the time-optimal quantum evolution, we are motivated to consider the behavior of Spin-Vector Dynamics (SVD), which arise in a semi-classical limit (see Appendix~\ref{app:methods_SVD} for an overview of this algorithm).  As we show in Fig.~\ref{fig:TTSScaling}, the scaling of SVD's optimal TTS also saturates to a constant time, i.e., $({t_f})_{\textrm{opt}}$. Moreover, it reaches this value earlier (as a function of problem size $n$) than DQA, thus outperforming DQA for small problem sizes, while for large enough $n$ both achieve $\mathcal{O}(1)$ scaling. As seen in the inset, SVD's advantage persists as a function of $u$ at constant $n$. 

The dynamics of DQA are well approximated by SVD until close to the end of the evolution, as shown in Fig.~\ref{fig:QA_O3_AveHW}: the trace-norm distance between the instantaneous states of DQA and SVD is almost zero until $t/t_f \approx 0.8$, after which the states start to diverge. This suggests that SVD is able to replicate the DQA dynamics up to this point, and only deviates because it is more successful at repopulating the ground state than DQA.

In Fig.~\ref{fig:SpikeMovingVanDam_TTSopt}, we show that SVD's speedup over AQA is replicated for the {Spike}, {Moving Plateau}, and 0.5-Rectangle problems as well.  Remarkably, while the 0.5-Rectangle problem has an exponentially small gap [see Eq.~\eqref{eqt:VanDam} and Fig.~\ref{fig:MinGaps}], SVD and DQA both achieve $\mathcal{O}(1)$ scaling, and hence the diabatic cascades provides an exponential speedup relative to AQA.

It is important to note that SVD is ineffective if one desires to simulate the adiabatic evolution. In the absence of unitary dynamics (which allow for tunneling) or thermal activation (to thermally hop over the barrier), SVD gets trapped behind the barrier that forms in the semi-classical potential separating the two degenerate minima [see Fig.~\ref{fig:Veff}] and is unable to reach the new global minimum. In this sense, SVD does not enjoy the guarantee provided by the quantum adiabatic theorem for the unitary evolution~\cite{Jansen:07,Amin:09,lidar:102106}, that for sufficiently long $t_f$ dictated by the adiabatic condition, the ground state can be reached with any desired probability.

Likewise, it is important to keep in mind the distinction between a classical algorithm being able to match, or sometimes outperform, a quantum algorithm (as SVD does here), and the classical algorithm approximating the evolution or instantiating the physics of the quantum algorithm (as SVD fails to do here). Indeed, in both the diabatic and adiabatic regimes, SVD provides a poor approximation to the instantaneous quantum state. For example, in the diabatic regime, it is clear from Fig.~\ref{fig:QA_O3_AveHW} that the trace-norm distance between the instantaneous SVD state and the instantaneous quantum state starts to increase significantly for $s\gtrsim 0.8$.  In the same spirit, consider the instantaneous semi-classical ground state, i.e., the spin-coherent state evaluated at the minimum of the spin-coherent potential, which may be suspected to provide a good approximation to the instantaneous quantum ground state, but does not as shown in Fig.~\ref{fig:GSdistanceSC-QA}. Thus the unentangled semi-classical ground state also fails to provide a good approximation to the quantum ground state.

\subsection{Simulated Quantum Annealing}\label{subsec:SQAplat}
Simulated Quantum Annealing (SQA) is a quantum Monte Carlo algorithm performed along the annealing schedule (see Appendix~\ref{app:methods_SQA} for further details). It is often used as a benchmark against which QA is compared (though see Ref.~\cite{Heim:2014jf} for caveats).  SQA scales better than SA for the Fixed Plateau problem using the threshold criterion (see Fig.~\ref{fig:ThC}).  In order to understand why SQA enjoys an advantage over SA using this benchmark metric, it is useful to study the behavior of the state of SQA along the annealing schedule.  We show the behavior of $\langle \mathrm{HW} \rangle$ for SQA in Fig.~\ref{fig:AveHWSQAA}, where we observe that SQA at the optimal number of sweeps (the case of $1500$ sweeps shown in Fig.~\ref{fig:AveHWSQAA}) does \emph{not} follow the instantaneous ground state.  Instead it reaches the threshold success probability by thermally relaxing to the ground state \emph{after} the minimum gap point (and tunneling event) of the quantum Hamiltonian.  Therefore, SQA's advantage over SA stems from the fact that it thermalizes in a different energy landscape than SA.

We also contrast the behavior of SQA and AQA using the threshold criterion.  While SQA is able to follow the instantaneous ground state for a sufficiently large number of sweeps and thus mimic the tunneling of AQA (see Fig.~\ref{fig:AveHWSQAA}), this is not the optimal way for it to reach the threshold criterion.  For a fixed threshold success probability, the process of thermal relaxation after the minimum gap point uses fewer sweeps (and hence is more efficient) than following the instantaneous ground state closely throughout the anneal \footnote{This may be an artifact of our implementation of SQA, whereby we only include cluster updates along the imaginary-time direction and not along the spatial (problem) direction. An implementation with space-like cluster updates may allow SQA via its thermal relaxation to mimic the tunneling of AQA more efficiently.  Whether this is the case will be addressed in future work.}. This is in contrast to AQA, where tunneling is the only means for it to reach a high success probability and nevertheless is more efficient than SQA, as seen in Fig.~\ref{fig:ThC}.
  
\begin{figure}[t] 
   \centering
   \includegraphics[width=0.8 \columnwidth]{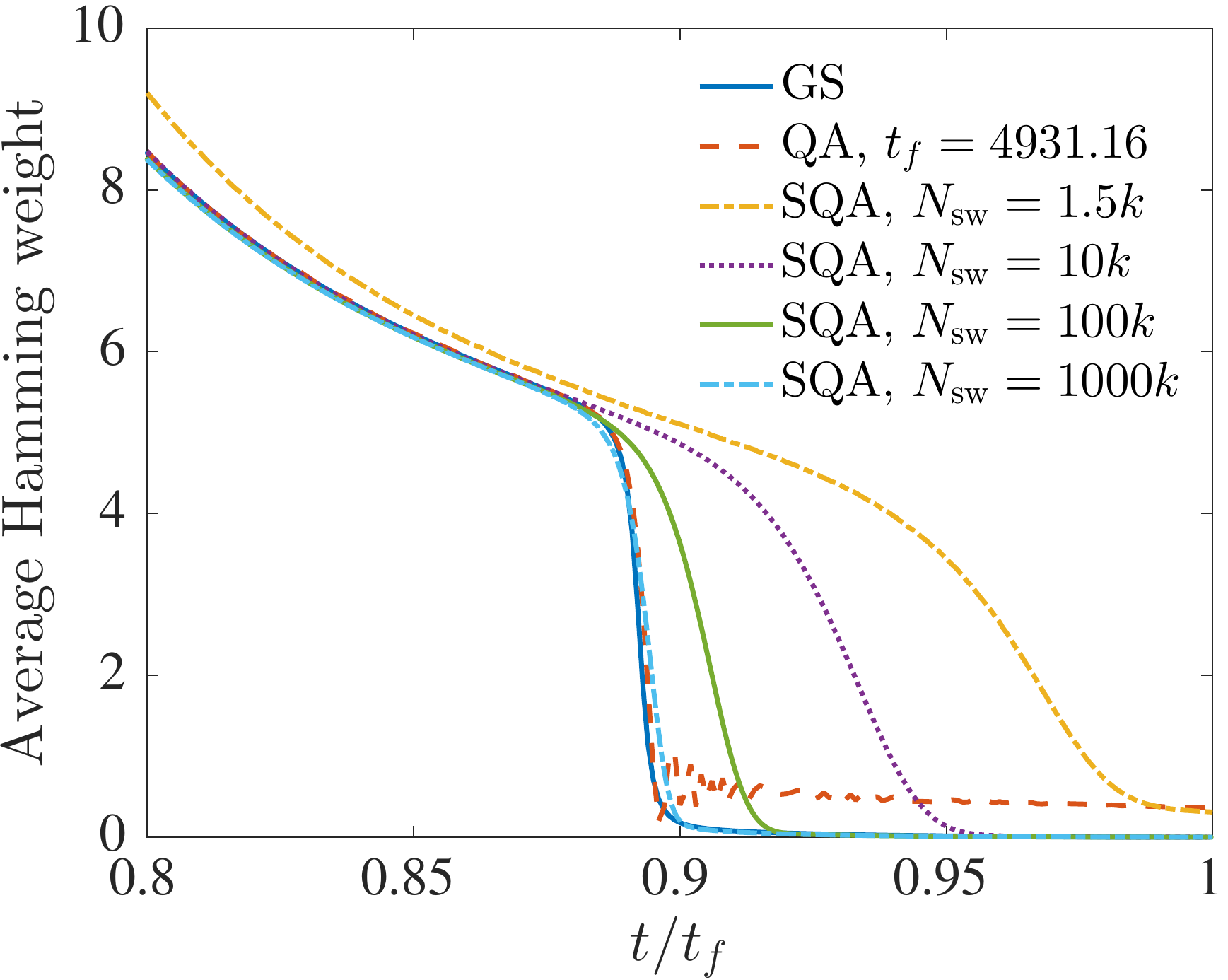} 
   \caption{The expectation value of the Hamming weight operator for the quantum ground state, SQA, and AQA for the {Fixed Plateau} problem with $n=512$, $l = 0$, and $u = 6$ and annealing time chosen so as to reach a success probability of $0.9$.  The expectation value for SQA ($\beta = 30$, $N_{\tau} = 64$) at a given $t/t_f$ is calculated by averaging over the Hamming weight of the $N_\tau$ imaginary time states at that time and over $10^5$ independent trials.  The case of $1500$ sweeps is the minimum number of sweeps required for SQA (in `annealer' mode) to reach the threshold ground state probability of $p_0 =0.9$, and similarly for the annealing time value of $t_f = 4931.16$ for AQA.  While AQA is able to approximately follow the quantum ground state (i.e., the evolution is very close to being adiabatic), the optimal SQA evolution (i.e., that requires the fewest sweeps) for achieving the threshold criterion involves \emph{not} following the ground state at the minimum gap point and instead thermally relaxing towards the ground state after this point.  As shown using the higher $N_{\textrm{sw}}$ values, only after increasing the number of sweeps by more than two orders of magnitude does SQA follow the instantaneous ground state closely.}
   \label{fig:AveHWSQAA}
\end{figure}

We note that SQA's threshold criterion advantage over SA does not carry over to the optimal TTS criterion. In fact, we find that using the optimal TTS criterion, SQA scales as $O(n^{1.5})$, while SA scales as  $O(n)$, as seen in Fig.~\ref{fig:TTSScaling}. The reason for the latter scaling is that the optimal number of sweeps for SA is $1$, simply because there is a small but non-zero probability that in the first sweep all the $1$s are flipped to $0$s.

\section{Discussion} \label{sec:discuss}
%
\begin{figure}[t] 
   \centering
\includegraphics[width=0.32\textwidth]{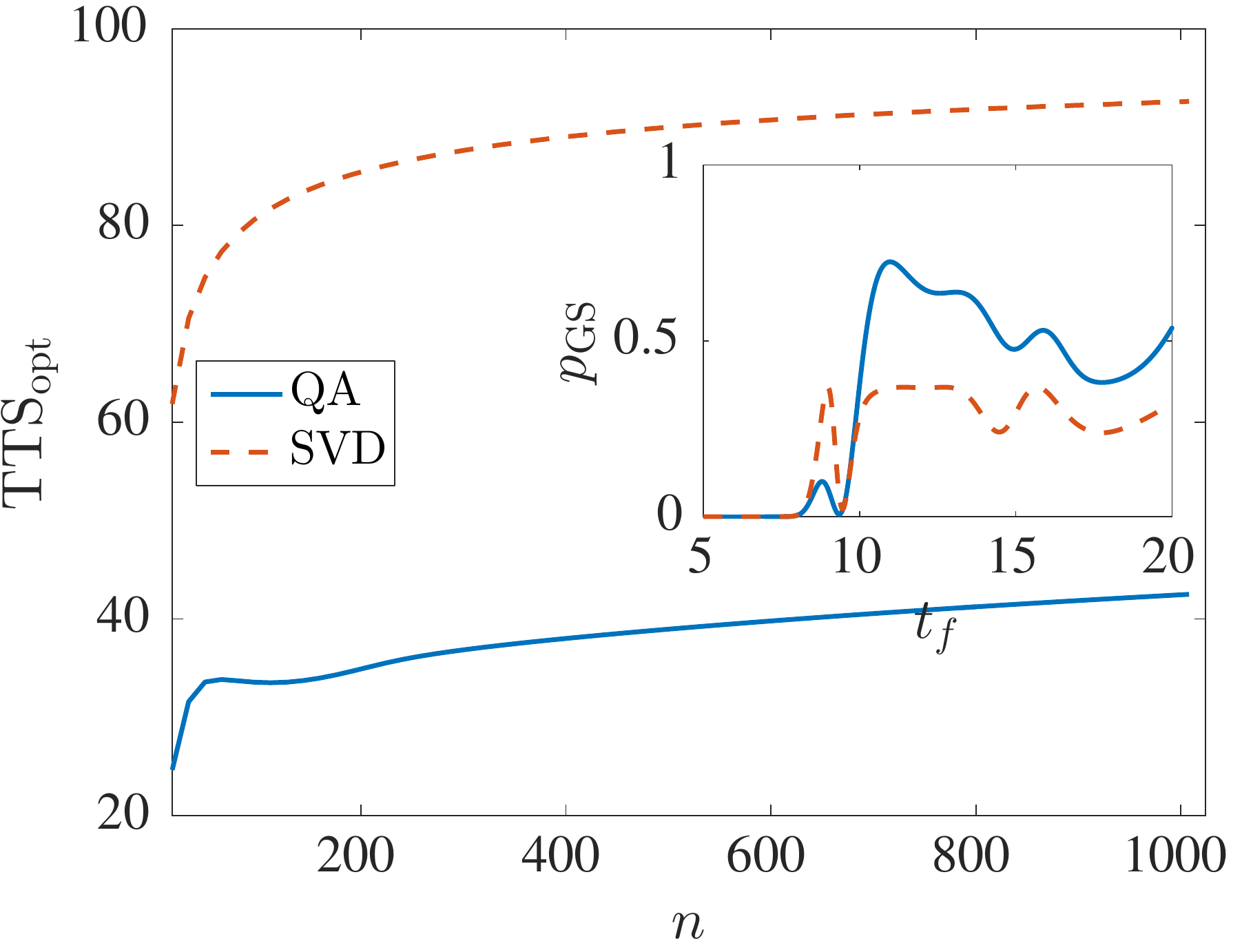}
   \caption{The optimal TTS for the potential given in Eq.~\eqref{eqt:jarret}.  QA outperforms SVD over the range of problem sizes we were able to check. The reason can be seen in the inset, which displays the ground state probability for SVD and QA for different annealing times $t_f$, with $n = 512$.  The optimal annealing time for SVD occurs at the first peak in its ground state probability ($t_f \approx 8.98$), whereas the optimal annealing time for QA occurs at the much larger \emph{second} peak in its ground state probability ($t_f \approx 10.91$).}
   \label{fig:ConvexPGS}
\end{figure}

It is often assumed that the shape of the final cost-function determines how hard it is for QA to solve the problem (in fact, this was partly the motivation for the {Spike} problem in Ref.~\cite{Farhi-spike-problem}), and that potentials with tall and thin barriers should be advantageous for AQA, since this is where tunneling dominates over thermal hopping (e.g., \cite[p.215]{Heim:2014jf}, \cite[p.1062]{RevModPhys.80.1061}, \cite[p.226]{Suzuki-book}). It is then assumed that problems where the final potential has this feature are those for which there should be a quantum speedup. We have given several counterexamples to such claims, and shown that tunneling is not necessary to achieve the optimal TTS. Instead, the optimal trajectory may use diabatic transitions to first scatter completely out of the ground state and return via a sequence of avoided level crossings.  
That diabatic transitions can help speed up quantum algorithms has also been noted and advantageously exploited in Refs.~\cite{Somma:2012kx, crosson2014different, Hen2014,Steiger:2015fk}. Moreover, we have shown that the instantaneous semi-classical potential provides important insight into the role of tunneling, while the final cost function can be rather misleading in this regard. 

While both adiabatic and diabatic QA outperform SA for the Fixed Plateau problem, the faster quantum diabatic algorithm is not better than the classical SVD algorithm for this problem.  The PHWO problems due to Reichardt~\cite{Reichardt:2004}, which includes problems very similar to the Fixed Plateau, have widely been considered an example where tunneling provides a quantum advantage; we have shown that this holds if one limits the comparison to SA, but that there is in fact no quantum speedup in the problem when one compares the quantum diabatic evolution (which outperforms adiabatic quantum annealing) to SVD.

These results of the diabatic optimal evolution extend beyond the plateau problems: 
even the {Spike} problem studied in Ref.~\cite{Farhi-spike-problem}---which is in some sense the antithesis of the plateau problem since it features a sharp spike at a single Hamming weight---also exhibits the diabatic-beats-adiabatic phenomenon, indicating that tunneling is not required to efficiently solve the problem. Thus diabatic evolution, especially via diabatic cascades, is an important and relatively unexplored mechanism in quantum optimization that is different from tunneling. The fact that we observe a speedup relative to AQA for several problems, especially an exponential speedup for the 0.5-Rectangle, motivates the search for algorithms exploiting this mechanism and may yield fruitful results. However, we also already know that diabatic cascades are not generic. E.g., we have checked that this mechanism is absent in the {Grover} and {Precipice} problems, even though the Grover problem is equivalent to a `giant' plateau problem.

In summary, our work provides a counterargument to the widely made claims that tunneling should be understood with respect to the final cost function, that speedups due to tunneling require tall and thin barriers; and that tunneling is needed for a quantum speedup in optimization problems. Which features of Hamiltonians of optimization problems favor diabatic or adiabatic algorithms remains an open question, as is the understanding of tunneling for non-permutation-symmetric problems.

We finish on a positive note for QA. We have given several examples where SVD outperforms QA, e.g., the Spike problem~\cite{Farhi-spike-problem}. However, we make no claim that SVD will always have an advantage over QA. A simple and instructive example comes from the class of cost functions that are convex in Hamming weight space, which have a constant minimum gap \cite{jarret2014fundamental}:
\beq 
\label{eqt:jarret}
f(x) = \begin{cases} 2, &  |x| =0 \\
\abs{x}, & \text{otherwise} \end{cases} \ .
\eeq
We have observed similar diabatic transitions for this problem as for the Fixed Plateau (not shown), but find that DQA outperforms SVD, as shown in Fig.~\ref{fig:ConvexPGS}.  This results because the optimal TTS for QA occurs at a slightly higher optimal annealing time, i.e., there is an advantage to evolving somewhat more slowly, though still far from adiabatically. Thus, this provides an example of a ``limited'' quantum speedup \cite{speedup}.

\acknowledgments{
Special thanks to Ben Reichardt for insightful conversations and for suggesting the plateau problem, and to Bill Kaminsky for inspiring talks~\cite{Kaminsky:2014,Kaminsky:USC-talk-2014}. We also thank Itay Hen, Joshua Job, Iman Marvian, Milad Marvian, and Rolando Somma for useful comments. The computing resources were provided by the USC Center for High Performance Computing and Communications and by the Oak Ridge Leadership Computing Facility at the Oak Ridge National Laboratory, which is supported by the Office of Science of the U.S. Department of Energy under Contract No. DE-AC05-00OR22725.  This work was supported under ARO grant number W911NF-12-1-0523 and ARO MURI Grant No. W911NF-11-1-0268.}

\newpage
\appendix

\section{Review of the Hamming weight problem and Reichardt's bound for PHWO problems} \label{app:review}
Here we closely follow Ref.~\cite{Reichardt:2004}.

\subsection{The Hamming weight problem}

We review the analysis within QA of the minimization of the Hamming weight function 
$f_\mathrm{HW} (x)=\abs{x}$, which counts the number of $1$'s in the bit string $x$. This problem is of course trivial, and the analysis given here is done in preparation for the perturbed problem.

For the adiabatic algorithm, we start with the driver Hamiltonian,
\beq \label{eq:HD}
H_D = \frac{1}{2} \sum_{i=1}^n \left( \ident_i - \sigma^x_i \right) = \sum_{i=1}^n \ket{-}_i\bra{-} \ ,
\eeq
which has $\ket{+}^{\otimes n}$ as the ground state.

The final Hamiltonian for the cost function $f_\mathrm{HW} (x)$ is
\beq
H_P = \frac{1}{2} \sum_{i=1}^n \left( \ident_i - \sigma^z_i \right) = \sum_{i=1}^n \ket{1}_i\bra{1} \ ,
\eeq
which  has $\ket{0}^{\otimes n}$ as the ground state.

We interpolate linearly between $H_D$ and $H_P$:
\begin{align}
H(s) &= (1-s)H_D + sH_P; \quad s\in[0,1] \\
&= \sum_{i=1}^n \frac{1}{2}\begin{pmatrix}  1-s & -(1-s) \\ -(1-s) & 1-s\end{pmatrix}_i + \begin{pmatrix}  0 & 0 \\ 0 & s\end{pmatrix}_i,\\
&= \frac{1}{2}\sum_{i=1}^n \left[\openone -(1-s)\sigma^x_i -s\sigma^z_i\right] \equiv H_i(s) \ .
\label{eqt:PHWSingle}
\end{align}
We note that $H_i(s)$ in Eq.~\eqref{eqt:PHWSingle} is similar to a variant of the Landau-Zener (LZ) Hamiltonian with finite coupling duration \cite{vitanov1996landau,vitanov1996erratum}, for which the Schr\"{o}dinger equation has an analytical solution, except that there it is assumed that the $\sigma^x$ term is constant and only the $\sigma^z$ terms has a (linear) time dependence over a finite interval. The analytical solution of the problem obtained in Ref.~\cite{vitanov1996landau} is rather complicated, and for our purposes a simpler approach suffices.

Since there are no interactions between the qubits, the adiabatic problem can be solved exactly by diagonalizing the Hamiltonian acting on each qubit separately. For each term, we have the energy eigenvalues $E_{\pm}(s)$,
\beq
E_{\pm}(s) = \frac{1}{2} (1 \pm \Delta(s)); \quad \Delta(s) \equiv \sqrt{1-2s+2s^2},
\eeq
and associated eigenvectors,
\beq
\ket{v_{\pm}(s)} = \frac{1}{\sqrt{2\Delta (\Delta \mp s)}} \left[ \mp(\Delta \mp s) \ket{0} + (1-s) \ket{1} \right] \ .
\label{eq:v(s)}
\eeq
The ground state of $H(s)$ is 
\beq
\ket{\psi_\mathrm{GS}(s)}= \ket{v_{-}(s)}^{\otimes n} \ .
\label{eq:psiGS}
\eeq
The gap is given by,
\bes
\begin{align}
\text{Gap}[H(s)] &= H(s) \ket{v_{+}(s)}\otimes\ket{v_{-}(s)}^{\otimes (n-1)} \nonumber \\
& \hspace{3cm} - H(s)\ket{v_{-}(s)}^{\otimes n} \\
&= E_+ + (n-1)E_- - nE_- \\
&= E_+ - E_- \\
&=\Delta(s) \ .
\end{align}
\ees
The gap is minimized at $s=\frac{1}{2}$ with minimum value $\Delta(\frac{1}{2})=\frac{1}{\sqrt{2}}$. The minimum gap is independent of $n$ and hence does not scale with problem size. Therefore we can predict an adiabatic run time to be given by,
\beq
t_f = \mathcal{O}\left( \frac{\|\partial_s H\|}{\Delta^2} \right)= \mathcal{O}(n)\ ,
\eeq
where the $n$-dependence is solely due to $\| \partial_s H\|$ (see Appendix-\ref{app:methods_QD}).  However, this is actually a loose upper bound. We next provide separate numerical and analytical arguments to demonstrate that the actual scaling for AQA is $\mathcal{O}(n^{0.5})$.

\subsubsection{Numerical argument} 
Suppose the adiabatic algorithm runs long enough so as to attain a desired success probability, $p_0$. Let this time be $t_f$. Using the fact that the quantum evolution of the plain Hamming Weight problem is the evolution of $n$ non-interacting qubits, we can express the global ground-state probability in terms of the ground-state probabilities of single qubits. So, if the single qubit ground-state probability for this run-time is $p_{\mathrm{GS}}(t_f)$, then we must have $p_0 = p_{\mathrm{GS}}(t_f)^n$.  

We find numerically (see Fig.~\ref{fig:PlainHWpGS}) that $p_{\mathrm{GS}}(t_f)$ has an envelope that is  excellently approximated by:
\beq
p_{\mathrm{GS}}(t_f) = 1 - \frac{1}{t_f^2} + \mO(t_f^{-3}) \ ,
\eeq
for sufficiently large $t_f$.  We therefore can write:
\beq
\ln p_0 = n \ln p_{\mathrm{GS}}(t_f) \approx n \ln \left( 1  - \frac{1}{t_f^2} \right) \ ,
\eeq
and upon expanding the $\ln$, we extract a tighter scaling for our adiabatic time:
\beq
t_f = \mO(n^{1/2}) \ .
\eeq

\begin{figure}[ht] 
   \centering
   \includegraphics[width=0.32\textwidth]{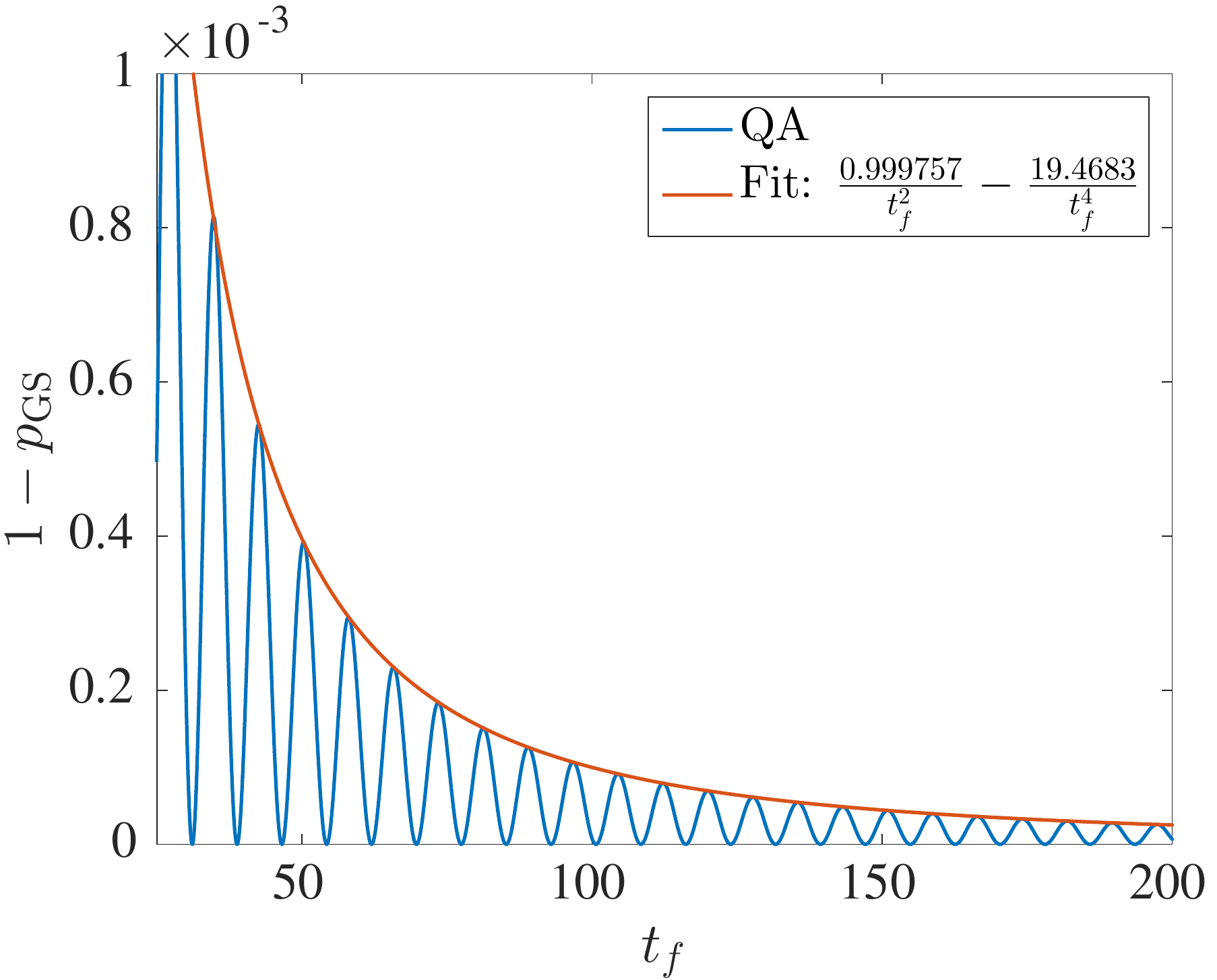} 
   \caption{Ground state probability for a single qubit for different total time $t_f$, evolving under the Plain Hamming Weight Hamiltonian in Eq.~\eqref{eqt:PHWSingle}.}
   \label{fig:PlainHWpGS}
\end{figure}

\subsubsection{Analytical argument} 
Here, we invoke a result due to Boixo and Somma~\cite{Boixo:2010fk}. This result states,
\begin{theorem}[\cite{Boixo:2010fk}] To adiabatically prepare a final eigenstate using a Hamiltonian evolution $H(s)$ requires time that scales at least as $\mathcal{O}\left(\frac{L}{\Delta}\right)$. Here $L$ is the eigenpath length, 
\beq
L \equiv \int_0^1 \| \ket{\partial_s \psi(s)} \| ds,
\eeq
where $\ket{\psi(s)}$ is the eigenpath traversed to reach the final eigenstate.
\end{theorem}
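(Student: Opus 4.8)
The plan is to establish the matching lower bound to the adiabatic theorem: if the evolved state $\ket{\Psi(s)}$ is required to remain $\varepsilon$-close in Hilbert-space norm, up to an overall phase, to the instantaneous target eigenstate $\ket{\psi(s)}$ for all $s\in[0,1]$ --- which is the precise meaning of ``adiabatically prepare'' --- then the total time must obey $t_f=\Omega(L/\Delta)$. First I would pass to the co-moving frame: strip off the dynamical phase by setting $\ket{\widetilde\Psi(s)}=e^{it_f\int_0^s E_0(s')\,ds'}\ket{\Psi(s)}$, so that $\partial_s\ket{\widetilde\Psi}=-it_f\bigl(H(s)-E_0(s)\bigr)\ket{\widetilde\Psi}$, and fix the gauge of the eigenpath by parallel transport, $\braket{\psi(s)}{\partial_s\psi(s)}=0$. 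I would then decompose $\ket{\widetilde\Psi(s)}=a(s)\ket{\psi(s)}+\ket{\eta(s)}$ with $\braket{\psi(s)}{\eta(s)}=0$ and initial condition $\ket{\eta(0)}=0$.

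Using $(H-E_0)\ket{\psi}=0$ together with the parallel-transport gauge, this yields the coupled system
\begin{align}
\partial_s a &= \braket{\partial_s\psi}{\eta}, \\
\partial_s\ket{\eta} &= -it_f (H-E_0)\ket{\eta} - a\ket{\partial_s\psi} - (\partial_s a)\ket{\psi}.
\end{align}
The key structural input is that $H(s)-E_0(s)$ restricted to the orthogonal complement of $\ket{\psi(s)}$ is invertible with $\bigl\|(H-E_0)^{-1}\bigr\|\le 1/\Delta$. Solving the equation for $\ket{\eta}$ with the propagator $\mathcal{U}(s,s')$ of $-it_f(H-E_0)$, rewriting the source as $\ket{\partial_s\psi}=(H-E_0)(H-E_0)^{-1}\ket{\partial_s\psi}$, and integrating by parts in $s'$ via $\partial_{s'}\mathcal{U}(s,s')=it_f\,\mathcal{U}(s,s')\bigl(H(s')-E_0(s')\bigr)$, one exposes an overall factor $1/t_f$, so that the forced (quasi-stationary) part of the defect has size $\|\eta(s)\|\sim\|\partial_s\psi(s)\|/(t_f\Delta)$ and the remainder is of higher order in $1/(t_f\Delta)$. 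Because the weight of this forced defect on the $k$-th excited state carries a factor $1/(E_k-E_0)^2$, the leakage stays concentrated near the bottom of the spectrum and the relevant energy denominator is genuinely the gap, not the full spectral width.

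For the lower bound I would argue contrapositively. The tracking requirement $\|\eta(s)\|\le\varepsilon$ cannot be met unless $t_f\gtrsim\|\partial_s\psi(s)\|/(\varepsilon\Delta)$ at the value of $s$ where $\|\partial_s\psi\|$ is largest --- which for the PHWO problems studied here is at the minimum-gap point --- and since $\max_{s\in[0,1]}\|\partial_s\psi(s)\|\ge\int_0^1\|\partial_s\psi(s)\|\,ds=L$, this already gives $t_f=\Omega(L/\Delta)$. As an independent check I would invoke the Anandan--Aharonov quantum speed limit: the Fubini--Study arclength swept by $\ket{\widetilde\Psi}$ equals $t_f\int_0^1\Delta E_{\widetilde\Psi}(s)\,ds$, this arclength is bounded below by essentially $L$ since a state that tracks the eigenpath must shadow it, and when the leakage stays in the adjacent band $\Delta E_{\widetilde\Psi}(s)\approx\|\eta(s)\|\,\Delta(s)\le\varepsilon\,\Delta(s)$, which reproduces the same $t_f\gtrsim L/\Delta$ scaling.

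The main obstacle is that the defect identity from the equation for $\ket{\eta}$ is naturally an \emph{upper} bound on $\|\eta\|$ --- the content of the usual adiabatic theorem --- so converting it into a genuine \emph{lower} bound on $t_f$ requires two things to be pinned down. First, one must verify that the forced response to the source cannot be cancelled while the evolution stays on the path; a cancellation would amount to a diabatic shortcut in which $\ket{\widetilde\Psi}$ leaves and re-enters the ground state, which the tracking hypothesis excludes, so that $\|\partial_s\psi\|/(t_f\Delta)$ serves as a lower bound for $\|\eta\|$ and not merely an upper bound. Second, one must control the passage from the pointwise estimate to the integrated length $L$ and, if one wants $\Delta$ to be the \emph{minimum} gap, account for the correlation between where $\|\partial_s\psi\|$ is large and where the gap is small; this bookkeeping, together with the geometry of the eigenpath, is where the real work lies.
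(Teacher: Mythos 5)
First, a point of comparison: the paper does not prove this statement at all --- it is imported verbatim from Ref.~\cite{Boixo:2010fk} and used only to evaluate $L$ for the plain Hamming weight problem --- so there is no internal proof to measure your argument against, and the proof in the cited work is of a different, global character (a statement about what can be guaranteed for eigenpath traversal given only $L$ and $\Delta$, shown tight by Grover-type paths), not an inversion of the adiabatic error expansion. Your route has a genuine gap precisely at the step you yourself flag. The integration-by-parts treatment of $\ket{\eta}$ produces an \emph{upper} bound $\|\eta\|\lesssim \|\ket{\partial_s\psi}\|/(t_f\Delta)$, hence a \emph{sufficient} condition on $t_f$; the theorem needs the matching lower bound on $\|\eta\|$, and your justification (``a cancellation would be a diabatic shortcut excluded by the tracking hypothesis'') is circular --- nothing in the hypothesis $\|\eta\|\le\varepsilon$ forces the defect to be as large as the first-order formula suggests. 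The Anandan--Aharonov check inherits the same problem: under the tracking hypothesis alone one only gets $\Delta E \le \varepsilon\,\|H-E_0\|$, i.e.\ the spectral width, not the gap; the assertion that the leakage sits in the adjacent band is borrowed from the first-order forced-response form, which is again the upper-bound machinery and cannot be assumed about the true evolved state in a lower-bound argument. That route therefore yields at best $t_f\gtrsim (L-O(\varepsilon))/(\varepsilon\|H-E_0\|)$, which for the Hamming-weight Hamiltonian ($\|H-E_0\|\sim n$, $L\sim\sqrt{n}$) is vacuous rather than the desired $\Omega(\sqrt{n})$.

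More fundamentally, a pointwise, per-instance argument of the form you sketch cannot establish the bound with $\Delta$ the minimum gap, because that per-instance statement is false without extra hypotheses. Consider three levels: $E_0=0$, an uncoupled level at $E_1=\delta\ll 1$, and a level at $E_2=M\gg 1$, with $H(s)$ rotating the ground state entirely within the $\{\ket{0},\ket{2}\}$ plane. Then $L=O(1)$ and the minimum gap is $\delta$, yet the dynamics never populates $\ket{1}$ and adiabatic tracking succeeds in time $O(L/M)\ll L/\delta$: the quantity controlling the defect is $\|(H-E_0)^{-1}P^{\perp}\ket{\partial_s\psi}\|$-type objects, which need not scale as $\|\ket{\partial_s\psi}\|/\Delta_{\min}$ when $\ket{\partial_s\psi}$ has no weight on the small-gap level. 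This is exactly the ``correlation between where $\|\ket{\partial_s\psi}\|$ is large and where the gap is small'' that your last sentence defers, and it is not bookkeeping --- it is the missing idea. A correct proof must either add hypotheses tying the motion of the eigenstate to the small-gap subspace or, as in Ref.~\cite{Boixo:2010fk}, argue globally over the class of paths with given $L$ and $\Delta$; your local analysis (including the aside that the maximum of $\|\ket{\partial_s\psi}\|$ sits at the minimum-gap point for the PHWO problems, which is unproven and in any case irrelevant to a general theorem) cannot supply it.
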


We analytically compute $L$ for the ground-state path in the plain Hamming weight problem, and show that it scales as $\mathcal{O}(\sqrt{n})$. Since we know that in this case $\Delta = \mathcal{O}(1)$, we conclude the adiabatic algorithm will require at least $\mathcal{O}(\sqrt{n})$ time.

Recall that the instantaneous ground state is [Eq.~\eqref{eq:psiGS}]
$\ket{\psi_\mathrm{GS}(s)}= \bigotimes_{i=1}^n \ket{v_-^i(s)}$, where $\ket{v_-^i(s)} =  \sqrt{1-q(s)} \ket{0}_i + \sqrt{q(s)} \ket{1}_i$, with [Eq.~\eqref{eq:v(s)}]
\beq
q(s) = \frac{(1-s)^2}{2\Delta(\Delta+s)}\ .
\label{eq:q(s)}
\eeq
Differentiating:
\beq
\frac{d}{d s} \ket{\psi_\mathrm{GS}(s)} = \sum_{i=1}^n \left(\bigotimes_{j \neq i} \ket{v_-^j(s)} \otimes \frac{d}{d s}\ket{v_-^i(s)}\right) \ ,
\eeq
so that
\begin{align}
&\| \ket{\partial_s \psi_\mathrm{GS}(s))} \|^2 \equiv \wich{\partial_s \psi_\mathrm{GS}(s))|\partial_s \psi_\mathrm{GS}(s))} \\
&= n \| \frac{d}{d s}\ket{v_-^i(s)} \|^2 + n(n-1) |\wich{v_-^i(s)|\frac{d}{d s}|v_-^i(s)}|^2.
\end{align}
The term $\| \frac{d}{d s}\ket{v_-^i(s)} \|$ does not have any scaling with $n$, and the second term vanishes because it is equal to $\frac{1}{2} \frac{d}{d s}\wich{v_-^i(s)|v_-^i(s)}=0$, where we use the fact that $\ket{v_-^i(s)}$ is real-valued and normalized. 
Thus, taking the square root on both sides and integrating from $0$ to $1$, we obtain the $\sqrt{n}$ scaling of $L$. 

If we desire to fix the constant in front of $L$, a straightforward calculation will show that
\beq
\int_0^1 \| \frac{d}{d s}\ket{v_-^i(s)} \| ds = \pi/4 \ .
\eeq
\subsection{Reichardt's bound for PHWO problems} 
\label{app:ReichardtBound}
Here we review Reichardt's derivation of the gap lower-bound for general PHWO  problems, but provide additional details not found in the original proof~\cite{Reichardt:2004}. 

We use the same initial Hamiltonian [Eq.~\eqref{eq:HD}] and linear interpolation schedule as before, $\tilde{H}(s) = (1-s) H_D + s \tilde{H}_P$, and choose the final Hamiltonian to be
\beq 
\tilde{H}_P = \sum_{x\in \{0,1\}^n} \tilde{f} (x) \ket{x}\bra{x}\ ,
\eeq
where
\beq
\tilde{f}(x) = \begin{cases} \abs{x} + p(x) & l<\abs{x}<u \ ,\\
\abs{x} & \text{elsewhere} \end{cases} \ , 
\eeq
where $p(x) \geq 0$ is the perturbation.  Note that here we have \emph{not} assumed that the perturbation, $p(x)$, respects qubit permutation symmetry. 

We wish to bound the minimum gap of $\tilde{H}(s)$. Unlike the Hamming weight problem $H(s)$, this problem is no longer non-interacting. Define
\beq
h_k \equiv \max_{\abs{x}=k} p(x); \quad h \equiv \max_k h_k = \max_x p(x).
\eeq

\begin{lemma}[\cite{Reichardt:2004}]
\label{lem:bound}Let $u = \mathcal{O}(l)$ and let $E_0(s)$ and $\tilde{E}_0(s)$ be the ground state energies of $H(s)$ and $\tilde{H}(s)$, respectively. Then $\tilde{E}_0 (s) \leq E_0(s) + \mathcal{O}( h \frac{u-l}{\sqrt{l}})$.
\end{lemma}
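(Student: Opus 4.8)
The plan is to use the variational principle with a cleverly chosen trial state, taking advantage of the fact that the unperturbed ground state $\ket{\psi_{\mathrm{GS}}(s)} = \ket{v_-(s)}^{\otimes n}$ is known explicitly. Since $\tilde{E}_0(s) = \min_{\ket{\psi}} \bra{\psi}\tilde{H}(s)\ket{\psi}$, it suffices to exhibit a normalized state $\ket{\psi}$ with $\bra{\psi}\tilde{H}(s)\ket{\psi} \le E_0(s) + \mathcal{O}(h\frac{u-l}{\sqrt{l}})$. The natural first guess is $\ket{\psi} = \ket{\psi_{\mathrm{GS}}(s)}$ itself, which gives $\bra{\psi_{\mathrm{GS}}}\tilde{H}(s)\ket{\psi_{\mathrm{GS}}} = E_0(s) + s\bra{\psi_{\mathrm{GS}}} \sum_{l<\abs{x}<u} p(x)\ketbra{x}{x}\ket{\psi_{\mathrm{GS}}}$, since $\tilde{H}(s)$ and $H(s)$ differ only by the perturbation term on the final Hamiltonian. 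So the entire task reduces to bounding the quantity $s\sum_{l<k<u}\sum_{\abs{x}=k} p(x) |\braket{x}{\psi_{\mathrm{GS}}(s)}|^2 \le s\sum_{l<k<u} h_k \binom{n}{k} q(s)^k(1-q(s))^{n-k} \le h\sum_{l<k<u}\binom{n}{k}q(s)^k(1-q(s))^{n-k}$, where $q(s)$ is the single-qubit excitation probability from Eq.~\eqref{eq:q(s)}.

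The key step, then, is to show that the binomial tail $\sum_{l<k<u}\binom{n}{k}q^k(1-q)^{n-k}$ is $\mathcal{O}\!\big(\frac{u-l}{\sqrt{l}}\big)$ \emph{uniformly in $s$}. The point is that each individual term $\binom{n}{k}q^k(1-q)^{n-k}$ is the probability mass at $k$ of a Binomial$(n,q)$ distribution, whose maximum value over $k$ is $\mathcal{O}(1/\sqrt{nq(1-q)})$ by the local central limit theorem / Stirling's approximation (the mode of the binomial carries mass $\Theta(1/\sqrt{nq(1-q)})$). There are $u-l-1$ terms in the sum, so the whole sum is $\mathcal{O}\big((u-l)/\sqrt{nq(1-q)}\big)$. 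It remains to argue that $\sqrt{nq(1-q)} = \Omega(\sqrt{l})$ over the relevant range of $s$: when $s$ is such that the mean $nq(s)$ is comparable to $l$ (which is the only regime where these terms are non-negligible — for $nq$ far from the window $[l,u]$ the tail is exponentially small), we have $nq(1-q) \gtrsim \min(nq, n(1-q)) \gtrsim l$ using $u = \mathcal{O}(l)$. For $s$ where $nq(s)$ lies well outside $[l,u]$, a Chernoff/Hoeffding bound makes the sum exponentially small, hence certainly $\mathcal{O}((u-l)/\sqrt{l})$. Combining the two regimes gives the uniform bound.

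I expect the main obstacle to be making the case analysis on $s$ fully rigorous — specifically, pinning down the claim that $nq(1-q) = \Omega(l)$ precisely in the regime where the binomial mass in the window $[l,u]$ is not already exponentially suppressed, and handling the boundary between the "bulk" regime and the "tail" regime cleanly. One has to track how $q(s)$ ranges over $[0, q(0)]$ (note $q(0) = 1/2$ from Eq.~\eqref{eq:q(s)}, $q(1)=0$) as $s$ goes from $0$ to $1$, and identify the sub-interval of $s$ for which $nq(s)\in[\text{const}\cdot l,\ \text{const}\cdot u]$; there, $nq(s)(1-q(s))$ is $\Theta(l)$ since $u=\mathcal{O}(l)$ forces $q(s) = \mathcal{O}(l/n)$ so $1-q(s) = 1-o(1)$. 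Everywhere else, a standard tail bound suffices. A secondary (purely bookkeeping) point is to confirm that $\tilde{H}(s) - H(s) = s\sum_{l<\abs{x}<u} p(x)\ketbra{x}{x} \succeq 0$, which guarantees the expectation value picks up only a non-negative correction that we are bounding from above — this is immediate from $p(x)\ge 0$. With these pieces, the variational estimate closes the lemma.
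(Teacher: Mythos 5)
Your proposal is correct and follows essentially the same route as the paper's proof: a variational bound with the unperturbed product ground state $\ket{v_-(s)}^{\otimes n}$, which reduces the lemma to bounding the partial binomial sum $\sum_{l<k<u}\binom{n}{k}q^k(1-q)^{n-k}$ by its width times a maximal mass of order $1/\sqrt{nq(1-q)}$, together with a case analysis on where the mean $nq(s)$ lies and the observation $l \lesssim nq$ (plus $q\le 1/2$) to convert $\sqrt{nq(1-q)}$ into $\sqrt{l}$. The only difference is technical: where the paper invokes the de Moivre--Laplace Gaussian approximation and bounds the resulting integral by a rectangle of width $(u-l)/\sigma$ and height $1/\sqrt{2\pi}$, you bound each binomial term directly by the maximal pmf value and dispose of the far-from-window regime with a Chernoff bound, which is an equally valid (and if anything more self-contained) way of making the same estimate.
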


\begin{proof}
First note that
\beq
\tilde{H}(s)- H(s) = s \sum_{x: l<\abs{x}<u} p(x) \ket{x}\bra{x}\ .
\eeq

Below, we suppress the $s$ dependence of all the terms for notational simplicity.  We know that $E_0 = \wich{v_{-}^{\otimes n}|H|v_{-}^{\otimes n}}$. Using this,
\bes
\begin{align}
\wich{\tilde{E}_0|\tilde{H}|\tilde{E}_0} &\leq \wich{\psi | \tilde{H} | \psi} \quad \forall \ket{\psi}\in\mathcal{H}. \\
\implies \tilde{E}_0 - E_0  &\leq \wich{v_{-}^{\otimes n}|\tilde{H}|v_{-}^{\otimes n}} - E_0 \\
&\leq \wich{v_{-}^{\otimes n}|\tilde{H}-H|v_{-}^{\otimes n}} \label{eq:perttheory}\\
&=s \sum_{x: l<\abs{x}<u} p(x) \abs{\wich{v_{-}^{\otimes n}|x}}^2 \\
&=s \sum_{x: l<\abs{x}<u} p(x) q^{\abs{x}}(1-q)^{n-\abs{x}} \\
&\leq \sum_{k:l<k<u} h_k {n \choose k}  q^k (1-q)^{n-k}, \label{eq:hk}
\end{align}
\ees
where ${n \choose k}$ is the number of strings with Hamming weight $k$, we used the fact that if we measure in the computational basis, the probability of getting outcome $x$ is $\abs{\wich{v_{-}^{\otimes n}|x}}^2 = q(s)^{\abs{x}} (1-q(s))^{n-\abs{x}}$, and $q(s)$ is given in Eq.~\eqref{eq:q(s)}.

Consider the partial binomial sum (dropping the $h_k$'s),
\beq
\sum_{k:l<k<u} {n \choose k}  q^k (1-q)^{n-k}.
\eeq
Using the fact that the binomial is well-approximated by the Gaussian in the large $n$ limit (note that this approximation requires that $q(s)$ and $1-q(s)$ not be too close to zero), we can write:
\begin{align}
\label{eq:integralbound}
& \sum_{k:l<k<u} {n \choose k}  q^k (1-q)^{n-k}  \approx \int_l^u d\xi \ \frac{1}{\sqrt{2 \pi}\sigma} e^{-\frac{(\xi - \mu)^2}{2 \sigma^2}} \notag \\
&\quad = \frac{1}{\sigma} \int_l^u d\xi\ \phi\left(\frac{\xi-\mu}{\sigma}\right) = \int_{(l-\mu)/\sigma}^{(u-\mu)/\sigma} dt\ \phi(t)\ , 
\end{align}
where $\quad \mu \equiv nq$, $\sigma \equiv \sqrt{nq(1-q)}$ and $\phi(t) \equiv \frac{e^{-t^2/2}}{\sqrt{2\pi}}$. Note that $\sigma$ and $\mu$ depend on $n$, and also on $s$ via $q(s)$. The parameters $l$ and $u$ are specified by the problem Hamiltonian, and are therefore allowed to depend on $n$ as long as $l(n)<u(n)<n$ is satisfied for all $n$.

Let us define:
\beq
\label{eq:intbound}
B(s,n,l(n),u(n)) \equiv \int_{(l(n)-\mu(n,s))/\sigma(n,s)}^{(u(n)-\mu(n,s))/\sigma(n,s)} dt\ \frac{e^{-t^2}/2}{\sqrt{2\pi}}.
\eeq
We seek an upper bound on this function.  We observe that $q(s)$ decreases monotonically from $\frac{1}{2}$ to $0$ as $s$ goes from $0$ to $1$.   Thus, the mean of the Gaussian $\mu(n,s) = n q(s)$ decreases from $\frac{n}{2}$ to $0$. 
Depending on the values of $l(n)$, $u(n)$ and $\mu(n,s)$, we thus have three possibilities: (i) $l(n)<\mu(n,s)<u(n)$, (ii) $\mu(n,s)<l(n)<u(n)$, and (iii) $l(n)<u(n)<\mu(n,s)$. Note that (ii) and (iii) are cases where the integral runs over the tails of the Gaussian and so the integral is exponentially small. We focus on (i), as this induces the maximum values of the integral.
In this case the lower limit of the integral Eq.~\eqref{eq:intbound} is negative, while the upper limit is positive. Thus, the integral runs through the center of the standard Gaussian, and we can upper-bound the value of the integral by the area of the rectangle of width $\frac{u(n)-l(n)}{\sigma(n,s)}$ and height $\frac{1}{\sqrt{2\pi}}$. Hence
\bes
\begin{align}
B(s,n,l(n),u(n)) &\leq \frac{1}{\sqrt{2\pi}} \frac{u(n)-l(n)}{\sigma(n,s)}, \\
&= \frac{1}{\sqrt{2\pi}} \frac{u(n)-l(n)}{\sqrt{\mu(n) (1-q(s))}}, \\
&\leq \frac{1}{\sqrt{2\pi}} \frac{u(n)-l(n)}{\sqrt{l(n) (1-q(s))}},
\end{align}
\ees
where we have used the fact that $l(n) < \mu(n,s) = nq(s)$.

Thus, we obtain the bound:
\beq
\tilde{E}_0 - E_0 \leq \mathcal{O}\left(h \frac{u-l}{\sqrt{l}}\right). \label{eq:bound}
\eeq
\end{proof}

\begin{lemma}[\cite{Reichardt:2004}] 
\label{lem:spec} 
If $\tilde{H} - H$ is non-negative, then the spectrum of $\tilde{H}$ lies above the spectrum of $H$. That is, $\tilde{E}_j \geq E_j$ for all $j$, where $\tilde{E}_j$ and $E_j$ denote the $j$th largest eigenvalue of $\tilde{H}$ and $H$, respectively.
\end{lemma}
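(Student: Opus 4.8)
The statement to prove is Lemma~\ref{lem:spec}: if $\tilde{H} - H$ is non-negative, then $\tilde{E}_j \geq E_j$ for all $j$, where eigenvalues are ordered from largest to smallest. The plan is to invoke the Courant--Fischer--Weyl min-max characterization of eigenvalues of Hermitian operators, which expresses each $E_j$ (and $\tilde E_j$) as an optimization over subspaces of a fixed dimension. The operator inequality $\tilde H \succeq H$, meaning $\wich{\psi|\tilde H|\psi} \geq \wich{\psi|H|\psi}$ for all $\ket{\psi}$, then transfers directly through this variational formula, term by term.

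Concretely, I would first fix conventions: order the eigenvalues of an $N$-dimensional Hermitian operator $A$ as $\lambda_1(A) \geq \lambda_2(A) \geq \dots \geq \lambda_N(A)$, so that in the notation of the lemma $E_j = \lambda_j(H)$ and $\tilde E_j = \lambda_j(\tilde H)$. Then recall the min-max principle in the form
\beq
\lambda_j(A) = \min_{\substack{S \subseteq \mathcal{H} \\ \dim S = N-j+1}} \ \max_{\substack{\ket{\psi} \in S \\ \|\psi\|=1}} \wich{\psi|A|\psi} \ ,
\eeq
valid for any Hermitian $A$ on $\mathcal{H}$ with $\dim\mathcal{H} = N$. (One could equally use the dual max-min form over $j$-dimensional subspaces; either works.) The point is that the inner maximand for $\tilde H$ dominates that for $H$ pointwise: for every unit vector $\ket{\psi}$, $\wich{\psi|\tilde H|\psi} = \wich{\psi|H|\psi} + \wich{\psi|(\tilde H - H)|\psi} \geq \wich{\psi|H|\psi}$ by the non-negativity hypothesis on $\tilde H - H$.

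From there the argument is a two-line monotonicity chain: fixing any subspace $S$ of dimension $N-j+1$, the pointwise domination gives $\max_{\ket{\psi}\in S,\,\|\psi\|=1}\wich{\psi|\tilde H|\psi} \geq \max_{\ket{\psi}\in S,\,\|\psi\|=1}\wich{\psi|H|\psi}$; taking the minimum over all such $S$ on both sides preserves the inequality, yielding $\lambda_j(\tilde H) \geq \lambda_j(H)$, i.e.\ $\tilde E_j \geq E_j$, for every $j$. This is essentially Weyl's monotonicity theorem for eigenvalues of Hermitian operators under a positive-semidefinite perturbation, specialized to the present setting.

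I do not anticipate a serious obstacle here, since this is a classical fact; the only things to be careful about are bookkeeping issues rather than mathematical ones. The mild subtlety is the ordering convention: the lemma indexes eigenvalues from largest to smallest ($\tilde E_j$ is the $j$th \emph{largest}), so I must make sure the min-max formula I quote uses the matching convention (minimize over $(N-j+1)$-dimensional subspaces for the $j$th largest), and not accidentally import the version stated for the $j$th smallest, which would flip the direction. A secondary point worth a sentence is that the excerpt works with the ground-state energy $E_0$ elsewhere (so the ground state is labeled $0$); one should either reconcile that $0$-indexing with the $1$-indexing of min-max or simply note that the relabeling is immaterial. Everything else is immediate once the variational principle is on the table.
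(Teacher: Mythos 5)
Your proposal is correct and takes essentially the same route as the paper, which simply notes that the lemma follows from a straightforward application of the Courant--Fischer min-max theorem (citing Horn and Johnson); your write-up just spells out that standard Weyl-monotonicity argument in detail. The ordering-convention caveat you raise is harmless, since the same monotonicity chain works whichever way the eigenvalues are indexed.
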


This can be proved by a straightforward application of the Courant-Fischer min-max theorem (see, for example, Ref.~\cite{horn2012matrix}).

Combining these lemmas results in the desired bound on the gap:
\bes
\begin{align}
\text{Gap}[\tilde{H}(s)] &= \tilde{E}_1 - \tilde{E}_0, \\
&\geq E_1 - \tilde{E}_0,  \label{eq:lem1}\\
&= E_1 - E_0 - (\tilde{E}_0 - E_0), \\
&\geq \Delta - \mathcal{O}\left(h \frac{u-l}{\sqrt{l}}\right) \label{eq:lem2},
\end{align}
\ees
where in Eq.~\eqref{eq:lem1} we used Lemma \ref{lem:spec} and in Eq.~\eqref{eq:lem2}, we used Lemma \ref{lem:bound}.

Now, if we choose a parameter regime for the perturbation such that $h \frac{u-l}{\sqrt{l}} = o(1)$, then the perturbed problem maintains a constant gap. For example, if $l = \Theta(n)$ and $h(u-l) = \mathcal{O}(n^{1/2-\epsilon})$, for any $\epsilon > 0$, then the gap is constant as $n \to \infty$.

\section{(Non-)Locality of PHWO problems} \label{app:locality}
%

Since the PHWO problems, including the plateau, are quantum oracle problems, they cannot generically be represented by a local Hamiltonian. For completeness we prove this claim here and also show why the (plain) Hamming weight problem is $1$-local.

Let $r$ be a bit string of length $n$, i.e., $r \in \{0,1\}^n$ and let 
\beq
\sigma^r \equiv \sigma_1^{r_1} \otimes \sigma_2^{r_2} \otimes \dots \otimes \sigma_n^{r_n},
\eeq
with $ \sigma_i^0 \equiv I_i$ and $\sigma_i^1 \equiv \sigma_i^z$.  This forms an orthonormal basis for the vector space of diagonal Hamiltonians. Thus:
\beq
H_P = \sum_{r \in \{0,1\}^n} J_r \sigma^r,
\eeq
with
\bes
\begin{align}
J_r &= \frac{1}{2^n}\Tr(\sigma^r H_P) \\
& = \frac{1}{2^n}\sum_{x \in \{0,1\}^n} f(x) \bra{x}\sigma^r \ket{x} \\
& = \frac{1}{2^n}\sum_{x \in \{0,1\}^n} f(x) (-1)^{x \cdot r}.
\end{align}
\ees
Note that generically $J_r$ will be be non-zero for arbitrary-weight strings $r$, leading to $|r|$-local terms in $H_P$, even as high as $n$-local.

E.g., substituting the plateau Hamiltonian [Eq.~\eqref{eqt:plateau}] into this we obtain:
\begin{eqnarray}
J_r &=& \frac{1}{2^n} \left[ \sum_{\abs{x} \leq l\, \&\, \abs{x} \geq u} \abs{x} (-1)^{x \cdot r} \right. \nonumber \\
&& \left. + (u-1)\sum_{l<\abs{x}<u} (-1)^{x \cdot r} \right].
\end{eqnarray}

On the other hand, if $f(x)=|x|$ (i.e., in the absence of a perturbation), the Hamiltonian is only $1$-local:
\bes
\begin{align}
H_P &= \sum_{x \in \{0,1\}^n} |x| \ket{x}\bra{x} \\
&= \sum_{x_1=0}^1 \dots \sum_{x_n =0}^1 (x_1 + x_2 + \dots + x_n) \ket{x_1}\bra{x_1} \nonumber \\
& \hspace{2cm} \otimes\ket{x_2}\bra{x_2} \otimes \dots \otimes \ket{x_n}\bra{x_n} \\
&= \sum_{k=1}^n \left(x_k \ket{x_k}\bra{x_k} \right) \bigotimes_{j\neq k} \left(\sum_{x_j=0}^1 \ket{x_j}\bra{x_j} \right) \\
&= \sum_{k=1}^n \ket{1}_k \bra{1} \bigotimes_{j\neq k} I_j = \sum_{k=1}^n \ket{1}_k \bra{1}.
\end{align}
\ees
%

\section{Derivation of E\lowercase{q}.~(1)} \label{app:TTS}
Equation~\eqref{eqt:TTSopt} is easily derived as follows: the probability of successively failing $k$ times is $\left[1-p_{\mathrm{GS}}(t_f) \right]^k$, so the probability of succeeding at least once after $k$ runs is $1-\left[1-p_{\mathrm{GS}}(t_f) \right]^k$, which we set equal to the desired success probability $p_d$; from here one extracts the number of runs $k$ and multiplies by $t_f$ to get the time-to-solution TTS. Optimizing over $t_f$ yields TTS$_\mathrm{opt}$, which is natural for benchmarking purposes in the sense that it captures the trade-off between repeating the algorithm many times \textit{vs} optimizing the probability of success in a single run. The adiabatic regime might be more attractive if one seeks a theoretical guarantee to have a certain probability of success if the evolution is sufficiently slow.

\section{Methods} \label{app:Methods}
%

\subsection{Simulated Annealing}
\label{app:SA}
%
SA is a general heuristic solver \cite{kirkpatrick_optimization_1983}, whereby the system is initialized in a high temperature state, i.e., in a random state, and the temperature is slowly lowered while undergoing Monte Carlo dynamics.  Local updates are performed according to the Metropolis rule \cite{1953JChPh..21.1087M,HASTINGS01041970}: a spin is flipped and the change in energy $\Delta E$ associated with the spin flip is calculated.  The flip is accepted with probability $P_{\mathrm{Met}}$:
\beq
P_{\mathrm{Met}} = \min \{ 1 , \exp(-\beta \Delta E) \} \ ,
\eeq
where $\beta$ is the current inverse temperature along the anneal. Note that there could be different schemes governing which spin is to be selected for the update. We consider two such schemes: random spin-selection -- where the next spin to be updated is selected at random; and sequential spin-selection -- where one runs through all of the $n$ spins in a sequence. Random spin-selection (including just updating nearest neighbors) satisfies detailed-balance and thus is guaranteed to converge to the Boltzmann distribution.  Sequential spin-selection does not satisfy strict detailed balance (since the reverse move of sequentially updating in the reverse order never occurs), but it too converges to the Boltzmann distribution \cite{Manousiouthakis:1999}.  In sequential updating, a ``sweep" refers to all the spins having been updated once. In random spin-selection, we define a sweep as the total number of spin updates divided by the total number of spins. When it is possible to parallelize the spin updates, the appropriate metric of time-complexity is the number of sweeps $N_{\mathrm{SW}}$, not the number of spin updates (they differ by a factor of $n$) \cite{speedup}. However, in our problem this parallelization is not possible and hence the appropriate metric is the number of spin updates, and this is what is plotted in Fig.~\ref{fig:TTSScaling}.  After each sweep, the inverse temperature is incremented by an amount $\Delta \beta$ according to an annealing schedule, which we take to be linear, i.e., $\Delta \beta = ( \beta_f - \beta_i ) / (N_{\mathrm{SW}}-1)$.

We can use SA both as an \emph{annealer} and as a \emph{solver} \cite{Hen:2015rt}. In the former, the state at the end of the evolution is the output of the algorithm, and can be thought of as a method to sample from the Boltzmann distribution at a specified temperature. For the latter, we select the state with the lowest energy found along the entire anneal as the output of the algorithm, the better technique if one is only interested in finding the global minimum. We use the latter to maximize the performance of the algorithm.

%
\subsection{Quantum Annealing}\label{app:methods_QD}

Here we consider the most common version of  quantum annealing:
\beq
H(s) = (1-s) \sum_{i=1}^n \frac{1}{2}(\ident_i - \sigma_i^x) + s\sum_{x \in \{0,1\}^n} f(x) \ket{x}\bra{x} \ ,
\eeq
where $s\equiv {t}/{t_f}$ is the dimensionless time parameter and $t_f$ is the total anneal time. The initial state is taken to be $\ket{+}^{\otimes n}$, which is the ground state of $H(0)$.

The initial ground state and the total Hamiltonian are symmetric under qubit permutations (recall that $f(x) = f(|x|)$ for our class of problems).  It then follows that the time-evolved state, at any point in time, will also obey the same symmetry. Therefore the evolution is restricted to the $(n+1)$-dimensional symmetric subspace, a fact that we can take advantage of in our numerical simulations.  This symmetric subspace is spanned by the Dicke states $\ket{S, M}$ with $S = n/2, M = -S, -S+1, \dots ,S$, which satisfy:
\bes
\begin{align}
S^2  \ket{S, M} &= S \left( S+1 \right) \ket{S, M}\\
S^z \ket{S, M} &= M \ket{S, M} \ ,
\end{align}
\ees
where $S^{x,y,z} \equiv \frac{1}{2} \sum_{i=1}^n \sigma_i^{x,y,z}$, $S^2 = \left(S^x \right)^2 + \left(S^y \right)^2 +\left(S^z \right)^2$.
We can denote these states by:
\beq \label{eqt:Dicke}
\ket{w} \equiv \Ket{\frac{n}{2} ,M = \frac{n}{2} - w} = {n \choose w}^{-1/2} \sum_{x: |x| = w} \ket{x}, 
\eeq
where, $w\in\{0,\dots,n\}$.

In this basis the Hamiltonian is tridiagonal, with the following matrix elements:
\bes \begin{align}
\left[H(s)\right]_{w,w+1} = & \left[H(s)\right]_{w+1,w} =\nonumber \\  &-\frac{1}{2} (1-s)  \sqrt{(n-w)(w+1)}, \\
\left[H(s)\right]_{w,w} = &(1-s) \frac{n}{2} + s  f(w).
\end{align} 
\ees
The Schr\"odinger equation with this Hamiltonian can be solved reliably using an adaptive Runge-Kutte Cash-Karp method \cite{Cash:1990:VOR:79505.79507} and the Dormand-Prince method \cite{Dormand198019} (both with orders $4$ and $5$).

If the quantum dynamics is run adiabatically the system remains close to the ground state during the evolution, and an appropriate version of the adiabatic theorem is satisfied. For evolutions with a non-zero spectral gap for all $s \in [0,1]$, an adiabatic condition of the form
\beq 
t_f \geq \text{const} \sup_{s \in [0,1]} \frac{ \|\partial_s H(s) \|}{\text{Gap}(s)^2}
\label{eq:trad-crit}
\eeq
is often claimed to be sufficient \cite{Messiah:vol2} [however, see the discussion after Eq.~(21) in Ref.~\cite{Jansen:07}]. In our case $\|\partial_s H(s) \| = \| H(1) - H(0) \|$ is upper-bounded by $n$;  since we are considering a constant gap, the adiabatic algorithm can scale at most linearly by condition \eqref{eq:trad-crit}. This is true for the plateau problems.

We showed in the main text that the following version of the adiabatic condition, known to hold in the absence of resonant transitions between energy levels \cite{Amin:09}, estimates the scaling we observe very well:
\beq
\max_{s \in [0,1]}  \frac{ |\bra{\eps_0(s) } \partial_s H(s) \ket{\eps_1(s)}|}{\text{Gap}(s)^2} \ll t_f,
\eeq
where $\eps_0(s)$ and $\eps_1(s)$ are the instantaneous ground and excited states in the symmetric subspace respectively. 

The permutation symmetry is explicitly enforced only in our 
numerical simulations of the quantum evolution. Since, of course, we do not have quantum hardware that can implement the problems under consideration, we must explicitly enforce this symmetry in order to be able to perform numerical simulations at large problem sizes. Note that even if we were to simulate the quantum system without explicitly imposing this symmetry, the symmetry would be automatically preserved in the dynamics, and we would draw the same lessons about the performance of the quantum algorithm (but our classical simulations would quickly become intractable). 

\subsection{Spin-Vector Dynamics }\label{app:methods_SVD}
%
Starting with the spin-coherent path integral formulation of the quantum dynamics, we can obtain Spin Vector Dynamics (SVD) as the saddle-point approximation (see, for example, Ref.~\cite[p.10]{owerre2015macroscopic} or Refs.~\cite{Smolin,Albash:2014if}). It can be interpreted as a semi-classical limit describing coherent single qubits interacting incoherently.  In this sense, SVD is a well motivated classical limit of the quantum evolution of QA.  SVD describes the evolution of $n$ unit-norm classical vectors under the Lagrangian (in units of $\hbar = 1$):
\beq
\mathcal{L} = i \wich{\Omega(s) | \frac{d}{ds} | \Omega(s)} - t_f \wich{\Omega(s) | H(s) | \Omega(s)},
\eeq
where $\ket{\Omega(s)}$ is a tensor product of $n$ independent spin-coherent states~\cite{arecchi1972atomic}:
\beq
\ket{\Omega(s)} = \bigotimes_{i=1}^n \left[ \cos\left( \frac{\theta_i(s)}{2} \right) \ket{0}_i + \sin \left( \frac{\theta_i(s)}{2} \right) e^{i \varphi_i(s)} \ket{1}_i \right].
\eeq
We can define an \emph{effective semi-classical potential} associated with this Lagrangian:
\begin{align}
\label{eqt:bigvsc}
&V_{\mathrm{SC}}(\{\theta_i\},\{\varphi_i\},s) \equiv \wich{\Omega(s) | H(s) | \Omega(s)} \nonumber \\
&=(1-s) \sum_{i=1}^n \frac{1}{2} \left(1- \cos\varphi_i(s) \sin\theta_i(s) \right) \\
&+ s \sum_{x \in \{0,1\}^n} f(x) \prod_{j:x_j=0} \cos^2\left(\frac{\theta_j(s)}{2}\right)\prod_{j:x_j=1} \sin^2\left(\frac{\theta_j(s)}{2}\right),\nonumber
\end{align}
with the probability of finding the all-zero state at the end of the evolution (which is the ground state in our case), as $\prod_{i=1}^n \cos^2\left(\frac{\theta_i(1)}{2}\right)$.
The quantum Hamiltonian obeys qubit permutation symmetry: $P H P = H$ where $P$ is a unitary operator that performs an arbitrary permutation of the qubits. This implies that our classical Lagrangian obeys the same symmetry:
\begin{eqnarray}
\mathcal{L}' &\equiv&  i \bra{\Omega(s)} P \frac{d}{ds} P \ket{\Omega(s)} - t_f \bra{\Omega(s)} P H(s) P \ket{\Omega(s)} \nonumber \\
&=&   i \bra{\Omega(s)}  \frac{d}{ds} \ket{\Omega(s)} - t_f \bra{\Omega(s)}  H(s) \ket{\Omega(s)} = \mathcal{L},  \nonumber\\
\end{eqnarray}
where the derivative operator is trivially permutation-symmetric. Therefore, the Euler-Lagrange equations of motion derived from this action will be identical for all spins. Thus, if we have symmetric initial conditions, i.e., $(\theta_i(0),\varphi_i(0)) = (\theta_j(0),\varphi_j(0)) \ \forall i,j$, then the time evolved state will also be symmetric:
\beq
(\theta_i(s),\varphi_i(s)) = (\theta_j(s),\varphi_j(s)) \ \forall i,j \ \forall s\in[0,1] \ .
\eeq 
As we show below, under the assumption of a permutation-symmetric initial condition we only need to solve two (instead of $2n$) semi-classical equations of motion:
\bes 
\label{eqt:symsvdeoms} 
\begin{align}
\frac{n}{2} \sin\theta(s) \theta'(s) - t_f  \partial_{\varphi(s)} V_{\mathrm{SC}}^\mathrm{sym} (\theta(s),\varphi(s),s)  = 0  \ , \\
- \frac{n}{2} \sin \theta(s) \varphi'(s) - t_f  \partial_{\theta(s)} V_{\mathrm{SC}}^\mathrm{sym} (\theta(s),\varphi(s),s) = 0 \ ,
\end{align} \ees
where we have defined the symmetric effective potential $V_{\mathrm{SC}}^\mathrm{sym}$ as:
\begin{align}
\label{eqt:symvsc}
&V_{\mathrm{SC}}^\mathrm{sym} (\theta(s),\varphi(s),s) \equiv \wich{\Omega^\mathrm{sym}(s) | H(s) | \Omega^\mathrm{sym}(s)} \nonumber \\
&=(1-s) \frac{n}{2} \left(1- \cos\varphi(s) \sin\theta(s) \right) \nonumber \\
 &+ s \sum_{w=0}^n f(w) \binom{n}{w} \sin^{2w}\left(\frac{\theta(s)}{2}\right) \cos^{2(n-w)}\left(\frac{\theta(s)}{2}\right),
\end{align}
and $\ket{\Omega^\mathrm{sym}(s)}$ is simply $\ket{\Omega(s)}$ with all the $\theta$'s and $\varphi$'s set equal. Note that in the main text [see Eq.~\eqref{eq:vsc}], we slightly abuse notation for simplicity, and use $V_{\mathrm{SC}}$ instead of $V_{\mathrm{SC}}^\mathrm{sym}$.
The probability of finding the all-zero bit string at the end of the evolution is accordingly given by $\cos^{2n}(\theta(1)/2)$. We would have arrived at the same equations of motion had we used the symmetric spin coherent state in our path integral derivation, but that would have been an artificial restriction. In our present derivation the symmetry of the dynamics naturally imposes this restriction.

Note that the object in Eq.~\eqref{eqt:bigvsc} involves a sum over all $2^n$ bit-strings and is thus exponentially hard to compute; on the other hand, the object in Eq.~\eqref{eqt:symvsc} only involves a sum over $n$ terms and is thus easy to compute. Therefore, just as in the quantum case---where due to permutation symmetry the quantum evolution is restricted to the $n+1$ dimensional subspace of symmetric states instead of the full $2^n$-dimensional Hilbert space---given knowledge of the symmetry of the problem we can efficiently compute the SVD potential and efficiently solve the SVD equations of motion. 

We also remark that the computation of the potential in Eq.~\eqref{eqt:bigvsc} is significantly simplified if our cost function, $f(x)$, is given in terms of a local Hamiltonian. For example, if $H(1) = \sum_{i,j} J_{ij} \sigma_i^z \sigma_j^z$, then:
\beq
V_{\mathrm{SC}}(\{\theta_i\},\{\varphi_i\},1) = \sum_{i,j} J_{ij} \cos\theta_i \cos\theta_j \ ,
\eeq
which is easy to compute as it is a sum over $\mathrm{poly}(n)$ number of terms.

Let us now derive the symmetric SVD equations of motion~\eqref{eqt:symsvdeoms}. Without any restriction to symmetric spin-coherent states, the SVD equations of motion, for the pair $\theta_i,\varphi_i$, read:
\bes 
\label{eqt:symsvdeoms2} 
\begin{align}
\frac{1}{2} \sin\theta_i(s) \theta_i'(s) - t_f  \partial_{\varphi_i(s)} V_{\mathrm{SC}} (\{\theta_i\},\{\varphi_i\},s)  = 0 \ , \\
- \frac{1}{2} \sin \theta_i(s) \varphi_i'(s) - t_f  \partial_{\theta_i(s)} V_{\mathrm{SC}}(\{\theta_i\},\{\varphi_i\},s) = 0 \ .
\end{align} 
\ees
As can be seen by comparing Eqs.~\eqref{eqt:symsvdeoms} and \eqref{eqt:symsvdeoms2}, it is sufficient to show that: 
\beq
\frac{\partial}{\partial \theta_i} V_{\mathrm{SC}} \bigg|_{\theta_j = \theta, \varphi_j = \varphi \ \forall j} = \frac{1}{n} \frac{\partial}{\partial \theta} V_{\mathrm{SC}}^\mathrm{sym},
\eeq
and an analogous statement holding for derivatives with respect to $\varphi$. This claim is easily seen to hold true for the term multiplying $(1-s)$ in Eq.~\eqref{eqt:bigvsc}:
\begin{align}
& \frac{\partial}{\partial \theta_i} \sum_{i=1}^n \frac{1}{2} \left(1- \cos\varphi_i(s) \sin\theta_i(s) \right)\bigg|_{\theta_j = \theta, \varphi_j = \varphi \ \forall j} \notag \\
& = \frac{\partial}{\partial \theta} \frac{1}{2} \left(1- \cos\varphi(s) \sin\theta(s) \right) \notag \\
& = \frac{1}{n} \frac{\partial}{\partial \theta}  V_{\mathrm{SC}}^\mathrm{sym} (\theta,\phi,s=0)\ ,
\end{align}
where in the last line we used Eq.~\eqref{eqt:symvsc}. Next we focus on the term multiplying $s$ in Eq.~\eqref{eqt:bigvsc}. This term has no $\varphi$ dependence and thus we only consider the $\theta$ derivatives. First note that
\begin{align}
&\frac{\partial}{\partial \theta_i} V_{\mathrm{SC}}(\{\theta_i\}, \{\varphi_i\},s=1) = \nonumber \\
&\sum_{x\in \{0,1\}^n } f(x) \prod_{j:x_j=0} \cos^2\left(\frac{\theta_j}{2}\right)\prod_{j:x_j=1} \sin^2\left(\frac{\theta_j}{2}\right) \nonumber \\
&\times \left[ -\delta_{x_i,0} \sec^2\left(\frac{\theta_i}{2}\right) + \delta_{x_i,1}\csc^2\left(\frac{\theta_i}{2}\right) \right] \frac{\sin\theta_i}{2}.
\end{align}
Now, we set all the $\theta_i$'s equal. Let us define $p(\theta) \equiv \sin^2\left(\frac{\theta}{2}\right)$. Using this and the fact that $f$ is only a function of the Hamming weight (which is equivalent to the qubit permutation symmetry), we can rewrite the last expression, after a few steps of algebra, as:
\begin{align}
\label{eq:32}
&\sum_{w=0}^n f(w) p^{w-1} (1-p)^{n-w-1} \partial_\theta p \nonumber \\
& \hspace{3cm} \times \left[ (1-p)\binom{n-1}{w-1} - p \binom{n-1}{w} \right] \nonumber \\
&= \sum_{w=0}^n f(w) p^{w-1} (1-p)^{n-w-1} \partial_\theta p  \left[ \frac{1}{n} \binom{n}{w} (w - np) \right] \notag \\
& = \frac{1}{n} \frac{\partial}{\partial \theta} V_{\mathrm{SC}}^\mathrm{sym}(\theta, \varphi, s=1)\ .
\end{align}

Similar to the quantum case, we can perform SVD without explicitly imposing the permutation symmetry, and obtain the same results. Here too, we are forced to explicitly exploit the symmetry due to the non-local nature of the problem under consideration, which makes directly implementing the SVD oracle (without the symmetry) exponentially hard. For local problems we can efficiently implement the SVD oracle.

In the results presented in the main text, it is the implementation of SA that does not share this symmetry.  However, while the quantum algorithms and SVD can be implemented without knowledge of the symmetry and still retain their advantage, an implementation of SA that uses the symmetry would require intimate knowledge of the problem. This would be an unfair advantage for SA, not for the quantum evolution.

\subsection{Simulated Quantum Annealing}\label{app:methods_SQA}
An alternative method to simulated annealing, simulated quantum annealing (SQA, or Path Integral Monte Carlo along the Quantum Annealing schedule) \cite{sqa1,Santoro} is an annealing algorithm based on discrete-time path-integral quantum Monte Carlo simulations of the transverse field Ising model using Monte Carlo dynamics.  At a given time $t$ along the anneal, the Monte Carlo dynamics samples from the Gibbs distribution defined by the action:
\beq
S[\mu] =  \Delta(t) \sum_\tau H_{\mathrm{P}} (\mu_{:,\tau}) -J_{\perp}(t)  \sum_{i,\tau} \mu_{i,\tau} \mu_{i,\tau+1} 
\eeq
where $\Delta(t) = \beta B(t)/ N_{\tau}$ is the spacing along the time-like direction, $J_{\perp} = - \ln [\tanh(A(t)/2)]/2$ is the ferromagnetic spin-spin coupling along the time-like direction, and $\mu$ denotes a spin configuration with a space-like direction (the original problem direction, indexed by $i$) and a time-like direction (indexed by $\tau$).
For our spin updates, we perform Wolff cluster updates \cite{PhysRevLett.62.361} along the imaginary-time direction only.  For each space-like slice, a random spin along the time-like direction is picked. The neighbors of this spin are added to the cluster (assuming they are
parallel) with probability 
\beq
P = 1 - \exp(-2 J_{\perp})
\eeq
When all neighbors of the spin have been checked, the newly added spins are checked.  When all spins in the cluster have had their neighbors along the time-like direction tested, the cluster is flipped according to the Metropolis probability using the space-like change in energy associated with flipping the cluster.  A single sweep involves attempting to update a single cluster on each space-like slice.

As in SA, we can use SQA both as an \emph{annealer} and as a \emph{solver} \cite{Hen:2015rt}.  In the former, we randomly pick one of the states on the Trotter slices at the end of the evolution as the output of the algorithm, while for the latter, we pick the state with the lowest energy found along the entire anneal as the output of the algorithm.  We use the latter to maximize the performance of the algorithm.

\section{Behavior of $p_\mathrm{GS}$ \emph{vs.} $t_f$ curves} \label{app:tfprecision}

\begin{figure*}[t] 
   \subfigure[]{\includegraphics[width=0.4\textwidth]{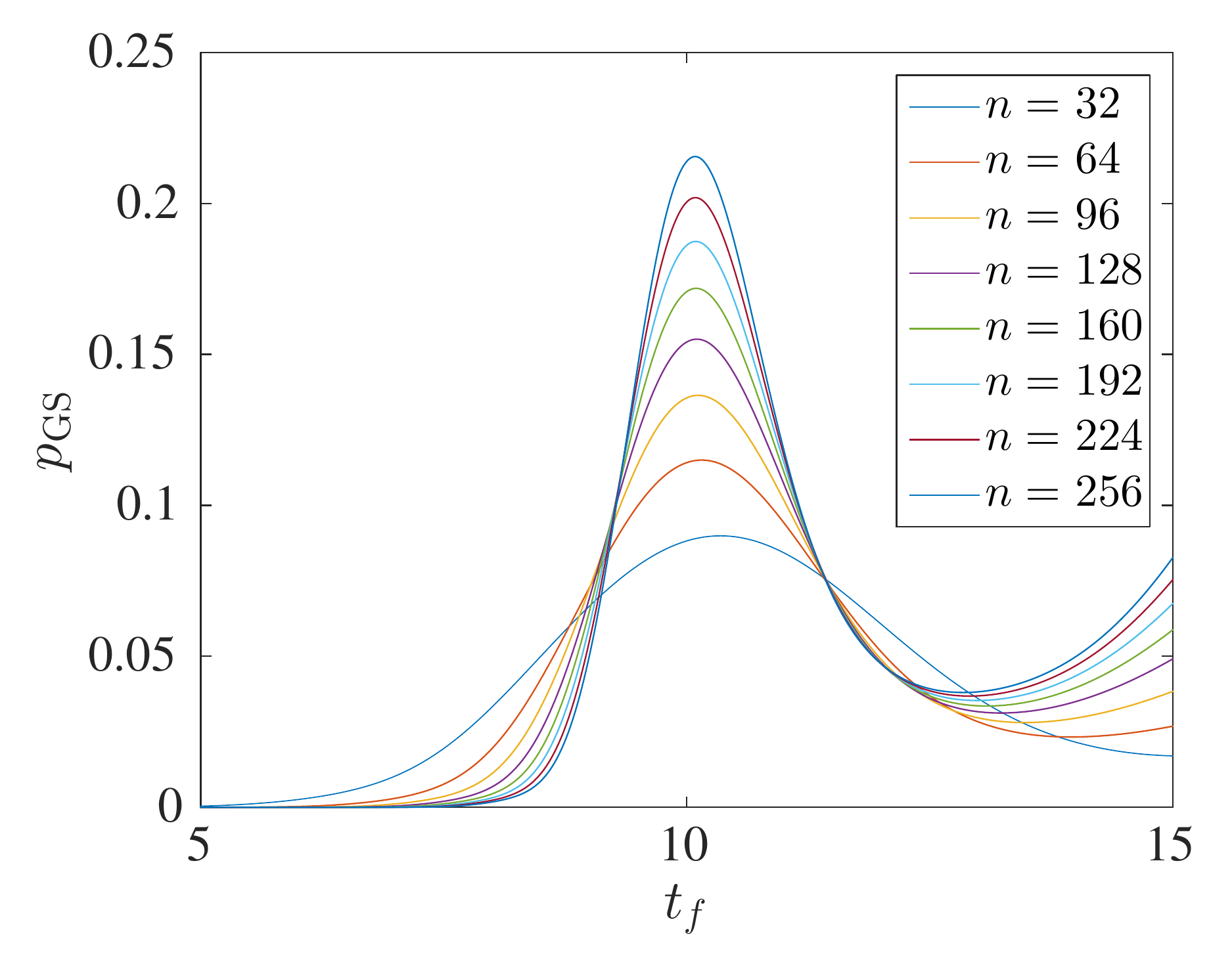} \label{fig:pgsvstfcurves}}
   \subfigure[]{\includegraphics[width=0.4\textwidth]{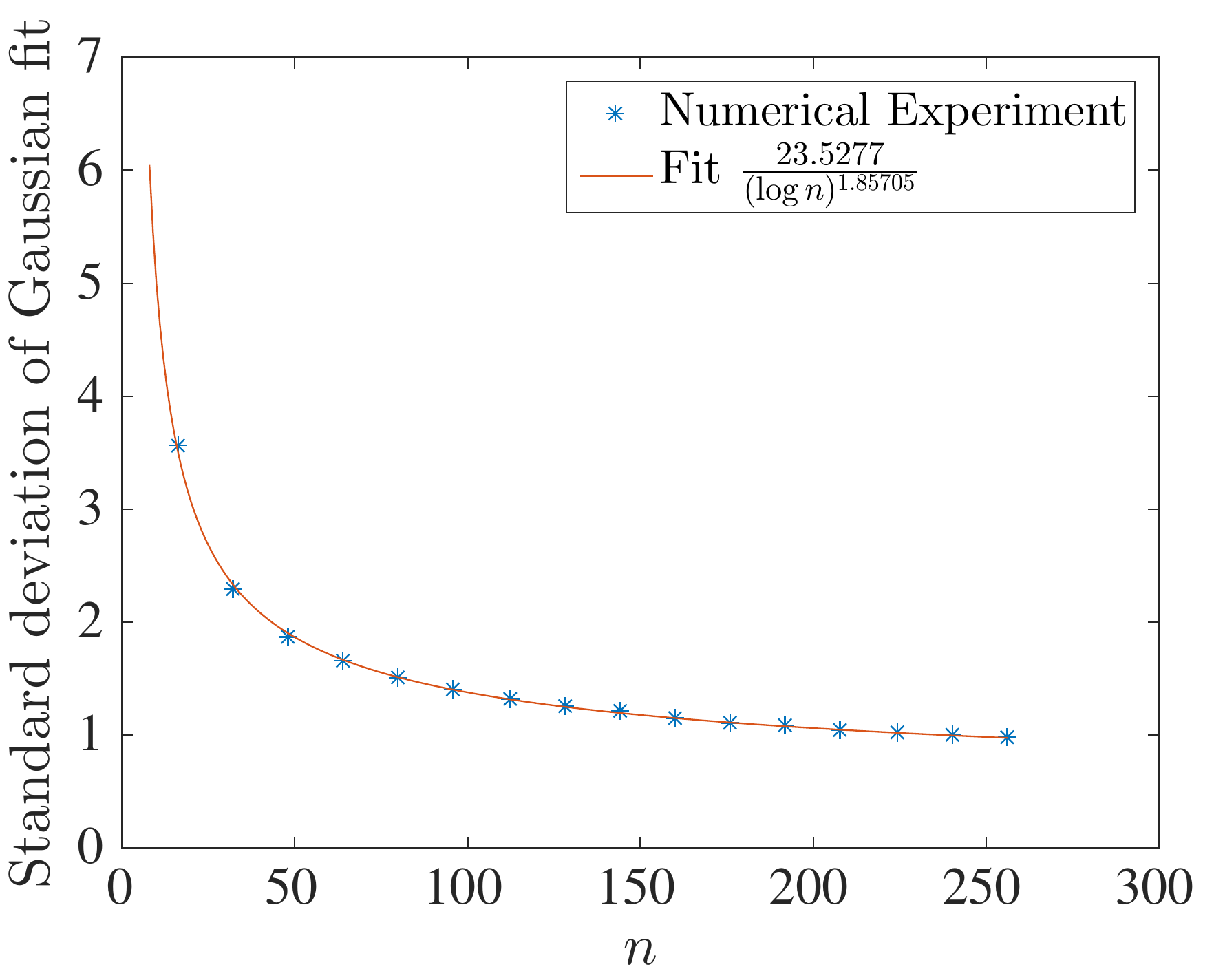}\label{fig:gaussianfitscaling}}
   \caption{Scaling of the $p_\mathrm{GS}$ peak of QA for the Fixed Plateau with $l=0$ and $u=6$. (a) $p_\mathrm{GS}$ \emph{vs.} $t_f$ curves for several problem sizes $n$. The peak at $t_f \approx 10$ is the cause of the optimal annealing time. (b) For each $p_\mathrm{GS}$ \emph{vs.} $t_f$ curve, we fit a Gaussian to the peak. Plotted is the standard deviation of the fitted Gaussians as a function of $n$. An inverse polylogarithmic function provides an excellent fit to this data.}
   \label{fig:widthscaling}
\end{figure*}   

We found that for many of the PHWO problems studied, the optimal $t_f$ lies around $t_f = 10$. This is because there is a peak in the probability of finding the ground state, $p_\mathrm{GS}$ at this $t_f$. Moreover, we found that this peak becomes increasingly higher as the problem size, $n$, grows. This is what allows the problem to have an $\mathcal{O}(1)$ scaling. Since this peak becomes increasingly sharper with growing $n$, there may be the worry that one might need an arbitrarily high precision in setting $t_f \approx t_f^\mathrm{opt}$. We address this concern by showing that in fact the width of the $p_\mathrm{GS}$ \emph{vs.} $t_f$ curve decreases as $\mathcal{O}[1/ \mathrm{polylog}(n)]$ for the Fixed Plateau. This shows that we only require a polylogarithmically increasing precision in our ability to set $t_f$ at the optimal value in order to obtain the speedup. 

The evidence is summarized in Fig.~\ref{fig:widthscaling}. The first plot, \ref{fig:pgsvstfcurves}, shows $p_\mathrm{GS}$ \emph{vs.} $t_f$ curves for several values of $n$. The second plot, \ref{fig:gaussianfitscaling}, shows the scaling of the standard deviation of Gaussian fit to the peak at $t_f^\mathrm{opt}$. This scaling is well matched by polylogarithmic fit.

%


\end{document}